\documentclass[runningheads]{llncs}

\newif\ifarxivstyle
\arxivstyletrue

\spnewtheorem*{theorem*}{Theorem}{\normalfont\bfseries}{\itshape}
\spnewtheorem*{corollary*}{Corollary}{\normalfont\bfseries}{\itshape}

\usepackage{thmtools}
\usepackage{thm-restate}
\usepackage{etoolbox}
\usepackage{amsmath,amsfonts,amssymb,amscd}
\usepackage{algorithm2e,algorithmic}
\usepackage{lastpage}
\usepackage{enumerate}
\usepackage{fancyhdr}
\usepackage{xcolor}
\usepackage{tcolorbox}
\usepackage{graphicx}
\usepackage{listings}
\usepackage[
    bookmarks=false,
    colorlinks=true,
    citecolor=orange,
    linkcolor=blue!55!black,
    urlcolor=blue!55!black,
    backref=page
]{hyperref}
\usepackage{dsfont}
\usepackage{tikz}
\usetikzlibrary{arrows.meta}
\usepackage{circuitikz}
\usepackage[braket, qm]{qcircuit}
\usepackage{mathtools}
\usepackage{stmaryrd}
\usepackage{cite}
\usepackage{array}
\usepackage{float}
\usepackage{multirow}
\usepackage{float}
\usepackage{enumitem}
\usepackage[normalem]{ulem}
\usepackage[
    operators,
    adversary,
    complexity,
    advantage,
    probability,
    lambda,
    mm,
    sets,
    asymptotics
]{cryptocode}
\usepackage[capitalize]{cleveref}

\usepackage{fontawesome5}
\definecolor{orcidlogocol}{rgb}{0.65, 0.807, 0.223}
\newcommand{\orcid}[1]{$\,$\href{https://orcid.org/#1}{\textcolor{orcidlogocol}{\faOrcid}}}

\def\submission{0}
\ifnum\submission=0
	\author{
        Ritam Bhaumik\inst{1}\orcid{0000-0002-2883-4870}\and
        Yu-Hsuan Huang\inst{2}\orcid{0009-0002-6090-6712}
    }\institute{
        Technology Innovation Institute (TII), United Arab Emirates
        \and
        Max Planck Institute for Security and Privacy (MPI-SP), Germany\\
        \email{ritam.bhaumik@tii.ae, yu-hsuan.huang@mpi-sp.org}
    }
\else
	\author{}
	\institute{}
\fi

\ifnum\submission=0
	\newcommand{\yuhsuan}[1]{\textcolor{cyan}{$\langle \langle$ YH: #1 $\rangle \rangle$}}
    \newcommand{\ritam}[1]{\textcolor{purple}{$\langle \langle$ RB: #1 $\rangle \rangle$}}
\else
	\newcommand{\yuhsuan}[1]{}
    \newcommand{\ritam}[1]{}
\fi

\spnewtheorem{claim}{Claim}{\itshape}{\rmfamily}
\crefname{claim}{Claim}{Claims}
\Crefname{claim}{Claim}{Claims}

\makeatletter
\renewenvironment{proof}[1][\proofname]{%
    \topsep 7\p@ \@plus2\p@ \@minus4\p@
    \@Begintheorem{#1}{\itshape}{\rmfamily}\ignorespaces
}{\@endtheorem}
\makeatother

\newcommand{\Ber}{\mathsf{Ber}}
\newcommand{\sgn}{{\mathsf{sgn}}}

\newcommand{\TWK}{{\sf TWK}}
\newcommand{\DM}{{\sf DM}}
\newcommand{\PRF}{{\sf PRF}}
\newcommand{\PRP}{{\sf PRP}}
\newcommand{\IC}{{\sf IC}}
\newcommand{\RP}{{\sf RP}}
\newcommand{\RO}{{\sf RO}}

\newcommand{\Sym}{{\sf Sym}}
\newcommand{\cpO}{{\sf cpO}}
\newcommand{\cIC}{{\sf cIC}}
\newcommand{\Dom}{{\sf Dom}}
\renewcommand{\Im}{{\sf Im}}
\newcommand{\lrp}[1]{\left({#1}\right)}

\newcommand{\Adv}{{\mathbf{Adv}}}

\newcommand{\idx}{\mathds{1}}

\usepackage{cleveref}

\definecolor{blizzardblue}{rgb}{0.67, 0.9, 0.93}
\definecolor{lavender}{rgb}{0.96, 0.73, 1.0}

\newcommand{\Hyb}{\mathsf{Hyb}}

\newcommand{\sqbkt}[1]{\left[{#1}\right]}

\newcommand{\SD}{{S\hspace{-0.3ex}D}}

\newcommand{\bool}{\ensuremath{{\{0,1\}}}}

\usepackage{cryptocode}

\makeatletter
\renewcommand{\gamechange}[2][gamechangecolor]{%
  \begingroup
  \setlength{\fboxsep}{0pt}%
  \ifmmode
    \mathpalette{\gamechange@math{#1}}{#2}%
  \else
    \colorbox{#1}{#2}%
  \fi
  \endgroup
}

\newcommand{\gamechange@math}[3]{%
  \colorbox{#1}{$\m@th#2#3$}%
}
\definecolor{gamechangecolor}{RGB}{255, 255, 0}
\makeatother

\providecommand{\Acal}{ {\cal A} }

\providecommand{\Bcal}{ {\cal B} }

\providecommand{\Ccal}{ {\cal C} }

\providecommand{\Dcal}{ {\cal D} }

\providecommand{\Ecal}{ {\cal E} }

\providecommand{\Jcal}{ {\cal J} }

\providecommand{\Kcal}{ {\cal K} }

\providecommand{\Ocal}{\ifarxivstyle%
\mathscr{O}%
\else%
{\cal O}%
\fi}

\providecommand{\Ucal}{ {\cal U} }

\providecommand{\Xcal}{ {\cal X} }

\providecommand{\Ycal}{ {\cal Y} }

\providecommand{\Zcal}{ {\cal Z} }

\providecommand{\Csf}{ {\sf C} }

\providecommand{\Dsf}{ {\sf D} }

\providecommand{\Fsf}{ {\sf F} }

\providecommand{\Gsf}{ {\sf G} }

\providecommand{\Ksf}{ {\sf K} }

\providecommand{\Msf}{ {\sf M} }

\providecommand{\Osf}{ {\sf O} }

\providecommand{\Psf}{ {\sf P} }

\providecommand{\Qsf}{ {\sf Q} }

\providecommand{\Rsf}{ {\sf R} }

\providecommand{\Xsf}{ {\sf X} }

\providecommand{\Ysf}{ {\sf Y} }

\providecommand{\Zsf}{ {\sf Z} }

\providecommand{\Ebb}{ {\mathbb E} }

\providecommand{\Ibb}{ {\mathbb I} }

\providecommand{\Qbb}{ {\mathbb Q} }

\providecommand{\Rbb}{ {\mathbb R} }

\providecommand{\Zbb}{ {\mathbb Z} }

\DeclareFontFamily{U}{mathb}{}
\DeclareFontShape{U}{mathb}{m}{n}{<-5.5> mathb5 <5.5-6.5> mathb6 
<6.5-7.5> mathb7 <7.5-8.5> mathb8 <8.5-9.5> mathb9 <9.5-11> mathb10 
<11-> mathb12}{}
\DeclareSymbolFont{mathb}{U}{mathb}{m}{n}
\DeclareFontSubstitution{U}{mathb}{m}{n}
\DeclareMathSymbol{\blackdiamond}{\mathbin}{mathb}{"0C}  
\DeclareFontFamily{U}{mathb}{\hyphenchar\font45}
\DeclareFontShape{U}{mathb}{m}{n}{
      <5> <6> <7> <8> <9> <10>
      <10.95> <12> <14.4> <17.28> <20.74> <24.88>
      mathb10
      }{}
\DeclareSymbolFont{mathb}{U}{mathb}{m}{n}
\DeclareMathSymbol{\sqbullet}{1}{mathb}{"0D}
\DeclareFontFamily{U}{mathb}{\hyphenchar\font45}
\DeclareFontShape{U}{mathb}{m}{n}{
      <5> <6> <7> <8> <9> <10> gen * mathb
      <10.95> mathb10 <12> <14.4> <17.28> <20.74> <24.88> mathb12
      }{}
\DeclareSymbolFont{mathb}{U}{mathb}{m}{n}
\DeclareFontSubstitution{U}{mathb}{m}{n}
\DeclareMathSymbol{\smalltriangleup}   {2}{mathb}{"98}
\DeclareMathSymbol{\smalltriangledown} {2}{mathb}{"99}
\DeclareMathSymbol{\blacktriangleup}   {2}{mathb}{"9C}
\DeclareMathOperator*{\expn}{\ensuremath{\Ebb}}
\def\stirling#1#2{\genfrac\{\}{0pt}{}{#1}{#2}}
\newcommand{\mpow}[2]{{#1}^{\diamond #2}}
\newcommand{\mfall}[2]{({#1})_{\diamond #2}}
\newcommand{\gfall}[3]{({#1}\,|\,{#2})_{#3}}
\newcommand{\rowsum}[2]{\abs{{#1}_{{#2},\star}}}
\newcommand{\colsum}[2]{\abs{{#1}_{\star,{#2}}}}
\newcommand{\SZ}{{\sf SZ}}
\newcommand{\PRMG}{{\sf PRMG}}

\newcounter{todoctr}
\ifnum\submission=0
	\newcommand{\todo}[2]{\refstepcounter{todoctr}%
	\begin{tcolorbox}[colback=red!5!white,colframe=red!70!black,title={\small\bfseries TODO~\thetodoctr: #1}]\small #2\end{tcolorbox}}
\else
	\newcommand{\todo}[2]{}
\fi

\AtEndEnvironment{proof}{\leavevmode\qed}

\ifnum\submission=0
    \ifarxivstyle
        \usepackage[a4paper,margin=1in]{geometry}
        \usepackage{newpxtext,newpxmath}
        \usepackage{cite}
    \else
        \usepackage{mathrsfs}
    \fi
\else
    \usepackage{mathrsfs}
\fi

\makeatletter
\let\NAT@parse\undefined
\makeatother
\renewcommand*{\backref}[1]{}
\renewcommand*{\backrefalt}[4]{%
    \ifcase #1 \textsc{(not cited.)}%
    \or \textsc{(cited on p.~#2.)}%
    \else \textsc{(cited on pp.~#2.)}%
    \fi
}

\title{Indistinguishability Lifting for Keyed Oracles,\\
Compressed Ideal Cipher,\,and More Applications}

\begin{document}

\maketitle
\begin{abstract}
    Cryptographic security proofs often involve an adversary interacting with a larger, {\em keyed oracle} that consists of (potentially exponentially) many independent instances of a smaller, {\em base oracle}. Examples include ideal ciphers, which provide access to an independent random permutation for each key, and random oracles, which provide an independent random bit string for each input. However, showing quantum indistinguishability between two such keyed oracles can be tricky, since a single query made by an adversary may involve a superposition that covers all instances of the base oracles simultaneously.

    In this paper, we establish a generic {\em indistinguishability lifting theorem} of the following form: if the two base oracles are indistinguishable under quantum queries, then their corresponding keyed oracles are too, up to an $O(q^2)$ multiplicative loss in distinguishing advantage, where $q$ is the number of queries made by the adversary. Our lifting theorem applies to both statistical and computational settings, and to oracles that are stateful as well. It is also optimal in that it matches the obvious Grover search attack for a certain (contrived) choice of oracles.

    As an immediate application, we extend Carolan's compressed permutation oracle to an efficiently implementable {\em compressed ideal cipher}, and use it to prove preimage resistance of the Davies-Meyer compression function in the quantum ideal cipher model. Thanks to our lifting theorem, the soundness of our compressed ideal cipher reduces to that of Carolan's oracle, and any further improvement on the latter would automatically carry over to the former.

    As our second application, we give a modular construction that doubles the message length of any quantum-secure strong pseudorandom permutation. Along the way, we show that an existing two-round tweakable Feistel construction is indistinguishable from a random permutation under quantum bidirectional queries. This is done via a dedicated polynomial-method argument, which may be of independent interest.
\end{abstract}

\section{Introduction}

\subsubsection{Indistinguishability in cryptography.}

Indistinguishability has played an indispensable role in modern cryptography. From Shannon's treatment of perfect secrecy \cite{Shannon49}, through the seminal works of Goldwasser and Micali \cite{STOC:GolMic82,GolMic84} on computational security of public-key encryption, to Yao's formulation of pseudorandomness \cite{FOCS:Yao82a}, a recurring theme is that information generated from two different sources may nevertheless be indistinguishable to an observer.

Along with the development of cryptography as a field, the use of indistinguishability has further been extended: not only as a property between two pieces of input data, but rather concerning the entire interaction an adversary can have with the cryptographic scheme \cite{GolGolMic86,LubRac88} at hand. This is often captured in the form of {\em indistinguishability between oracles}, wherein an adversary is allowed to make a bounded number of queries to one oracle $\Osf_0$ or another $\Osf_1$, and is tasked to distinguish which oracle it is interacting with.
This also serves as a useful tool for security analyses. For example, one may perform a sequence of replacements to the oracle at hand, each of which remains indistinguishable, and in the end arrive at the final oracle that is harmless to the considered cryptographic scheme.

\subsubsection{Indistinguishability between keyed oracles.}

In cryptography, it is very common to consider a {\em keyed oracle}, which consists of (potentially exponentially) many independent instances of a base oracle $\Osf_b$, and the adversary is allowed to decide which instance of the oracle $\Osf_b$ to interact with in each query, indicated via a key $k\in\Kcal$. For instance, in the random oracle model \cite{CCS:BelRog93}, a random oracle consists of a family of independently sampled bit strings, one string per key, while in the ideal cipher model \cite{C:BlaRogShr02}, an ideal cipher consists of a family of independently sampled random permutation oracles, one permutation per key.

Naturally, one would hope to argue indistinguishability between two such keyed oracles $\Osf_0^\Kcal$ and $\Osf_1^\Kcal$. 
For this to hold, the indistinguishability of the base oracles $\Osf_0$ and $\Osf_1$ is clearly necessary, and establishing it depends on the specific oracles under consideration. This leads to a generic question:
\begin{quote}
    \textit{Does indistinguishability of the base oracles $\Osf_0$ and $\Osf_1$ imply that of the keyed oracles $\Osf_0^\Kcal$ and $\Osf_1^\Kcal$? If so, at what security loss?}
\end{quote}

Classically, the answer is yes, and a standard hybrid argument settles the security loss: for each instance of the base oracle $\Osf_0$ that the adversary has actually queried, we may replace it by the other base oracle $\Osf_1$, at an additive security loss $\epsilon$. Having replaced all $q$ instances that are queried by the adversary, we have effectively replaced the entire keyed oracle $\Osf_0^\Kcal$ with $\Osf_1^\Kcal$, at a total loss summing up to $q\epsilon$. This is the best one can achieve, since a matching attack exists for a certain (contrived) choice of oracles.\footnote{Concretely, think of the oracle $\Osf_b$ that outputs a pre-selected random bit from $\Ber(b\cdot\epsilon)$. Then $\Osf_0$ and $\Osf_1$ are $\epsilon$ indistinguishable but $\Osf_0^\Kcal$ and $\Osf_1^\Kcal$ are $\Omega(q\epsilon)$ distinguishable.}

In the quantum case, however, the situation becomes significantly trickier. Indeed, a quantum adversary may be able to query the keyed oracle in a quantum superposition of keys that, in turn, touches all instances of the base oracle simultaneously, and so the (naive) classical hybrid argument does not work in general. On the other hand, replacing the entire family of instances one by one often causes a security loss that explodes exponentially fast in the length of the keys, which is often unaffordable from the security perspective.

\subsubsection{Limitations of prior work.}
In the literature, there have been several ways to mitigate the aforementioned difficulties, but they all come with a price and limitations, which we discuss below.

The first limitation lies in the obtained distinguishing bound.
In the computational setting, Zhandry's small-range distribution technique \cite{FOCS:Zhandry12} (essentially) provides a way to simulate the keyed oracle $\Osf_b^\Kcal$ using another smaller keyed oracle $\Osf_b^{[L]}$ (for {\em stateless} oracles, which we discuss in the next paragraph) whose key space size is much smaller, of size $L$. An adversary $\Acal$ distinguishing this smaller keyed oracle $\Osf_b^{[L]}$ can then be transformed into an adversary $\Bcal$ distinguishing the underlying base oracles $\Osf_b$, via an $L$-step hybrid argument, which introduces a factor $L$ loss in the distinguishing advantage $\Adv(\Bcal)$. However, this small-range simulator is imperfect, with an additive security loss of order $O(q^3/L)$ for adversaries given $q$ quantum queries.
To make this additive simulation error negligible, the parameter $L$ then has to be set superpolynomially large, thereby introducing a superpolynomially large multiplicative loss in the final distinguishing bound. In particular, if one optimizes the overall bound via equating $L\cdot\Adv(\Bcal)$ and the simulation error $O(q^3/L)$, the achieved bound would be on the order of $q^{1.5}\sqrt{\Adv(\Bcal)}$, which is rather non-tight.
Further restricting to the statistical setting, where the base oracles $\Osf_0$ and $\Osf_1$ are sampled from distributions that are close in a statistical distance $\epsilon$, a dedicated tight query lower bound that matches Grover search can be achieved \cite{Bel19}. However, in terms of concrete advantage, prior work~\cite{AC:BDFLSZ11,C:SonYun17,EPRINT:BacMaj26} only reaches bounds of the form $O(q^{c}\sqrt{\epsilon})$ (for $c\geq 1$). That is, they all lose a square root in $\epsilon$.


Beyond quantitative loss, the second limitation lies in scope. All existing lifting results 
concern only {\em stateless} oracles. Namely, the base oracle $\Osf_b$ is a function (or unitary \cite{TCC:AGQY22}) that is sampled up front according to a certain distribution.
Many useful oracles are {\em not} of this kind. For instance, Zhandry's compressed oracle~\cite{C:Zhandry19} and the analogous compressed permutation oracles of
Unruh~\cite{AC:Unruh23} and Carolan~\cite{STOC:Car26} are stateful simulators whose quantum memory evolves with
every query. Moreover, their usefulness crucially relies on such memory, since a reduction may use it to embed/observe useful information to/from the adversary interacting with the oracle.
Extending, say, a compressed permutation oracle to a compressed \emph{ideal cipher}\,---\,which would be useful for analyzing permutation-based constructions\,---\,means lifting (quantum) indistinguishability of {\em stateful} base oracles $\Osf_0$ and $\Osf_1$, to the corresponding keyed oracles $\Osf_0^\Kcal$ and $\Osf_1^\Kcal$.
To the best of our knowledge, no prior work has established a generic lifting result for this stateful setting.

\subsection{Our Contributions}\label{sec:intro_contributions}

In this paper, we answer the above generic question affirmatively, to the best possible extent. Namely, we show that if (possibly stateful) oracles $\Osf_0$ and $\Osf_1$ are indistinguishable in $q$ quantum queries, then so are their corresponding keyed oracles $\Osf_0^\Kcal$ and $\Osf_1^\Kcal$, up to a factor $8q^2$ in the distinguishing bound.

\begin{theorem*}[Quantum indistinguishability lifting, informal]
    Every algorithm distinguishing $\Osf_0^\Kcal$ from $\Osf_1^\Kcal$ with $q$ quantum queries and advantage $\epsilon$ can be transformed into a similarly efficient algorithm distinguishing $\Osf_0$ from $\Osf_1$ with $q$ quantum queries and advantage at least $\epsilon / (8q^2)$.
\end{theorem*}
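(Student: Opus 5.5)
The plan is a hybrid argument over queries, not over keys, with a random \emph{embedding} key and a random \emph{switching time}. Fix a distinguisher $\Acal$ that makes $q$ queries to the keyed oracle, with advantage $\epsilon$. Every stateful keyed oracle $\Osf_b^\Kcal$ can be simulated as follows. Keep a register holding the joint state of all instances, with each instance initialized lazily to the base oracle's initial state. A query on a key register $k$ in superposition applies, controlled on $k$, the base oracle's unitary to instance $k$'s state. For each hybrid we single out one instance $k^*$ and answer queries to it with the external oracle $\Osf$ given to the reduction $\Bcal$. All other instances are simulated internally; for the statistical/stateless case this is direct, and in the computational setting it requires efficient simulatability of $\Osf_0$ and $\Osf_1$, which I take as a hypothesis. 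The difficulty is that $k^*$ must be chosen before $\Bcal$ knows which keys $\Acal$ will touch, and $\Acal$ queries in superposition.

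Main step: define hybrids $H_{i}$ for $i=0,\dots,q$. In $H_i$, instances are "type 1" if they are first touched within the first $i$ queries and "type 0" otherwise. Classical intuition would give $H_0=\Osf_0^\Kcal$ and $H_q=\Osf_1^\Kcal$. Quantumly, "first touched" is not observable, so I would instead use a purification trick. Maintain for each key a flag register recording whether the instance has been initialized (as in compressed oracles), and attach the oracle type to each instance at initialization time according to the query index. Consecutive hybrids $H_{i-1},H_i$ then differ only in the type chosen for instances freshly initialized at query $i$.

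To handle this, $\Bcal$ guesses $i\in[q]$ uniformly. It runs $\Acal$ with type-1 instances drawn internally for first-touches before query $i$ and type-0 instances after it; in the computational case it can internally simulate both types. At query $i$, the instance freshly initialized in superposition is routed to the external oracle. Since the key is in superposition, routing is a controlled operation: conditioned on key $k$ being fresh, swap the external oracle's state register into instance $k$'s slot. This is exactly the step I expect to be the main obstacle. A single external oracle state can only be swapped into one branch coherently, and the other branches of the superposition may also be fresh. I would handle it by additionally guessing the embedding key index among at most $q$ distinct "fresh" events, or equivalently by measuring the freshness pattern. Measurement disturbs the state, and the cost of this disturbance is where the second factor of $q$ (and the constant $8$) comes from. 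Concretely, I would use a one-shot guess $j\in[q]$ of which later query reuses the embedded instance, and bound the resulting gentle-measurement/O2H-style loss by $\epsilon/(8q^2)$.

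Putting it together: the triangle inequality over $q$ hybrids loses a factor $q$. The random embedding across the up-to-$q$ branches loses another factor $q$, since the guess succeeds with probability $1/q$ and the mismatched cases cancel in the difference between $b=0$ and $b=1$ by symmetry of the simulation. Constants absorbed from squaring amplitudes give $8$. Altogether $\Adv(\Bcal)\ge \epsilon/(8q^2)$, and $\Bcal$ makes at most $q$ queries to $\Osf$, one per query of $\Acal$. Finally, I would check tightness against the Grover search attack mentioned in the abstract.
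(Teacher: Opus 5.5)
Your proposal has a genuine gap, and it sits exactly at the step you flag as the main obstacle: embedding the single external instance into a query whose key register is in superposition.

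The count of ``at most $q$ distinct fresh events'' holds only along a single classical path of keys. One query on a uniform superposition over $\Kcal$ makes every one of the $|\Kcal|$ instances fresh at once, and a single external state can be coherently placed in only one slot. If you pick the embedding key without looking, the loss is a factor $|\Kcal|$, as in the naive key-by-key hybrid, not $q$. If you instead measure the key or the freshness pattern, you collapse the superposition and destroy the coherent amplification that makes keyed oracles hard. For the contrived pair in \cref{sec:blackbox_optimality}, Grover search reaches $\Omega(q^2\epsilon)$ precisely because it never measures the key. The measurement-based tools you allude to (measure-and-reprogram, O2H) give either one-sided multiplicative bounds on the probability of a fixed event, or bounds with a square-root loss. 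Neither yields a multiplicative $O(q^2)$ bound on a \emph{difference} of acceptance probabilities. Your claim that mismatched guesses ``cancel by symmetry'' between $b=0$ and $b=1$ is exactly the argument that would be needed, and it is not given. Likewise, the accounting (a factor $q$ from the hybrid, $q$ from guessing, and $8$ ``from squaring amplitudes'') is fitted to the target rather than derived.

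The missing idea is to give up on embedding into one key. The paper's reduction samples densities $p,z$ and functions $H\sim\Ber(p)^{\Kcal}$, $G\sim\Ber(z)^{\Kcal}$. It routes \emph{every} key with $G(k)=1$ to the \emph{same} external instance $\Osf_i$, which is a deliberately incorrect simulation when several keys are selected. All other keys go to internal instances of $\Osf_{H(k)}$.

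The hybrid is continuous in the fraction $p$ of type-1 keys, so $\epsilon=|f(1)-f(0)|=|\expn_p f'(p)|$. A direct computation gives $f'(p)=g_p'(0)=\sum_{k}\Delta_p(\set{k})$. Thus the derivative at $z=0$ sees only configurations in which exactly one key is routed externally, and the shared-state error is confined to higher-order terms in $z$.

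Each query is linear in $(G(k))_k$, so $g_p$ is a polynomial of degree at most $2q$ in $z$. The concrete Markov brothers' inequality on $[0,1]$ then bounds $|\bar g'(0)|$ by $2(2q)^2$ times a signed average of $\bar g(z^*)$ over random points $z^*$. $\Bcal$ realizes this by sampling $z^*$ and flipping its output according to the sign. That is the source of $8q^2$; no measurement, freshness bookkeeping, or guessing of keys or query indices is involved.
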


Our lifting theorem holds both statistically and computationally, and side-steps all the above limitations of prior works: One can immediately see that, contrary to prior works, our achieved indistinguishability bound does not have any square-root loss. Not only that, our bound is actually {\em optimal}: for a certain (contrived) choice of oracles $\Osf_0$ and $\Osf_1$ (see \cref{sec:blackbox_optimality}), the standard Grover search would have achieved a matching distinguishing advantage. Moreover, our indistinguishability lifting theorem holds for {\em stateful} oracles, making it applicable to more useful oracles, such as the (family of) compressed permutation oracles proposed by Unruh~\cite{AC:Unruh23} and Carolan~\cite{STOC:Car26}.

\subsubsection{Compressed ideal cipher.} As the first and immediate application of our indistinguishability lifting theorem, we show that existing compressed permutation oracle techniques in the literature \cite{AC:Unruh23,STOC:Car26,EPRINT:HHMRS26,FLMNW26} can be extended to the ideal cipher model. 
We do so via extending the particular oracle $\cpO$ considered in \cite{STOC:Car26}, to its corresponding keyed oracle $\cpO^\Kcal$. Thanks to our lifting theorem, we can readily lift the indistinguishability of $\cpO$ and a random permutation $P$, to that of $\cpO^\Kcal$ and the corresponding ideal cipher. Slightly less obvious though, is whether the new, keyed compressed permutation oracle admits an efficient implementation $\cIC$, which we address using a standard isometry-based argumentation.

In terms of concrete bounds, on a message space of $n$-bit strings, the oracle $\cpO$ is indistinguishable from a random permutation $P$ by \cite{STOC:Car26}, under a (single-instance) bound of order $O\lrp{{q^3}/{2^{n/4}}}$. Very recently, two works \cite{EPRINT:Kim26,ARXIV:CarMaj26} that are independent and concurrent to ours (see \cref{sec:concurrent}) further improve this bound to $O\lrp{{q}/{2^{n/2}}}$. This would in turn give us a bound for the distinguishing advantage between $\cIC$ and the corresponding ideal cipher $P^\Kcal$, under a bound of order $O\lrp{{q^3}/{2^{n/2}}}$. More generally, let $\Delta_\cpO(n,q)$ be the maximum distinguishing advantage between $\cpO$ and $P$ given $q$ quantum queries, our obtained bound for $\cIC$ and $P^\Kcal$ is $8q^2\cdot \Delta_\cpO(n,q)$ as spelled out below.

\begin{theorem*}[Compressed ideal cipher, informal]
    There exists an efficiently implementable compressed ideal cipher $\cIC$ such that every oracle algorithm making at most $q$ quantum bidirectional queries has distinguishing advantage at most
    $$
    8q^2\cdot \Delta_\cpO(n,q)
    $$
    between $\cIC$ and the ideal cipher $P^\Kcal$.
\end{theorem*}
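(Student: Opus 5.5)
The plan is to define $\cIC$ as the keyed oracle $\cpO^\Kcal$, implemented efficiently, and then split the proof into two independent parts. First I show that the efficient implementation $\cIC$ is perfectly indistinguishable from the idealized keyed oracle $\cpO^\Kcal$. Second I invoke the quantum indistinguishability lifting theorem, with base oracles $\Osf_0=\cpO$ and $\Osf_1=P$ (a random permutation with bidirectional access), to compare $\cpO^\Kcal$ with $P^\Kcal$.

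\textbf{Step 1: the idealized keyed oracle.} The idealized $\cpO^\Kcal$ keeps one independent copy of Carolan's $\cpO$ state register for every key $k\in\Kcal$, each initialized to the empty database. A bidirectional query on $\ket{k}\ket{d}\ket{x}\ket{y}$, with $d$ the direction bit, applies $\cpO$'s forward or inverse unitary to the $k$-th register, controlled on $k$. Equivalently, it is the unitary $\sum_k \ket{k}\!\bra{k}\otimes U_{\cpO}^{(k)}$.

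\textbf{Step 2: the lifting.} The base advantage between $\cpO$ and $P$ is at most $\Delta_\cpO(n,q)$ for $q$ bidirectional queries. Take any $q$-query distinguisher with advantage $\epsilon$ against $\cpO^\Kcal$ versus $P^\Kcal$. The lifting theorem turns it into a $q$-query distinguisher against the base oracles with advantage at least $\epsilon/(8q^2)$. Hence $\epsilon\le 8q^2\Delta_\cpO(n,q)$.

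For this I must check that both oracles fit the lifting theorem's hypotheses:
\begin{itemize}
\item the oracles are allowed to be stateful;
\item their bidirectional interface counts as a single query type;
\item $P^\Kcal$, with independent permutations per key, matches the paper's definition of the keyed oracle.
\end{itemize}
If the theorem is stated for statistical indistinguishability with unbounded adversaries, this substitution is direct.

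\textbf{Step 3: efficient implementation.} The idealized oracle uses $|\Kcal|$ registers, which is exponential. I would replace it with a single compressed database of triples $(k,x,y)$, stored as a sorted list, that holds only the non-empty per-key databases. Define the isometry $V$ that maps this compact representation into the product of per-key registers, filling every absent key with the empty database $\ket{\emptyset}$. The claim is that $V\,U_{\cIC}=U_{\cpO^\Kcal}\,V$ on the relevant subspace.

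\begin{itemize}
\item Controlled on $k$, $\cIC$ extracts the sub-database for key $k$ and applies Carolan's efficient $\cpO$ procedure to it.
\item It then reinserts the result, dropping key $k$ from storage if its database becomes empty.
\item Because the extract/reinsert operations are reversible and the per-key parts act on disjoint supports, intertwining reduces to $\cpO$ fixing the encoding of $\ket{\emptyset}$ on keys that are not queried.
\end{itemize}
The initial states also correspond, since $V$ maps the empty list to $\ket{\emptyset}^{\otimes\Kcal}$. So the adversary's final states coincide up to $V$, and the advantage contributed by this step is zero. Efficiency follows because the database size grows by at most one entry per query, so each query costs $\mathrm{poly}(n,\log|\Kcal|,q)$.

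\textbf{Main obstacle.} I expect the hardest part to be making the isometry rigorous in Step 3. The per-key register size is not fixed, and the extract and reinsert steps must be unitary, with no garbage left behind, even when the list is in superposition over different lengths. My approach is to fix a maximum length $q$, pad the list with a $\bot$ symbol, keep the entries sorted by $(k,x)$, and verify the commutation relation on basis states.
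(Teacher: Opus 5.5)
Your proposal is correct and follows the paper's proof. The paper likewise applies the lifting theorem with base oracles $\cpO$ and $P$, internalizing the keyed oracles in the statistical setting, to bound $\cpO^\Kcal$ versus $P^\Kcal$ by $8q^2\cdot\Delta_\cpO(n,q)$, and then shows that $\cIC$ behaves identically to $\cpO^\Kcal$ via the isometry $\ket{D}\mapsto\bigotimes_{k}\ket{D_k}$, which intertwines the forward and backward query unitaries. The only difference is that you spell out a concrete padded, sorted-list encoding of the compact database, whereas the paper leaves this implicit by working directly with databases $D:\Kcal\times\Xcal\to\Xcal\cup\set{\bot}$.
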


Given the fast development toward improving single-instance bounds, we would like to highlight the usefulness of our indistinguishability lifting theorem. Indeed, our lifting theorem works generically, irrespective of the specifics of the underlying oracles. Therefore, should there be any further improvement to the underlying compressed permutation bound, we would immediately obtain a corresponding improved bound for the compressed ideal cipher, without having to spend additional efforts redoing the proof from scratch.

As an application of our compressed ideal cipher, we prove preimage resistance of the Davies-Meyer compression function in the quantum ideal cipher model. We do so via mimicking existing proof techniques in the quantum random oracle model \cite{EC:CFHL21}, bounding query-by-query the magnitude of the compressed ideal cipher database containing such a preimage.

\subsubsection{Domain extension of quantum PRPs.}
The second application of our lifting theorem is domain extension for quantum-secure strong pseudorandom permutations (PRPs).\footnote{Prior to our work, Bhaumik et al.\ also \cite{CiC:BCFMN24} studied domain extension of (forward-only) quantum PRP, which requires an additional random-access assumption.} We show that every quantum-secure strong PRP $\PRP$ on $n$-bit message space can be extended to another one that operates on $2n$-bit message space. This is done via a two-step, modular construction, as illustrated in \cref{fig:domain_extension_overview}.

\begin{figure}[H]
    \centering
    \begin{tikzpicture}[
        x=1cm,y=1cm,
        font=\small,
        >={Stealth[length=1.8mm,width=1.3mm]},
        wire/.style={->,line width=.6pt},
        block/.style={draw,fill=black!4,minimum width=1.6cm,
            minimum height=.7cm,inner sep=3pt},
        roundbox/.style={draw,dashed,rounded corners=2pt,line width=.5pt},
        portlabel/.style={font=\scriptsize,inner sep=2pt}
    ]
        \draw[roundbox] (1.5,-.6) rectangle (3.8,2.65);
        \draw[roundbox] (6.75,-.6) rectangle (9.05,2.65);
        \node[inner sep=3pt] at (2.65,2.98)
            {Round 1: $\TWK_{K}$};
        \node[inner sep=3pt] at (7.9,2.98)
            {Round 2: $\TWK_{K}$};

        \node[block] (prf0) at (2.65,2.05) {$\PRF_{K}$};
        \node[block] (prp0) at (2.65,0) {$\PRP$};
        \node[block] (prp1) at (7.9,2.05) {$\PRP$};
        \node[block] (prf1) at (7.9,0) {$\PRF_{K}$};

        \draw[wire] (prf0.south) -- node[portlabel,right] {$k_0$} (prp0.north);
        \draw[wire] (prf1.north) -- node[portlabel,right] {$k_1$} (prp1.south);
        \node[portlabel,anchor=south west] at (prp0.north) {key};
        \node[portlabel,anchor=north west] at (prp1.south) {key};

        \node[anchor=east] at (0,2.05) {$L$};
        \node[anchor=east] at (0,0) {$R$};
        \draw[wire] (0,2.05) -- node[portlabel,above,pos=.65] {$(0,L)$} (prf0.west);
        \draw[wire] (0,0) -- (prp0.west);

        \fill (.45,2.05) circle (1.2pt);
        \draw[wire] (.45,2.05) -- (.45,3.35) -- (5.5,3.35)
            -- (5.5,2.05) -- node[portlabel,above] {$L$} (prp1.west);

        \coordinate (u) at (4.65,0);
        \draw[line width=.6pt] (prp0.east) -- node[portlabel,above] {$U$} (u);
        \draw[wire] (u) -- node[portlabel,above,pos=.65] {$(1,U)$} (prf1.west);
        \fill (u) circle (1.2pt);
        \draw[wire] (u) -- (4.65,-1.1) -- (9.8,-1.1) -- (9.8,0) -- (10.6,0);
        \draw[wire] (prp1.east) -- (10.6,2.05);
        \node[anchor=west] at (10.6,2.05) {$V$};
        \node[anchor=west] at (10.6,0) {$U$};
    \end{tikzpicture}
    \caption{Domain extension of a PRP from $n$-bit to $2n$-bit messages.}
    \label{fig:domain_extension_overview}
\end{figure}

In the first step, we construct a tweakable cipher $\TWK$ from $\PRP$, via using a pseudorandom function (PRF) $\PRF$ that, given any tweak $t$ as input, acts as a key-derivation procedure for $\PRP$:
$$
\TWK_{K}(t, x) := \PRP_{\PRF_K(t)}(x)\;.
$$
Here, we note that the required $\PRF$ can easily be obtained from $\PRP$ using the PRP-PRF switching lemma \cite{Zhandry15}.
To show that this generic construction of tweakable cipher is indeed quantum-secure, i.e.\ indistinguishable from an ideal cipher, we first replace queries to $\PRF_K(t)$ with queries to a random oracle $\RO(t)$. Then, the resulting oracle
$\TWK'_{K}(t,x) := \PRP_{\RO(t)}(x)$
can be seen as a keyed oracle $\PRP^\Kcal$ of $\PRP$ with an independent PRP key sampled for each tweak. Thanks to the fact that our indistinguishability lifting theorem also applies in the computational setting, we immediately conclude that it is indistinguishable from an ideal cipher. Prior to our work, this generic construction of tweakable cipher has been studied \cite{DCC:LLMM18} in the classical setting, but the quantum security remained untreated.

In the second step, we use the two-round variant of the domain extender proposed in \cite{TCC:CDMS10}, whose classical strong pseudorandomness is known, to construct the intended $2n$-bit PRP from the tweakable cipher $\TWK$. We further extend this strong pseudorandomness to the quantum setting using a tailored polynomial-method analysis, which may be of independent interest.

\subsection{Concurrent Work}\label{sec:concurrent}

Two very recent works, by Kim~\cite{EPRINT:Kim26} and by Carolan and Majenz~\cite{ARXIV:CarMaj26}, which are independent and concurrent to ours, revisit the soundness of Carolan's compressed permutation oracle $\cpO$~\cite{STOC:Car26}. Both improve its single-instance bound from $O\lrp{{q^3}/{2^{n/4}}}$ to $O\lrp{{q}/{2^{n/2}}}$ using different routes. Kim gives a concrete bound of $4q/2^{n/2}$ for $q\leq 2^n/4$, which is also formalized in Lean. As applications, Kim obtains tight query complexities for preimage and collision search in the sponge, and for collision search in the keyless Davies-Meyer construction. Carolan and Majenz construct the compression isometry directly from a POVM on the naive purification of a random permutation, and use representation theory of the symmetric group to control its normalization. As applications, they give explicit-constant bounds for the sponge, as well as preimage and collision bounds for the Davies-Meyer construction.

As discussed in \cref{sec:intro_contributions}, plugging these improved bounds into our lifting theorem immediately gives a bound of order $O\lrp{{q^3}/{2^{n/2}}}$ between our compressed ideal cipher $\cIC$ and the ideal cipher. Notably, Carolan and Majenz also treat the ideal cipher model directly: they give each key an independent compressed database, which is the same structure as our $\cIC$, and observe that a query under a key $k$ acts only on the $k$-th database. Their one-query analysis then carries over without any dependence on the number of keys, giving a bound of order $O\lrp{{q}/{2^{n/2}}}$ for the ideal cipher as well. For the specific oracle $\cpO$, their direct analysis is therefore tighter than our generic route by a factor of $q^2$. Likewise, their Davies-Meyer preimage bound of order $O(q^2/2^n)$ is tighter than ours in \cref{thm:DM_preimage}, whose dominant term comes from the lifting. We view the two approaches as complementary. Their extension relies on the structure of their soundness proof for $\cpO$, whereas our lifting theorem treats the base oracles as black boxes, and so applies to arbitrary (possibly stateful) base oracles, including other compressed permutation oracles whose soundness proofs may not have this structure.

\subsection{Other Related Work}

\subsubsection{Quantum security of symmetric cryptography.}
Our work belongs to the broader study of the quantum security of symmetric constructions. Kaplan et al.~\cite{C:KLLN16} showed that adversaries with superposition access to the keyed construction can break several constructions that are secure classically. On the positive side, quantum security has been established for message authentication~\cite{EC:BonZha13,EC:AMRS20}, for HMAC and NMAC~\cite{C:HosIwa21}, and for authenticated encryption from tweakable block ciphers~\cite{AC:BBCLNS21}, as well as for the sponge construction in several settings~\cite{EPRINT:CBHSU17,C:CarPor24,EPRINT:CarPorZha24,FOCS:ACMT25}. For the FX key-length extension~\cite{TCC:JaeSonTes21}, Even-Mansour cipher~\cite{EC:ABKM22}, and the Key-Alternating Feistel cipher~\cite{AC:BBCJJR25}, (various notions of) security has been established against adversaries with classical access to the keyed construction and quantum access to the underlying primitive. Closer to our applications, Zhandry~\cite{EPRINT:Zhandry16b} constructed quantum-secure PRPs from quantum-secure PRFs and classical format-preserving encryption, obtaining the first quantum-secure PRP from one-way functions, Hosoyamada and Iwata~\cite{AC:HosIwa19} attempted to show that the 4-round Luby-Rackoff construction is a quantum-secure PRP, but a gap in their proof was subsequently identified by Bhaumik et al.~\cite{AC:BCEJ24}; eventually, Carolan~\cite{STOC:Car26} showed that the 7-round Feistel construction is a strong quantum-secure PRP. The quantum security of the Davies-Meyer construction was studied by Hosoyamada and Yasuda~\cite{AC:HosYas18}. For the analysis of quantum queries to random permutations, we also mention the permutation superposition oracles of~\cite{STOC:MajMalWal25} and the lower bounds for two-sided permutation inversion of~\cite{CiC:ABPS24b}.

\paragraph{Quantum security of PRG with advice.}
An active line of work \cite{AC:HhaXagYam19,FOCS:CGLQ20,EC:Liu23,EPRINT:HuaLia26} has established increasingly tighter bounds on the PRG security of random oracles, against quantum adversaries given $L$ (qu)bits of oracle-dependent advice. As noted in \cite{FOCS:CGLQ20}, in the classical advice setting \cite{AC:HhaXagYam19} obtained a bound of order $\tilde{O}(\sqrt[6]{{Lq^4}/{N} + {q^4}/{N}})$, where $q$ is the number of (quantum) random oracle queries, and $N$ is the size of the key space. This is then improved to $\tilde{O}({\sqrt[3]{L q/N + q^2/N}})$ by \cite{FOCS:CGLQ20}, the authors of which further extended the results to the quantum advice setting with a bound $\tilde{O}({\sqrt[19]{L^5 q / N + L^4 q^2/N}})$. The subsequent work \cite{EC:Liu23} improves the quantum advice bound to $O({\sqrt[3]{Lq/N} + \sqrt{q^2/N}})$. Huang~and~Liao \cite{EPRINT:HuaLia26} further improve it to $O({\sqrt[3]{Lq/N}+ q^2/N})$, replacing the square-root term with $O(q^2/N)$. They also establish a multi-challenge bound $O({\sqrt[3]{nL(q+n)/N}+ nq^2/N})$, where $n$ is the number of PRG challenges given to the adversary. Notably, the polynomial method is also used in \cite{EPRINT:HuaLia26}, though in a different way, to obtain a tighter PRG security bound.

\paragraph{Quantum security of sum of permutations.}
Two recent papers by Bhaumik, Guo, Guo, and Jha~\cite{ARXIV:BGGJ26} and Choi, Hhan, and Jang~\cite{EPRINT:ChoHhaJan26} study the quantum indistinguishability of the sum (or XOR) of independent random permutations from a random function, a classical way to convert permutations into functions beyond the birthday bound. Choi, Hhan, and Jang show that the XOR of $r\geq 2$ random permutations over $n$-bit strings is quantum-secure beyond the quantum birthday bound, while Bhaumik et al.\ treat sums over arbitrary finite abelian groups, together with linear postprocessing such as truncation and Dinur's LXoP construction. Both works use Fourier-analytic variants of the polynomial method, in which the advantage of a $q$-query quantum distinguisher is controlled by components of degree at most $2q$, determined by $2q$ input-output pairs of the permutations. This is close in spirit to our analysis of the two-round construction in \cref{sec:tweak2ext}, where we also write the output probability of a distinguisher as a polynomial of degree at most $2q$ in the input-output indicators of a random permutation, although the constructions and the resulting arguments are different. Their results also concern the first step of our domain extension, where we obtain $\PRF$ from $\PRP$ via the PRP-PRF switching lemma; a beyond-birthday conversion such as the XOR of permutations could replace this step, although the overall bound of our domain extension also contains an $O(q^4/2^n)$ term from the second step, which such a replacement would not affect.

\subsubsection{Other notions of lifting.}
Our lifting theorem can be regarded as a decision counterpart of existing lifting theorems for search games, which lift security under classical oracle access to security under quantum oracle access.
This line of work builds on top of the measure-and-reprogram technique \cite{C:DFMS19,C:DonFehMaj20}, originally developed for applications such as Fiat-Shamir transformations. Subsequent work by Yamakawa and Zhandry~\cite{EC:YamZha21} shows that it is more broadly applicable for search games in the quantum random oracle model. This family of lifting theorems is further expanded by Cojocaru et al.~\cite{C:CHLYY25} to the quantum bidirectional random permutation model, as well as the quantum ideal cipher model.
In contrast, our lifting theorem concerns {\em decision} games, with quantum oracle access both before and after the lifting. Interestingly, our lifting theorem can also recover a weaker form of the aforementioned {\em search} lifting, which we will discuss in \cref{sec:search_lifting}, whereas the existing search lifting theorems do {\em not} directly imply our result.

\subsubsection{Classical proofs with ideal permutations and ideal ciphers.}
Classically, ideal permutations and ideal ciphers underlie many security proofs, including those of the Even-Mansour cipher~\cite{AC:EveMan91,JC:EveMan97} and key-alternating ciphers~\cite{EC:BKLSST12,EC:CheSte14}, and the equivalence between the random oracle and ideal cipher models~\cite{C:CorPatSeu08,STOC:HolKunTes11}. Tweakable block ciphers were introduced by Liskov, Rivest, and Wagner~\cite{C:LisRivWag02}, and Minematsu~\cite{FSE:Minematsu09} built $2n$-bit block ciphers from $n$-bit tweakable block ciphers. In classical proofs in the ideal cipher model, the permutation associated with each key is typically sampled lazily when the key is first queried, so that only the queried keys enter the analysis. This is exactly the step that fails against adversaries querying the key in superposition, and that our lifting theorem replaces.

\section{Preliminaries}

\subsection{Basic Notation}
For every integer $n\in\Zbb_{>0}$ we write $\sqbkt{n} := \set{1,\dots,n}$, with the convention that $[0] := \emptyset$ is the empty set. For a predicate or event $P$, we write $\idx_P$ for the indicator function that evaluates to $1$ if it is satisfied, and $0$ otherwise. When $P$ is described in a long expression, we may also write $\idx\sqbkt{P}$ to make the formula more neat.

We use the falling Pochhammer (or falling-factorial) notation
\[
    (a)_k := a(a-1)\cdots(a-k+1) \quad (k\geq 1),
    \qquad (a)_0:=1.
\]
We write $\stirling{k}{j}$ for the Stirling number of the second kind, namely the
number of partitions of a $k$-element set into $j$ nonempty, unlabeled blocks.
Equivalently, these numbers are the coefficients in the change of basis
\begin{equation}\label{eq:stirling}
    x^k=\sum_{j=0}^k \stirling{k}{j}(x)_j.
\end{equation}

\subsection{Oracle Notation}\label{sec:oracle_prelim}
For an algorithm/procedure $\Acal$ that produces binary output, we denote by $\Pr\sqbkt{\Acal}$ the probability that $\Acal$ outputs $1$, and likewise we denote by $\SD(\Acal,\Bcal)$ the statistical distance of the two output distributions of the two algorithms, respectively.

Throughout this paper, we write $\Acal^O$ for an algorithm given \emph{controlled} oracle access to $O$, meaning that in each query it can decide on the fly whether to disable the query. In case $O$ is a permutation or an invertible cipher, we take it as understood that $\Acal$ can make bidirectional queries as well. This is then further emphasized with the notation $O^{\leftrightarrow}$.

Moreover, for any PRF construction $\PRF$, we define
\[
\Adv^{\RO}_{\PRF}(\Acal) := \SD\lrp{\Acal^{\PRF_K},\Acal^{\RO}}
\]
to be the distinguishing advantage from a random oracle $\RO$ over the randomness of key $K$. Likewise, we denote by $\Adv^{\RP^{\leftrightarrow}}_{\PRP}(\Acal)$ and $\Adv^{\IC^{\leftrightarrow}}_{\TWK}(\Acal)$ the advantage as a PRP, and advantage as a tweakable cipher, given quantum bidirectional queries.

\subsubsection{Stateful oracles.} A (stateful) oracle $\Osf$ consists of a memory register $\Msf$, an initial state $\rho$ on $\Msf$, and a unitary $U_\Osf$ acting on a query register $\Qsf$ and on $\Msf$. This is without loss of generality: any randomness or intermediate measurement of the oracle can be purified into $\rho$ and $\Msf$. A controlled query to $\Osf$ applies $\ket{0}\bra{0}_\Csf\otimes\Ibb+\ket{1}\bra{1}_\Csf\otimes U_\Osf$ for a control qubit $\Csf$ held by the algorithm; in particular, a disabled query leaves $\Msf$ untouched. For a bidirectional oracle, $\Qsf$ contains a direction bit $d$ and $U_\Osf=\sum_{d}\ket{d}\bra{d}\otimes U_\Osf^d$, so the direction may be in superposition. An algorithm making $q$ queries applies its own unitaries on its registers, interleaved with $q$ (controlled) queries, and finally measures its output.

For a finite key set $\Kcal$, the keyed oracle $\Osf^\Kcal$ has memory $\bigotimes_{k\in\Kcal}\Msf_k$, where each $\Msf_k$ is a copy of $\Msf$, initial state $\rho^{\otimes\Kcal}$, and query unitary
\[
    U_{\Osf^\Kcal}:=\sum_{k\in\Kcal}\ket{k}\bra{k}_{\Ksf}\otimes \lrp{U_\Osf}_{\Qsf\Msf_k}\;,
\]
where $\Ksf$ is the key register and $(U_\Osf)_{\Qsf\Msf_k}$ acts as $U_\Osf$ on $\Qsf\Msf_k$ and as the identity on $\Msf_{k'}$ for $k'\neq k$. In other words, the instances start in independent copies of $\rho$, and each query is routed to the instance named by the key.

\subsection{Polynomial Inequalities}
We recall here Markov brothers' inequality, which is used multiple times in this paper. Let $f(x)$ be a (real or complex valued) polynomial, and let $[a,b] \subseteq \Rbb$ be a real interval. Then the Markov brothers' inequality would tell us the following
\begin{equation}\label{eq:Markov_brother}
    \abs{f'(x)} 
    \leq \frac{2(\deg f)^2}{b-a}\cdot \max_{t\in [a,b]} \abs{f(t)}\quad \forall x\in [a,b]\;.
\end{equation}
When we are interested in the interior points $x\in(a,b)$ away from the boundary, then Bernstein's inequality would give us a linear bound in the degree of $f$:
\begin{equation}\label{eq:Bernstein}
    \abs{f'(x)} \leq \frac{\deg f}{\sqrt{(x-a)(b-x)}}\cdot \max_{t\in[a,b]}\abs{f(t)}\quad \forall x\in (a,b)\;.
\end{equation}

\subsubsection{A Concrete Variant.}
We recall a concrete variant of Markov brothers' inequality that does not rely on maximization of $t$.
Let $d \geq \deg f$, and for every $0\leq j\leq d$, let $x_j := \cos(j\pi/d)$, and
\[
L_j(x) := \prod_{i\in\set{0,\dots,d}\setminus\set{j}}\frac{x-x_i}{x_j-x_i}
\]
be the $j$th element of the corresponding Lagrange basis, i.e.\ $L_j(x_i) = \idx_{i=j}$ and $\deg L_j = d$. Then we immediately get
\begin{align*}
    f'(x)
    = \lrp{\sum_{j\in\set{0,\dots, d}} L_{j}(x) \cdot f(x_j)}'
    = \sum_{j\in\set{0,\dots, d}} L_{j}'(x) \cdot f(x_j)
\end{align*}
Observe that, from the Duffin-Schaeffer inequality \cite{ToAMS:DufSch41} we have that $\sum_{j}\abs{L_j'(x)} \leq d^2$ for every $x\in[-1,1]$, and so in this case we have
\begin{equation*}
    \abs{f'(x)} \leq d^2\abs{\expn_{X\gets\Dcal_x^{d}}\sqbkt{\sgn_X^x\cdot f(X)}}\qquad\forall x\in[-1,1]\;,
\end{equation*}
where $\sgn_X^x \in\set{\pm 1}$ is the sign\footnote{Whether the sign at zero is $1$ or $-1$ is irrelevant to our discussion.} of $L_j'(x)$ for the only $j$ such that $x_j = X$, and $\Dcal_x^{d}$ is the distribution over $\set{x_0,\dots, x_{d}}$ with
\[
\Pr_{X\gets \Dcal_x^{d}}\sqbkt{X = x_j} := \frac{\abs{L'_j(x)}}{\sum_{i}\abs{L_i'(x)}}\qquad\forall j \in \set{0,\dots,d}\;.
\]
For an $x\in[a,b]$ sitting in the general interval $[a,b]$, the similar bound as follows can be achieved via rescaling the interval to $[-1,1]$
\begin{equation}
    \abs{f'(x)} \leq \frac{2d^2}{b-a}\abs{\expn_{T\gets\Dcal_{x,[a,b]}^{d}}\sqbkt{\Delta\sgn_{T}^{x}\cdot f(T)}}\qquad\forall x\in[a,b]\;,
    \label{eq:concrete_Markov}
\end{equation}
where $\Delta\sgn^{x}_T := \sgn^{\Delta(x)}_{\Delta(T)}$ for $\Delta(x) := (2x-a-b)/(b-a)$, and the distribution $\Dcal_{x,[a,b]}^{d}$ is defined as follows
\[
\Pr_{T\gets\Dcal_{x,[a,b]}^{d}}\sqbkt{T = t} := \Pr_{X\gets\Dcal_{\Delta(x)}^{d}}\sqbkt{X = \Delta(t)}\;.
\]
\subsubsection{Lifting integer-point upper bounds to an interval.}
An immediate consequence of \eqref{eq:Markov_brother} is that, if the interval length is sufficiently long, specifically if $b-a \geq 4(\deg f)^2 > 0$, then
\begin{equation}
    \abs{f(x)} \leq 2 \max_{n\in[a,b]\cap\Zbb} \abs{f(n)}\quad \forall x\in [a,b]\;,
    \label{eq:polynomial_integers}
\end{equation}
meaning that one can lift an upper bound on integer points of $f$ to all continuous points, if it applies to sufficiently many points to begin with.

This can be seen by picking $x\in[a,b]$ that maximizes $\abs{f(x)}$, its closest integer point $n\in[a,b]$, and the following chain of inequalities
\begin{align*}
    \abs{f(x)}
    &\leq \abs{f(n)}
    +\max_{t\in[a,b]}\abs{f'(t)}
    \leq \abs{f(n)}
    +\frac{2(\deg f)^2}{b-a}\abs{f(x)}
    \leq \abs{f(n)}
    +(1/2)\cdot\abs{f(x)}
\end{align*}
where the first inequality follows from basic calculus and the fact that $\abs{n-x}\leq 1$, the second inequality follows from \eqref{eq:Markov_brother}, and the last follows from the assumption that $b-a \geq 4(\deg f)^2$.
Moving the rightmost term to the left immediately concludes the proof of \eqref{eq:polynomial_integers}.

\section{Quantum Indistinguishability Lifting}\label{sec:lifting}
In this section, we formally state and prove the main theorem of this paper. Throughout, let $\Osf_0$ and $\Osf_1$ be any two (possibly stateful) oracles, and let $\Osf_0^\Kcal$ and $\Osf_1^\Kcal$ be their corresponding keyed oracles, as have been defined in \cref{sec:oracle_prelim}.

\begin{restatable}[Quantum indistinguishability lifting]{theorem}{thmIndLift}\label{thm:ind_lift}
    Let $\Acal$ be a binary-output oracle algorithm given at most $q$ quantum queries to either $\Osf_0^\Kcal$ or $\Osf_1^\Kcal$. Then, there exists a binary-output oracle algorithm $\Bcal$ making at most $q$ quantum controlled queries to $\Osf_i$, $\Osf_0^\Kcal$, and $\Osf_1^\Kcal$ respectively, such that
    \begin{equation}\label{eq:ind_lift}
        \SD\lrp{\Acal^{\Osf_0^\Kcal},\Acal^{\Osf_1^\Kcal}}
        \leq 8q^2\cdot \SD\lrp{\Bcal^{\Osf_0,\Osf_0^\Kcal,\Osf_1^\Kcal},\Bcal^{\Osf_1,\Osf_0^\Kcal,\Osf_1^\Kcal}}\;.
    \end{equation}
    Moreover, $\Bcal$ has a runtime similar to that of $\Acal$.
\end{restatable}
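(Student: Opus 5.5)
Throughout, write $N:=\abs{\Kcal}$. I would use a polynomial-method hybrid over a \emph{random subset} of keys, rather than a key-by-key hybrid. For a parameter $p\in[0,1]$, let $\Hyb_p$ be the experiment that samples each key $k\in\Kcal$ independently into a set $S$ with probability $p$. It then runs $\Acal$ against the mixed keyed oracle that routes key $k$ to an instance of $\Osf_1$ if $k\in S$ and to an instance of $\Osf_0$ otherwise. Write $f(p):=\Pr[\Hyb_p]$. Then $f(0)=\Pr[\Acal^{\Osf_0^\Kcal}]$ and $f(1)=\Pr[\Acal^{\Osf_1^\Kcal}]$, so $\abs{f(1)-f(0)}\leq\max_{p\in[0,1]}\abs{f'(p)}$.

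The first step is to show that $f$ is a polynomial in $p$ of degree at most $2q$. Purify the whole execution. A run of $q$ queries is a sum over query paths, and the final acceptance probability is a sum over pairs of branches (bra and ket) of terms. Each such term involves at most $2q$ distinct keys, namely those actually routed in the $q$ ket queries and the $q$ bra queries. Keys never touched in a branch pair contribute a trace of $\rho$ independent of whether they lie in $S$, since disabled or absent queries leave $\Msf_k$ untouched. Thus each term depends on $S$ only through $\idx_{k\in S}$ for at most $2q$ keys, and taking expectation over independent Bernoulli$(p)$ memberships gives degree at most $2q$. I would make this precise by writing the state of each instance register as a mixture $(1-p)\rho_0+p\rho_1$ applied per key and expanding multilinearly.

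The second step bounds $f'(p)$ by a single-instance distinguishing advantage. By linearity, $f'(p)=\sum_{k}\bigl(\Pr[\Hyb_p\mid k\in S]-\Pr[\Hyb_p\mid k\notin S]\bigr)$. Equivalently, $f'(p)=N\cdot\expn_{k}[\,\cdots]$, and this is not directly bounded by one base advantage because of the factor $N$. The way around it is Markov brothers' inequality \eqref{eq:Markov_brother}, which gives $\abs{f'(p)}\leq 8q^2\max_t\abs{f(t)-c}$ for the degree-$2q$ polynomial after recentring. The sharper route uses the concrete variant \eqref{eq:concrete_Markov} on $[0,1]$ with $d=2q$: it yields $\abs{f'(p)}\leq 8q^2\abs{\expn_T[\Delta\sgn_T^p f(T)]}$ for an explicit distribution over nodes $T$. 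This reduces bounding the derivative to distinguishing a signed mixture of hybrids, but it still lacks the single-instance structure.

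So the intended construction of $\Bcal$ is different. $\Bcal$ samples $T$ and a uniformly random key $k^\ast$, and simulates $\Hyb_T$ for all keys $k\neq k^\ast$ using its access to $\Osf_0^\Kcal$ and $\Osf_1^\Kcal$, routing each such key to the appropriate one. It routes $k^\ast$ to the challenge oracle $\Osf_i$. Finally it flips its output according to the sign $\Delta\sgn_T^p$. Controlled access is exactly what allows the routing: $\Bcal$ controls each oracle call on the value of the key register, so each of the three oracles is queried at most $q$ times.

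The main obstacle is matching the derivative identity to this single-$k^\ast$ planting, since a uniform $k^\ast$ carries a $1/N$ weight against the $N$ in $f'$. I would handle this by planting $k^\ast$ not uniformly but as a key drawn among those ``switched'' between hybrids. Concretely, I would couple $S_T\subseteq S_{T'}$ for adjacent nodes so that a difference of hybrids is a difference in a single Bernoulli-selected coordinate. Rescaling $p$ via $p=T$ makes $\partial_p$ equal to $\expn[\text{one-key switch}]/p$-type quantities, and the resulting factors must be absorbed into the $2d^2/(b-a)=8q^2$ constant. Checking that this normalization is exact, with no leftover $N$ or $1/p$ blow-up near the endpoints, is where I expect the real difficulty. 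The runtime claim is then immediate, since $\Bcal$ only adds sampling and classical routing.
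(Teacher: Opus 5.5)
Your hybrid $f(p)$ and the identity $f'(p)=\sum_k\bigl(\Pr[\cdot\mid k\in S]-\Pr[\cdot\mid k\notin S]\bigr)$ are correct, and you diagnose the obstacle correctly. But the proposal does not resolve that obstacle, and the suggested fixes would not work. Applying Markov's inequality to $f$ in the variable $p$ only bounds $f'$ by a signed combination of keyed hybrids $f(T)$, which contains no single-instance challenge at all. Planting the challenge at one uniformly random key $k^\ast$, with the other keys following the hybrid at parameter $T$, gives a gap of exactly $f'(T)/N$, so you lose the full factor $N=\abs{\Kcal}$. The coupling idea does not help either: $S_T$ and $S_{T'}$ differ in about $(T'-T)N$ keys, not one, and no rescaling of $p$ turns a factor $N$ into the $N$-independent constant $2d^2/(b-a)$.

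The missing idea is to apply the polynomial method in a second, \emph{planting} parameter $z$ rather than in $p$. Let $G\subseteq\Kcal$ contain each key independently with probability $z$. Route \emph{every} key of $G$ to the same single challenge oracle $\Osf_i$, and route the remaining keys according to the $p$-hybrid. Let $g_p(z)$ be the difference of acceptance probabilities between $i=1$ and $i=0$, and let $\Delta_p(S)$ denote this difference conditioned on $G=S$. Then $g_p(z)=\sum_S z^{\abs{S}}(1-z)^{N-\abs{S}}\Delta_p(S)$ with $\Delta_p(\emptyset)=0$, so $g_p'(0)=\sum_k\Delta_p(\set{k})$. When $G=\set{k}$, the challenge oracle is just a fresh independent instance at key $k$, hence $g_p'(0)=f'(p)$ exactly: the derivative at $z=0$ absorbs the factor $N$.

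Since the bits $G(k)$ enter each query linearly, $g_p$ has degree at most $2q$ in $z$. The concrete Markov inequality \eqref{eq:concrete_Markov} at the endpoint $z=0$ of $[0,1]$ with $d=2q$ then gives $\abs{g_p'(0)}\leq 8q^2\abs{\expn_{z^\ast}[\Delta\sgn^0_{z^\ast}\,g_p(z^\ast)]}$. For $z>0$ several keys share one instance, which is not a faithful keyed hybrid, but this is irrelevant. For every $(p,z)$ the procedure makes at most $q$ queries to the single oracle $\Osf_i$, so each $g_p(z)$ is a genuine single-instance distinguishing gap.

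Finally, use $f(1)-f(0)=\expn_{p^\ast}[f'(p^\ast)]$ with $p^\ast$ uniform, rather than $\max_p\abs{f'(p)}$. Then the average over $p^\ast$ and $z^\ast$, together with the sign flip, is realized by a single $\Bcal$. Your degree bound for $f$ in $p$ is not needed for this argument.
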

\begin{remark}
    If indistinguishability of $\Osf_0$ and $\Osf_1$ holds statistically, or if the oracles $\Osf_0$ and $\Osf_1$ and their initial states $\rho_0$ and $\rho_1$ are efficiently implementable, then the queries made by $\Bcal$ to $\Osf_0^\Kcal$ and $\Osf_1^\Kcal$ can be internalized.
\end{remark}

\begin{proof}
Our reduction algorithm $\Bcal$ works as follows. First, it samples a parameter $p^*\gets[0,1]$ uniformly and another parameter $z^*\gets\Dcal^{2q}_{0,[0,1]}$ following the distribution $\Dcal_{x,[a,b]}^d$ specified in \eqref{eq:concrete_Markov} (with $(x,a,b,d) := (0,0,1,2q)$). It then runs a parametrized subprocedure $\Bcal_{p,z}$ with $(p,z) := (p^*,z^*)$, and flips the output bit of $\Bcal_{p,z}$ whenever the sign $\Delta\sgn_{z^*}^0\in\set{\pm 1}$ (as defined in \eqref{eq:concrete_Markov}) is negative. Finally, it outputs the potentially flipped bit.

As for $\Bcal_{p,z}$, it maintains two functions $H$ and $G$ mapping from $\Kcal\to\bool$, which are independently sampled according to Bernoulli distributions $\Ber(p)$ and $\Ber(z)$ at each entry respectively. Then it runs $\Acal$ while simulating the oracle queries made by $\Acal$ through an interface $\Ocal$: on input $\ket{k,x}$ (or its superposition), forward the $\ket{x}$ part to $\Osf_i$ if $G(k)$ is one, and otherwise forward it to $\Osf_b^\Kcal$ with $b := H(k)$. We summarize the behavior of $\Bcal$ and $\Bcal_{p,z}$ below.
\begin{figure}[H]
    \centering
    \scalebox{1}{
    \begin{minipage}[t]{\ifarxivstyle%
        0.33\linewidth%
        \else%
        0.4\linewidth%
        \fi}
        $\Bcal^{\Osf_i,\Osf_0^\Kcal,\Osf_1^\Kcal}$:\hfill\unless\ifarxivstyle\vspace{-1em}\fi
        \begin{algorithmic}[1]
            \STATE $(p^*,z^*)\gets \Ucal([0,1])\otimes\Dcal^{2q}_{0,[0,1]}$
            \STATE $s^* := \idx\sqbkt{\Delta\sgn^{0}_{z^*}=-1}$
            \STATE {\bf return} $s^*\oplus \lrp{b\gets\Bcal^{\Osf_i,\Osf_0^\Kcal,\Osf_1^\Kcal}_{p^*,z^*}}$
        \end{algorithmic}
    \end{minipage}
    \begin{minipage}[t]{\ifarxivstyle%
        0.2\linewidth%
        \else%
        0.26\linewidth%
        \fi}
        $\Bcal^{\Osf_i,\Osf_0^\Kcal,\Osf_1^\Kcal}_{p,z}$:\hfill\vspace{-1em}
        \begin{algorithmic}[1]
            \STATE {\bf for} $k\in\Kcal$:
            \STATE \quad $H(k) \gets \Ber(p)$
            \STATE \quad $G(k) \gets \Ber(z)$
            \STATE {\bf return} $b\gets\Acal^{\Ocal}$
        \end{algorithmic}
    \end{minipage}
    \begin{minipage}[t]{\ifarxivstyle%
        0.25\linewidth%
        \else%
        0.33\linewidth%
        \fi}
        $\Ocal(k,x)$:~\hfill
        \begin{algorithmic}[1]
            \STATE {\bf if} $G(k)$ {\bf return} $\Osf_i(x)$
            \STATE {\bf else if} $H(k)$
            \STATE \quad {\bf return} $\Osf_{1}^\Kcal(k,x)$
            \STATE {\bf else} {\bf return} $\Osf_{0}^\Kcal(k,x)$
        \end{algorithmic}
    \end{minipage}}
    \caption{The (behavior of the) reduction algorithm $\Bcal$ for indistinguishability lifting, which runs a parametrized subprocedure $\Bcal_{p,z}$ with $p$ and $z$ substituted with $p^*$ and $z^*$ sampled at the beginning, and sometimes flips the output dependent on $\Delta\sgn^0_{z^*}\in\set{\pm 1}$ defined in \eqref{eq:concrete_Markov}. Here, $\Ucal([0,1])$ denotes the uniform distribution over $[0,1]$, and $\Dcal^{2q}_{0,[0,1]}$ is as defined in \eqref{eq:concrete_Markov}. Note that, while we only write the oracle interface $\Ocal$ in classical pseudocode, quantum queries are also handled in the obvious way.}\label{fig:lifting_reduction}
\end{figure}
\begin{remark}
    While in \cref{fig:lifting_reduction} the reduction algorithm looks inefficient, it is actually not.
    This is because the functions $H$ and $G$ are just (biased) random oracles, where each entry is sampled independently from a Bernoulli distribution with a parameter sampled at the beginning.
    This can be efficiently implemented using existing techniques. We discuss this matter in more detail in \cref{sec:efficient_biased_RO_simulation}.
\end{remark}

Our proof is performed via a rather non-obvious use of the polynomial method. Let $f(p)$ and $g_p(z)$ be functions defined as follows:
\begin{align}\label{eq:def_fg}
    f(p) := \Pr\sqbkt{\Bcal_{p,0}^{\Osf_i,\Osf_0^\Kcal,\Osf_1^\Kcal}}\;,\quad
    g_p(z) := \Pr\sqbkt{\Bcal_{p,z}^{\Osf_1,\Osf_0^\Kcal,\Osf_1^\Kcal}} - \Pr\sqbkt{\Bcal_{p,z}^{\Osf_0,\Osf_0^\Kcal,\Osf_1^\Kcal}}\;,
\end{align}
where the choice of $i\in\bool$ in the definition of $f(p)$ is irrelevant (since $\Osf_i$ is never queried).\footnote{Note that $f(p)$ and $g_p(z)$ are polynomials in $p$ and $z$ respectively, and so they are differentiable.} From a procedural perspective, the absolute difference $\abs{f(1) - f(0)}$ coincides with $\Acal$'s distinguishing advantage between $\Osf_0^\Kcal$ and $\Osf_1^\Kcal$. Therefore, we have
\begin{equation}\label{eq:lifting_mean_value}
    \SD\lrp{\Acal^{\Osf_0^\Kcal},\Acal^{\Osf_1^\Kcal}}
    = \abs{f(1) - f(0)}
    = \abs{\int_{0}^1 f'(p) \,dp}
    = \abs{\expn_{p^*}\sqbkt{f'(p^*)}}\;.
\end{equation}
Via a careful analysis deferred to \cref{lem:derivative_connection}, there is a connection between the derivatives of $f$ and $g$, namely $f'(p) = g_{p}'(0)$. Furthermore, by \cref{lem:g_polynomial}, the query complexity of $\Acal$ implies that $g_{p}(z)$ is a polynomial in $z$ of degree at most $2q$, and hence so is the polynomial $\bar{g}(z):=\expn_{p^*}\sqbkt{g_{p^*}(z)}$. The derivative $g'_{p}(0)$ can be controlled (on average over $p\sim p^*$) via
\begin{align}
    &\abs{\expn_{p^*}\sqbkt{g'_{p^*}(0)}}
    = \abs{\frac{d}{dz}\expn_{p^*}\sqbkt{g_{p^*}(z)}|_{z=0}}\notag\\
    &= \abs{\bar{g}'(0)}
    \leq 8q^2 \cdot \abs{\expn_{z^*}\sqbkt{\Delta\sgn^0_{z^*}\cdot \bar{g}(z^*)}}
    = 8q^2 \cdot \abs{\expn_{p^*,z^*}\sqbkt{\Delta\sgn^0_{z^*}\cdot g_{p^*}(z^*)}}\notag\\
    &= 8q^2\cdot \abs{\Pr\sqbkt{s^*\oplus\lrp{b\gets\Bcal_{p^*,z^*}^{\Osf_1,\Osf_0^\Kcal,\Osf_1^\Kcal}}} - \Pr\sqbkt{s^*\oplus\lrp{b\gets\Bcal_{p^*,z^*}^{\Osf_0,\Osf_0^\Kcal,\Osf_1^\Kcal}}}}\notag\\
    &= 8q^2\cdot\SD\lrp{\Bcal^{\Osf_0,\Osf_0^\Kcal,\Osf_1^\Kcal},\Bcal^{\Osf_1,\Osf_0^\Kcal,\Osf_1^\Kcal}}\;,\label{eq:reduction_gp_bar}
\end{align}
where the only (non-equality) inequality is via \eqref{eq:concrete_Markov}, and in the second to last equality $s^* := \idx_{\Delta\sgn^{0}_{z^*}=-1}$. We now explain the equalities one by one.
The first equality is via $g_p(z)$ being a polynomial whose coefficients are continuous in $p$, so that expectation commutes with derivative. The second equality is via the definition of $\bar{g}(z)$, and so is the third to last equality. The second to last equality is via the fact that $\Pr\sqbkt{1\oplus b_1} - \Pr\sqbkt{1\oplus b_0} = -(\Pr\sqbkt{b_1} - \Pr\sqbkt{b_0})$ for any binary random variables $b_0$ and $b_1$. The last equality is via the definition of $\Bcal$.

Collecting the bounds so far, we conclude the proof with
\begin{align*}
    \SD\lrp{\Acal^{\Osf_0^\Kcal},\Acal^{\Osf_1^\Kcal}}
    &= \abs{\expn_{p^*}\sqbkt{f'(p^*)}}
    =\abs{\expn_{p^*}\sqbkt{g'_{p^*}(0)}}
    \leq 8q^2\cdot \SD\lrp{\Bcal^{\Osf_0,\Osf_0^\Kcal,\Osf_1^\Kcal},\Bcal^{\Osf_1,\Osf_0^\Kcal,\Osf_1^\Kcal}}
\end{align*}
where the first equality follows from \eqref{eq:lifting_mean_value}, the second equality follows from the fact that $f'(p) = g'_p(0)$, and the only (non-equality) inequality follows from \eqref{eq:reduction_gp_bar}.
\end{proof}

In the remainder, we state and prove \cref{lem:derivative_connection,lem:g_polynomial} that are used in the above proof.
\begin{lemma}\label{lem:g_polynomial}
    For every $0<p<1$, the function $g_p(z)$ defined in \eqref{eq:def_fg} is a polynomial in $z$ of degree at most $2q$.
\end{lemma}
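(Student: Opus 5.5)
We prove that, for each fixed $b\in\bool$, the quantity $\Pr\sqbkt{\Bcal_{p,z}^{\Osf_b,\Osf_0^\Kcal,\Osf_1^\Kcal}}$ is a polynomial in $z$ of degree at most $2q$; the difference $g_p(z)$ then inherits the bound. The plan is the standard polynomial-method argument, with $G$ as the ``input'' that $\Acal$ queries. We condition on $H$, on the initial states of all oracles, and on $\Acal$'s internal randomness. Then the only $z$-dependence is through the string $G\in\bool^\Kcal$, whose entries are i.i.d.\ $\Ber(z)$.

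\emph{Step 1: fixed $G$.} For a fixed $G$, the interface $\Ocal$ is the unitary
\[
U_G=\sum_{k}\ket{k}\bra{k}_\Ksf\otimes\lrp{G(k)\,(U_{\Osf_b})_{\Qsf\Msf_\star}+(1-G(k))\,(U_{\Osf_{H(k)}})_{\Qsf\Msf_k}},
\]
where $\Msf_\star$ is the memory of the single instance $\Osf_b$. This expression is affine in the variables $\set{G(k)}_k$ (which are $0/1$-valued). The same holds for controlled queries, since the control projector simply multiplies the whole expression. We write the final state before measurement as
\[
\ket{\psi_G}=V_qU_G\cdots V_1U_GV_0\ket{\mathrm{init}}
\]
on the joint register consisting of $\Acal$'s workspace, $\Msf_\star$, and all $\Msf_k$, with purified initial states. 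Each amplitude of $\ket{\psi_G}$ is then a multilinear polynomial in $\set{G(k)}$ of degree at most $q$. The acceptance probability is $\bra{\psi_G}\Pi_1\ket{\psi_G}$. Because the $G(k)$ are real, this is a real polynomial in the $G(k)$ of degree at most $2q$. Using $G(k)^2=G(k)$, it can be reduced to a multilinear polynomial $P(G)=\sum_{S\subseteq\Kcal,\abs{S}\le 2q}c_S\prod_{k\in S}G(k)$.

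\emph{Step 2: averaging over $G$.} By independence, $\expn_{G\sim\Ber(z)^{\Kcal}}\sqbkt{\prod_{k\in S}G(k)}=z^{\abs{S}}$. Therefore
\[
\expn_G[P(G)]=\sum_S c_S z^{\abs{S}},
\]
which is a polynomial in $z$ of degree at most $2q$ whose coefficients do not depend on $z$. We then average over $H$ (drawn from $\Ber(p)$, and independent of $z$) and over the other randomness. Linearity preserves both the polynomial form and the degree bound. If $\Kcal$ is infinite or large, this is still fine since the sum over $S$ is finite for finite $\Kcal$, and $\Kcal$ is assumed finite. Finally, we take the difference between $b=1$ and $b=0$ to get $g_p(z)$.

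\emph{Main obstacle.} The delicate point is Step 1 in the stateful setting. One must check that mixing instances between $\Osf_b$ and the keyed oracles is still captured by a single unitary that is linear in $G$. This works because the memory registers $\Msf_\star$ and $\Msf_k$ are all present regardless of $G$; only which one is acted upon changes. Consequently, the registers do not depend on $G$, and the usual degree counting (one factor of $G$ per query in the ket, two per query overall in the probability) goes through. Purifying the mixed initial states $\rho^{\otimes\Kcal}$ and the oracle randomness into these registers keeps the state pure, so the amplitude argument applies directly. The restriction $0<p<1$ plays no role beyond making $H$ well-defined.
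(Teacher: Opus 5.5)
Your proposal is correct and is essentially the paper's proof. You note that each simulated query acts as $G(k)$ times a query to $\Osf_b$ plus $(1-G(k))$ times a query to $\Osf^\Kcal_{H(k)}$. Hence the acceptance probability conditioned on $G$ is a multilinear polynomial of degree at most $2q$ in $(G(k))_k$, and averaging each monomial over i.i.d.\ $\Ber(z)$ entries yields $\sum_S c_S z^{\abs{S}}$. You also state explicitly that the coefficients $c_S$ do not depend on $z$, since conditioning on $G$ removes all $z$-dependence; the paper leaves this implicit, and it is exactly what makes the result a genuine polynomial in $z$.
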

\begin{proof}
    For every $G^\circ:\Kcal\to\bool$, let
    \begin{equation}\label{eq:def_h_proof_deg}
        h_{p,z,i}(G^\circ) := \Pr\sqbkt{\Bcal_{p,z}^{\Osf_i,\Osf_0^\Kcal,\Osf_1^\Kcal}\;\Big|\; G = G^\circ}
    \end{equation}
    so that $g_p(z) = \expn_{G}\sqbkt{h_{p,z,1}(G) - h_{p,z,0}(G)}$.
    A standard argument in the polynomial method (see e.g. \cite{FOCS:BBCMW98}) implies that the function $h_{p,z,i}(G)$ is a multilinear polynomial in $(G(k))_{k\in\Kcal}$ of degree at most $2q$. Indeed, the oracle $G$ is only referenced in $\Ocal$ while running $\Bcal_{p,z}$, and each of the $q$ queries to $\Ocal$ implements the unitary
    \[
    \ket{k,x,y} \mapsto G(k) \ket{k}\otimes \Osf_i\ket{x,y}
    + (1-G(k)) \Osf^\Kcal_{H(k)}\ket{k,x,y}\;,
    \]
    which would only increase the degree of the amplitudes (as a polynomial in $(G(k))_k$) by one.
    Rewrite $h_{p,z,i}(G)$ in Fourier expansion (i.e.\ in terms of monomials):
    \begin{align*}
        h_{p,z,i}(G) = \sum_{S\subseteq\Kcal:\abs{S}\leq 2q} \widehat{h}_{p,z,i}(S) \prod_{k\in S} G(k)\;,
    \end{align*}
    and then exploiting the independence of $(G(k))_{k\in\Kcal}$, we have that
    \begin{align*}
        \expn_G\sqbkt{h_{p,z,i}(G)}
        &= \expn_G\sqbkt{\sum_{S\subseteq\Kcal:\abs{S}\leq 2q} \widehat{h}_{p,z,i}(S) \prod_{k\in S} G(k)}\\
        &= \sum_{S\subseteq\Kcal:\abs{S}\leq 2q} \widehat{h}_{p,z,i}(S) \prod_{k\in S} \expn_G\sqbkt{G(k)}
        = \sum_{S\subseteq\Kcal:\abs{S}\leq 2q} \widehat{h}_{p,z,i}(S)\cdot z^{\abs{S}} \;,
    \end{align*}
    is a polynomial in $z$ of degree at most $2q$, and hence
    \begin{align*}
        \deg_z g_p(z)
        &= \deg_z \lrp{ \expn_{G}\sqbkt{h_{p,z,1}(G)} - \expn_G\sqbkt{h_{p,z,0}(G)}}
        \leq 2q\;.
    \end{align*}
\end{proof}

\begin{lemma}\label{lem:derivative_connection}
    Let $f(p)$ and $g_p(z)$ be defined as in \eqref{eq:def_fg}. Then, for every $0<p<1$, we have
    \[
    f'(p) = g_p'(0)\;.
    \]
\end{lemma}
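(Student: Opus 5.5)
The plan is to write both sides as explicit sums over keys, by expanding in the Bernoulli functions $H$ and $G$, and then match them term by term. The key structural fact is that every instance of a base oracle in a keyed oracle has its own memory register $\Msf_k$, initialised to an independent copy of $\rho$. An instance that is never routed to stays in its product initial state and does not affect the algorithm. So what matters is only which (fresh) oracle type answers each key, not which physical copy answers it.

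\textbf{Step 1 (the derivative of $f$).} For a fixed $H^\circ:\Kcal\to\bool$, let $F(H^\circ)$ be the probability that $\Acal$ outputs $1$ when every query with key $k$ is answered by a fresh instance of $\Osf_{H^\circ(k)}$. In $\Bcal_{p,0}$ we have $G\equiv 0$, so a key $k$ is routed to the $k$-th instance of $\Osf^\Kcal_{H(k)}$. For each $k$, exactly one of the two instances indexed by $k$ (in $\Osf_0^\Kcal$ and $\Osf_1^\Kcal$) is ever touched, and all touched memories are independent copies of the right initial states. Hence
\[
f(p)=\sum_{H^\circ}p^{|H^\circ|}(1-p)^{|\Kcal|-|H^\circ|}F(H^\circ).
\]
Because $\Kcal$ is finite, differentiating this product measure term by term gives
\[
f'(p)=\sum_{k\in\Kcal}\expn_{H_{-k}}\sqbkt{F(H_{-k},1)-F(H_{-k},0)},
\]
where $H_{-k}$ has i.i.d.\ $\Ber(p)$ entries on $\Kcal\setminus\{k\}$, and $F(H_{-k},b)$ sets $H(k):=b$.

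\textbf{Step 2 (the derivative of $g_p$ at $0$).} Write
\[
g_p(z)=\sum_{G^\circ}z^{|G^\circ|}(1-z)^{|\Kcal|-|G^\circ|}\,\Delta(G^\circ),
\]
where $\Delta(G^\circ)$ is the difference of the two conditional acceptance probabilities (these are the $h_{p,z,1}-h_{p,z,0}$ of \cref{lem:g_polynomial} averaged over $H$). The factor $z^{|G^\circ|}$ kills every $G^\circ$ with $|G^\circ|\ge 2$ at order one in $z$, so
\[
g_p'(0)=\sum_k\Delta(\idx_{\{k\}})-|\Kcal|\,\Delta(0).
\]
When $G\equiv0$ the oracle $\Osf_i$ is never queried, so $\Delta(0)=0$.

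\textbf{Step 3 (matching).} With $G=\idx_{\{k\}}$, queries on key $k$ go to the single external instance $\Osf_i$, which starts in its own initial state $\rho_i$. Queries on any other key $k'$ go to the $k'$-th instance of $\Osf^\Kcal_{H(k')}$. The internal instances indexed by $k$ are then untouched. Since $\Osf_i$ is fresh and has its own memory, this is distributed exactly like $F(H_{-k},i)$ averaged over $H_{-k}$; the value $H(k)$ is irrelevant. Therefore
\[
\Delta(\idx_{\{k\}})=\expn_{H_{-k}}\sqbkt{F(H_{-k},1)-F(H_{-k},0)},
\]
and summing over $k$ reproduces Step 1's expression for $f'(p)$.

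The main obstacle is making rigorous the claim that ``which copy answers'' is irrelevant for stateful oracles. To handle it, I would exhibit an explicit relabelling of memory registers that maps one global state to the other: swap the register of $\Osf_i$ with $\Msf_k$ of $\Osf_i^\Kcal$. This map commutes with all query unitaries, including the controlled and bidirectional ones, because the untouched registers remain in the tensor-product initial state $\rho$. I would then conclude that the final measurement statistics agree. The remaining steps are finite, so sums and derivatives can be exchanged freely.
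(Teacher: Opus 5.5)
Your proposal is correct and follows essentially the same route as the paper. Both arguments expand $f$ over $H$ and $g_p$ over $G$, differentiate the product Bernoulli weights, use that $\Delta_p(\emptyset)=0$, and identify ``key $k$ answered by the external $\Osf_i$'' with ``key $k$ answered by the otherwise idle $k$-th instance of $\Osf_i^\Kcal$''; your register-swap argument makes explicit this last identification, which the paper only asserts. Strictly speaking, the swap intertwines the two query unitaries ($\mathsf{SWAP}\cdot\Ocal=\Ocal'\cdot\mathsf{SWAP}$) rather than commuting with them, and it preserves the initial state because both swapped registers start in $\rho_i$.
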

\begin{proof}
    For every $S\subseteq \Kcal$ let
    \begin{align*}
        W_{i,p}(S) := \Pr\sqbkt{\Bcal_{p,z}^{\Osf_i,\Osf_0^\Kcal,\Osf_1^\Kcal}\;\Big|\; G(k) \equiv \idx_{k\in S}}\;,\quad \Delta_{p}(S) := W_{1,p}(S) - W_{0,p}(S)\;,
    \end{align*}
    where the choice of $z$ in the above definition is irrelevant (since $z$ is used to sample $G$, which is fixed by the conditioning on $S$). We can then rewrite $g_p(z)$ as follows:
    \begin{align*}
        g_p(z)
        &= \sum_{S\subseteq\Kcal} \Pr\sqbkt{G(k) \equiv \idx_{k\in S}}\cdot \Delta_{p}(S)
        = \sum_{S\subseteq\Kcal} z^{\abs{S}}(1-z)^{\abs{\Kcal}-\abs{S}}\cdot \Delta_{p}(S)\;.
    \end{align*}
    Differentiating both sides with respect to $z$ at $0$, we get
    \[
    g_p'(0)
    = \sum_{k\in\Kcal}\Delta_{p}(\set{k}) - \abs{\Kcal}\Delta_{p}(\emptyset)
    = \sum_{k\in\Kcal}\Delta_{p}(\set{k})\;,
    \]
    where in the second equality we exploit that $\Delta_{p}(\emptyset)$ vanishes.

    Toward calculating $f'(p)$, let
    \[
    V(S) := \Pr\sqbkt{\Bcal^{\Osf_i,\Osf_0^\Kcal,\Osf_1^\Kcal}_{p,0}\;\Big|\;
    H(k)\equiv\idx_{k\in S}}\;,
    \]
    where again the choice of $i$ and $p$ in the above definition is irrelevant (since $\Osf_i$ is never queried, and $p$ is used to sample $H$ which is fixed by $S$).

    Then, by the fact that the random variables $(H(k))_{k\in\Kcal}$ are independent, we get
    \[
    f(p)
    = \sum_{S\subseteq\Kcal} \Pr\sqbkt{H(k)\equiv\idx_{k\in S}}\cdot V(S)
    = \sum_{S\subseteq\Kcal} V(S)\cdot\prod_{k\in\Kcal}\Pr\sqbkt{H(k)=\idx_{k\in S}}\;.
    \]
    Observe that, taking the derivative of $\Pr\sqbkt{H(k) = \idx_{k\in S}}$ with respect to $p$ would give us
    \[
    \frac{d}{dp}\Pr\sqbkt{H(k)=\idx_{k\in S}} = 
    \frac{d}{dp}\lrp{p\cdot\idx_{k\in S} + (1-p)\cdot\idx_{k\not\in S}}
    = \idx_{k\in S} - \idx_{k\not\in S}\;.
    \]
    Therefore, taking the derivative of $f(p)$, we obtain
    \begin{align*}
        f'(p)
        &= \sum_{k\in\Kcal}\sum_{S\subseteq\Kcal} V(S)\cdot\lrp{\idx_{k\in S} - \idx_{k\not\in S}}\prod_{k'\in\Kcal\setminus\set{k}}\Pr\sqbkt{H(k')=\idx_{k'\in S}}\;,
    \end{align*}
    Further observing that
    \begin{align*}
        &\sum_{S\subseteq\Kcal} V(S)\cdot \idx_{k\in S} \prod_{k'\in\Kcal\setminus\set{k}}\Pr\sqbkt{H(k')=\idx_{k'\in S}} = W_{1,p}(\set{k})\\
        &\sum_{S\subseteq\Kcal} V(S)\cdot \idx_{k\not\in S} \prod_{k'\in\Kcal\setminus\set{k}}\Pr\sqbkt{H(k')=\idx_{k'\in S}} = W_{0,p}(\set{k})\;,
    \end{align*}
    we immediately obtain
    \[
    f'(p) = \sum_{k\in\Kcal}\lrp{W_{1,p}(\set{k}) - W_{0,p}(\set{k})} = \sum_{k\in\Kcal}\Delta_p(\set{k}) = g_p'(0)\;,
    \]
    which concludes the proof.
\end{proof}

\subsection{On Black-box Optimality}\label{sec:blackbox_optimality}
It is easy to see that our bound in \cref{thm:ind_lift} cannot be further improved, unless relying on certain specifics of the considered oracles. Indeed, consider an oracle $\Osf_p$ parametrized by $p\in [0,1]$, where every output $\Osf_p(X)$ is set to a bit $b$ that is sampled at the beginning, with probability $p$ being one:
\[
\Osf_p(X) := b \;\;\forall X\;,\quad  b\stackrel{\text{init}}{\gets}\Ber(p)\;.
\]
Then the distinguishing advantage between $\Osf_0$ and $\Osf_\epsilon$ is at most $\epsilon$, and a simple Grover search would distinguish the keyed oracles $\Osf_0^\Kcal$ and $\Osf_\epsilon^\Kcal$ with advantage up to $\Omega(q^2\epsilon)$.

However, that is not to say our bound is tight for all choices of oracles. 
For example, the distinguishing advantage between $\Osf_{(1\pm\epsilon)/2}$ is $\epsilon$, but the distinguishing advantage between $\Osf_{(1\pm\epsilon)/2}^\Kcal$ can be further bounded by
\begin{align*}
    \SD\lrp{\Acal^{\Osf_{(1-\epsilon)/2}^\Kcal},\Acal^{\Osf_{(1+\epsilon)/2}^\Kcal}}
    &= \abs{f((1+\epsilon)/2) - f((1-\epsilon)/2)}
    \leq \int_{(1-\epsilon)/2}^{(1+\epsilon)/2} \abs{f'(p)} \,dp
\end{align*}
where $f(p) := \Pr\sqbkt{\Acal^{\Osf_p^\Kcal}}\in\Rbb[p]$ is a polynomial in $p$ of degree at most $2q$. 
Via Bernstein's inequality (\cref{eq:Bernstein}), we obtain $\abs{f'(p)} \leq 2q /\sqrt{p(1-p)}\leq 8q/\sqrt{3}$ for every $p\in[1/4,3/4]$. This would then tell us that, whenever $\epsilon\leq 1/4$,
\[
\SD\lrp{\Acal^{\Osf_{(1-\epsilon)/2}^\Kcal},\Acal^{\Osf_{(1+\epsilon)/2}^\Kcal}}
\leq \int_{(1-\epsilon)/2}^{(1+\epsilon)/2} (8q/\sqrt{3}) \,dp 
\leq 8q\epsilon/\sqrt{3}\;.
\]

We summarize the above discussion below.
\begin{proposition}
    There exists a pair of efficiently implementable oracles $\Osf_0$ and $\Osf_1$ that is $\epsilon$-indistinguishable, no matter how many queries are made, but for some oracle algorithm $\Acal$ making at most $q$ quantum queries, and a sufficiently large key space $\Kcal$, we have that
    \[
    \SD\lrp{\Acal^{\Osf_0^\Kcal}, \Acal^{\Osf_1^\Kcal}} \geq \Omega(q^2\epsilon)\;.
    \]
    On the other hand, there also exists a pair of efficiently implementable oracles $\Osf_0'$ and $\Osf_1'$ that is $\epsilon$-distinguishable in one classical query, but for every oracle algorithm $\Acal$ making at most $q$ quantum queries, we have that
    \[
    \SD\lrp{\Acal^{\Osf_0'^\Kcal}, \Acal^{\Osf_1'^\Kcal}} \leq O(q\epsilon)\;.
    \]

\end{proposition}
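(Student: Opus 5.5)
The proposition has two parts, and I will treat them separately, instantiating the discussion just before it.

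\emph{Lower bound.} Take $\Osf_0 := \Osf_{p=0}$ and $\Osf_1 := \Osf_{p=\epsilon}$. Each ignores its input and returns a bit $b$ fixed at initialization with $b\gets\Ber(p)$. These are efficiently implementable. Any algorithm's view is a function of $b$ alone, so the statistical distance is at most $\SD(\Ber(0),\Ber(\epsilon))=\epsilon$, regardless of the number of queries.

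For the keyed oracles, $\Osf_0^\Kcal$ is the all-zero function on $\Kcal$. Under $\Osf_1^\Kcal$, the marked set $\{k : b_k = 1\}$ contains each key independently with probability $\epsilon$.

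The distinguisher $\Acal$ runs Grover search (or amplitude amplification) for $j=\Theta(q)$ iterations starting from the uniform superposition over $\Kcal$. It then measures a key $k$, makes one final query, and outputs that value. Under $\Osf_0^\Kcal$ it always outputs $0$.

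Under $\Osf_1^\Kcal$, take $\abs{\Kcal}$ large. By Chernoff, the marked fraction is concentrated near $\epsilon$. Then the success probability is $\sin^2((2j+1)\theta)$ with $\sin^2\theta\approx\epsilon$. For $q^2\epsilon\le 1$ this is $\Omega(q^2\epsilon)$, and otherwise it is a constant, which is still $\Omega(\min(1,q^2\epsilon))$.

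The only care needed here is handling the random marked fraction. I would condition on the Chernoff event and use that $\sin^2((2j+1)\theta)\ge c\,j^2\sin^2\theta$ whenever $(2j+1)\theta\le\pi/2$.

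\emph{Upper bound.} Take $\Osf'_0 := \Osf_{(1-\epsilon)/2}$ and $\Osf'_1 := \Osf_{(1+\epsilon)/2}$. One classical query reveals $b$, giving advantage exactly $\SD(\Ber((1-\epsilon)/2),\Ber((1+\epsilon)/2))=\epsilon$. Every algorithm has advantage at most $\epsilon$ by the same data-processing argument as before.

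For the keyed oracles, define $f(p) := \Pr[\Acal^{\Osf_p^\Kcal}]$. I would show this is a polynomial of degree at most $2q$ in $p$. The argument follows \cref{lem:g_polynomial}: condition on the function $B(k):=b_k$. The acceptance probability is then a multilinear polynomial of degree at most $2q$ in $(B(k))_k$, since each query changes the amplitudes by terms linear in $B(k)$. Taking the expectation over independent $B(k)\sim\Ber(p)$ replaces each monomial by $p^{\abs{S}}$.

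Since $f$ maps $[0,1]$ into $[0,1]$, I would apply Bernstein's inequality \eqref{eq:Bernstein} to $f-1/2$, whose maximum absolute value is at most $1/2$ (assuming $q\ge1$). This gives $\abs{f'(p)}\le q/\sqrt{p(1-p)}\le 4q/\sqrt3$ on $[1/4,3/4]$. Integrating over the interval of length $\epsilon$ gives the $O(q\epsilon)$ bound for $\epsilon\le1/2$. For larger $\epsilon$ the bound $O(q\epsilon)$ is trivial, since it is at least a constant.

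The main obstacle is the Grover constant analysis in the first part. The second part is a direct polynomial-method computation.
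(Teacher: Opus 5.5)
Your proposal is correct and follows essentially the same route as the paper. You use the same constant-bit oracle family $\Osf_p$ and Grover search over the key space for the lower bound, and for the upper bound the degree-$2q$ polynomial $f(p)$ with Bernstein's inequality on $[1/4,3/4]$; you also fill in details the paper leaves implicit, namely concentration of the marked fraction, the $\min(1,q^2\epsilon)$ caveat, and the slightly sharper constant from applying Bernstein to $f-1/2$. One small fix: to get a constant advantage when $q^2\epsilon\gtrsim 1$ (and to keep $(2j+1)\theta\le\pi/2$ otherwise), cap the number of Grover iterations at roughly $\pi/(4\theta)$ rather than running a fixed $\Theta(q)$ of them, so the search does not overshoot.
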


\section{Compressed Ideal Cipher and an Application}\label{sec:cIC_application}

\subsection{Carolan's Compressed Permutation Oracle}
Before looking into the case of ideal cipher, we first recall Carolan's compressed permutation oracle \cite{STOC:Car26}. Essentially, it is a simulator of random permutation which enjoys useful properties for security proofs.

Consider a message space $\Xcal := \bool^{n}$ that consists of all $n$-bit strings, a random permutation $P$ is sampled uniformly from $\Sym(\bool^n)$ the set of all permutations over $\Xcal$, and allows an algorithm $\Acal$ to make queries in the form of the following unitary
\[
O_P: \ket{x,y}\mapsto \ket{x, y\oplus P(x)}\;,
\]
for every $x,y\in \Xcal$. Likewise, we also allow $\Acal$ to make backward queries, which are of the form $O_{P^{-1}}$.

The compressed permutation oracle, which we denote by $\cpO$, is a stateful simulation of $P$ that maintains the same bidirectional query access, but differs in the way each query is responded to. The internal state of $\cpO$ is maintained in a register $\Dsf$, which we call the \emph{database register}. The state of this register is supported by vectors of the form $\ket{D}$, where $D:\Xcal\to\Xcal\cup\set{\bot}$, called a \emph{database}, is a partially specified permutation over $\Xcal$, with the special symbol $\bot$ denoting that a specific function value has not been specified. Specifically, for every distinct $x,y\in\Xcal$ such that $D(x)$ and $D(y)$ are non-$\bot$, it must be that $D(x)\neq D(y)$. As a short-hand notation, it is standard to write $\abs{D}$ for the number of non-$\bot$ entries, $\Dom(D)$ for the set of $x$ such that $D(x)\neq\bot$, $\Im(D)$ for the set of such $D(x)$'s, and $(x,y)\in D$ for $D(x) = y\neq\bot$. 

Initially, the database register $D:=\ket{\bot}$ is set to the constant-$\bot$ database, and each forward query is implemented by the following unitary
\[
\cpO^{\to}:\ket{x,y,D} \mapsto \ket{x} \Csf_x\cdot\Psf_x\cdot \Csf_x\ket{y,D}\;.
\]
Here, for every $x\in \Xcal$ and database $D$ such that $D(x)=\bot$, one writes
\[
\ket{+_{x,D}} := \frac{1}{\sqrt{2^n-\abs{D}}}\sum_{y\not\in\Im(D)}\ket{D[x\mapsto y]}
\]
for the uniform superposition of all (injective) databases that agree with $D$ outside of $x$; the \emph{compression operator} $\Csf_x$ maps $\ket{D}\mapsto\ket{+_{x,D}}$ and $\ket{+_{x,D}}\mapsto\ket{D}$ and fixes the orthogonal complement, and the operator $\Psf_x$ works as follows:
\begin{align*}
    &\Psf_x:\ket{y,D} \mapsto\begin{cases}
        \ket{y\oplus D(x), D}\text{ if }D(x)\neq\bot\\
        \ket{y, D}\qquad\qquad\text{otherwise}\;.
    \end{cases}
\end{align*}
For backward queries an additional flipping operator $\Fsf:\ket{D}\mapsto\ket{D^{-1}}$ with $(x,y)\in D$ if and only if $(y,x)\in D^{-1}$ is conjugated to the $\cpO^\to$ operator. Namely, the following unitary is applied
\[
\cpO^{\gets}:\ket{x,y,D}\mapsto \ket{x}\otimes\Fsf\cdot\Csf_x\cdot\Psf_x\cdot\Csf_x\cdot\Fsf \ket{y,D}\;.
\]

Indeed, as has been shown in \cite{STOC:Car26}, and subsequently sharpened by \cite{EPRINT:Kim26,ARXIV:CarMaj26}, this simulator is indistinguishable from the random permutation. In the statement below, let
\[
\Delta_{\cpO}(n,q) := \sup_{\Acal}\SD\lrp{\Acal^{P},\Acal^{\cpO}}
\]
where the supremum is taken over all oracle algorithms $\Acal$ making at most $q$ quantum queries to either $\cpO$ or $P$ bidirectionally, and $n$ specifies the bit-length of elements in $\Xcal$.

\begin{lemma}[Bounds achieved in \cite{EPRINT:Kim26,ARXIV:CarMaj26}]\label{lem:Carolan}
    For $n,q\in\Zbb_{>0}$, we have
    \[
    \Delta_{\cpO}(n,q) \leq O\lrp{\frac{q}{2^{n/2}}}\;.
    \]
\end{lemma}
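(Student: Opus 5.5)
This lemma is taken from \cite{EPRINT:Kim26,ARXIV:CarMaj26}. We do not reprove it; we only sketch how such a bound is obtained, to show why it fits into our framework. The standard route is a hybrid over queries with a one-query error term. We would write $\cpO=\Csf^\dagger$-conjugated random permutation. More precisely, consider the naive purification of $P$, namely the uniform superposition $\sum_{\pi}\ket{\pi}/\sqrt{(2^n)!}$ with the standard oracle $O_\pi$. We would then construct a (partial) isometry $\mathsf{Comp}$ that maps the purified permutation register to the database register $\Dsf$. The goal is a per-query inequality
\[
\bigl\|\cpO\,\mathsf{Comp}\ket{\psi}-\mathsf{Comp}\,O_P\ket{\psi}\bigr\|\le O(2^{-n/2})
\]
for every state $\ket\psi$ reachable after a bounded number of queries. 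The same inequality is needed for backward queries, where we use that $\Fsf$ intertwines $\pi\mapsto\pi^{-1}$.

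The steps are as follows. First, define $\mathsf{Comp}$ so that, on a database $D$ of size $t$, the purified permutation is projected onto the span of permutations consistent with $D$ and carrying ``fresh'' (uniform) values elsewhere. Concretely, we would orthogonalize by inclusion--exclusion over the entries of $D$, in the spirit of Zhandry's $\ket{0}\mapsto\ket{+}$ basis change. Second, check that a single query changes $\abs{D}$ by at most one, and that $\Csf_x\Psf_x\Csf_x$ matches $O_P$ exactly on the components where $x$ is either already defined or freshly sampled. The error comes only from the non-uniformity of the fresh value. Conditioned on $D$, that value is uniform over $2^n-\abs{D}$ points rather than over the ideal span, and this mismatch contributes amplitude $O(\sqrt{\abs{D}}/2^{n/2})$ or, after a sharper analysis, $O(2^{-n/2})$. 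Third, sum the per-query errors by the triangle inequality over the $q$ queries. This gives trace distance $O(q/2^{n/2})$.

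The main obstacle is the second step. There we must show that $\mathsf{Comp}$ is close to an isometry with a normalization error that does not grow with $\abs{D}$. Injectivity constraints correlate the entries, so the naive product-basis argument fails; this is exactly where \cite{STOC:Car26} lost its factor $q^3/2^{n/4}$. To obtain the improved rate, I would first try the Carolan--Majenz approach: define $\mathsf{Comp}$ from a POVM on the purification and bound its Gram matrix using the representation theory of $\Sym(\bool^n)$, via the characters of Young subgroups fixing $\abs{D}$ points. Failing that, I would follow Kim's direct combinatorial estimate for $q\le 2^n/4$, which gives the explicit constant $4q/2^{n/2}$.
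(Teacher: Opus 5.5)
Your proposal matches the paper: the paper also gives no proof of this lemma and imports it directly from \cite{EPRINT:Kim26,ARXIV:CarMaj26}. For instance, Kim's explicit bound $4q/2^{n/2}$ covers $q\le 2^n/4$, and the range $q>2^n/4$ is absorbed into the constant because the distance is at most $1$. Your hybrid-over-queries sketch is only heuristic context, since the $\abs{D}$-independent per-query error it asserts is exactly the nontrivial content of those works; the citation, not the sketch, is what carries the argument.
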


\subsection{Compressed Ideal Cipher}\label{sec:cIC_def}
Now, we are ready to address the case of ideal cipher. Consider $P^\Kcal(k,x) := P_k(x)$ an oracle where $P_k$ is sampled independently from $\Sym(\Xcal)$ for each $k$. Again, we allow an algorithm to query this oracle bidirectionally, where a forward query takes as input $(k,x)\in\Kcal\times\Xcal$ and then responds $P_k(x)$, while a backward query responds $P_k^{-1}(x)$.

We analogously define the \emph{compressed ideal cipher} oracle $\cIC$, which is a stateful simulation of $P^\Kcal$ that maintains bidirectional query access. The internal database register $\Dsf$ is now supported by state vectors of the form $\ket{D}$, where $D:\Kcal\times\Xcal\to\Xcal\cup\set{\bot}$ must be a valid database. Namely, for every $k\in\Kcal$ and distinct $x,x'\in\Xcal$ such that $D(k,x)$ and $D(k,x')$ are non-$\bot$, it must be that they are distinct, i.e.\ $D(k,x)\neq D(k,x')$. For notational convenience, we will denote by $D_k(x) := D(k,x)$ the part of the database associated to the key $k$.

Initially, the database register $\Dsf$ is set to the constant-$\bot$ database, and, upon each query, the compressed ideal cipher oracle $\cIC$ would apply the operators $\cpO^{\to}$ (resp.\ $\cpO^{\gets}$) on $D_k$. That is, the forward and backward queries are respectively implemented by the following unitary operators:
\begin{align*}
    &\cIC^{\to}:\ket{k,x,y,D} \mapsto\ket{k,x} \Csf_{k,x}\cdot\Psf_{k,x}\cdot \Csf_{k,x}\ket{y,D}\\
    &\cIC^{\gets}:\ket{k,x,y,D} \mapsto\ket{k,x} \Fsf_k\cdot\Csf_{k,x}\cdot\Psf_{k,x}\cdot \Csf_{k,x}\cdot \Fsf_k\ket{y,D}\;.
\end{align*}
Here, $\Psf_{k,x}$ maps $\ket{k,x,y,D}\mapsto \ket{k,x,y\oplus D(k,x),D}$ if $D(k,x)\neq\bot$ and fixes the state otherwise; for every $(k,x)\in\Kcal\times\Xcal$ and $D$ such that $D(k,x)=\bot$, the operator $\Csf_{k,x}$ swaps the two vectors $\ket{+_{k,x,D}}$ and $\ket{D}$ and fixes the orthogonal complement, where
\[
\ket{+_{k,x,D}} := \frac{1}{\sqrt{2^n - \abs{D_k}}}\sum_{y\not\in \Im(D_k)}\ket{D[(k,x)\mapsto y]}\;;
\]
also the flipping operator $\Fsf_k$ maps $\ket{D}\mapsto\ket{D'}$ where $D'_{k'} := D_{k'}$ for every $k'\neq k$ and $D'_k = D_k^{-1}$.

As an application of our indistinguishability lifting theorem, we show that the compressed ideal cipher oracle as described above is indistinguishable from the ideal cipher $P^\Kcal$.
\begin{theorem}\label{thm:cIC}
    Let $\Acal$ be an oracle algorithm making at most $q$ quantum queries to either the compressed ideal cipher $\cIC$ or the ideal cipher $P^\Kcal$ bidirectionally, where the message space $\Xcal = \bool^n$ consists of all $n$-bit strings. Then, we have
    \[
    \SD\lrp{\Acal^\cIC,\Acal^{P^\Kcal}} \leq 8q^2\cdot \Delta_{\cpO}(n,q)\leq O\lrp{\frac{q^3}{2^{n/2}}}\;.
    \]
\end{theorem}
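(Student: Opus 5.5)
The plan is to apply \cref{thm:ind_lift} with base oracles $\Osf_0 := P$ (a random permutation on $\Xcal$, with bidirectional query access) and $\Osf_1 := \cpO$. First I will cast both as stateful oracles in the sense of \cref{sec:oracle_prelim}. For $P$, the memory register $\Msf$ holds a purification of the uniformly random permutation, say $\frac{1}{\sqrt{(2^n)!}}\sum_{\pi}\ket{\pi}$, and the query unitary is the controlled map $\ket{d,x,y,\pi}\mapsto\ket{d,x,y\oplus\pi^{(d)}(x),\pi}$, where $\pi^{(d)}$ is $\pi$ or $\pi^{-1}$ according to the direction bit $d$. For $\cpO$, $\Msf=\Dsf$ with initial state $\ket{\bot}$ and unitary $\ket{0}\bra{0}\otimes\cpO^{\to}+\ket{1}\bra{1}\otimes\cpO^{\gets}$. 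The keyed oracle $P^\Kcal$ then has initial state the tensor product of independent purified permutations, one per key, and it routes each query to the $k$-th copy. This is exactly the ideal cipher, since purification does not change the adversary's view.

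The main thing to check is that $\cpO^\Kcal$ coincides with $\cIC$. A database $D:\Kcal\times\Xcal\to\Xcal\cup\set{\bot}$ with the per-key injectivity constraint is the same thing as a tuple $(D_k)_{k\in\Kcal}$ of valid single-key databases, so we identify $\Dsf\cong\bigotimes_k\Dsf_k$, and the all-$\bot$ database corresponds to $\ket{\bot}^{\otimes\Kcal}$. Under this identification:
\begin{itemize}
\item $\ket{+_{k,x,D}}$ equals $\ket{+_{x,D_k}}$ on $\Dsf_k$ tensored with $\bigotimes_{k'\neq k}\ket{D_{k'}}$, because the normalization $2^n-\abs{D_k}$ and the range $y\notin\Im(D_k)$ depend only on $D_k$.
\item Hence $\Csf_{k,x}=(\Csf_x)_{\Dsf_k}\otimes\Ibb$, and likewise $\Psf_{k,x}=(\Psf_x)_{\Dsf_k}\otimes\Ibb$ and $\Fsf_k=\Fsf_{\Dsf_k}\otimes\Ibb$.
\item For $\Csf_{k,x}$ this needs one extra check: the orthogonal complement it fixes decomposes compatibly, since the span of $\set{\ket{D},\ket{+_{k,x,D}}}$ is the tensor product of the span of $\set{\ket{D_k},\ket{+_{x,D_k}}}$ with the other keys' basis states.
\end{itemize}
So $\cIC^{\to}$ and $\cIC^{\gets}$ are exactly $\sum_k\ket{k}\bra{k}\otimes(\cpO^{\to/\gets})_{\Qsf\Dsf_k}$, which is $U_{\cpO^\Kcal}$. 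The subtlety I expect to need care with is that $\Dsf$ as written is a single register whose support is restricted to valid databases, so the tensor identification must be stated on the valid subspace. Controlled and disabled queries are handled trivially.

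With this, \cref{thm:ind_lift} gives, for any $q$-query $\Acal$,
\[
\SD\lrp{\Acal^{P^\Kcal},\Acal^{\cIC}}\leq 8q^2\cdot\SD\lrp{\Bcal^{P,P^\Kcal,\cpO^\Kcal},\Bcal^{\cpO,P^\Kcal,\cpO^\Kcal}}.
\]
Since the theorem is information-theoretic, we can fix $\Bcal$'s internal randomness and treat the auxiliary oracles $P^\Kcal$ and $\cpO^\Kcal$ as part of the algorithm. Statistically we need no efficiency, following the remark after \cref{thm:ind_lift}. The result is an algorithm $\Bcal'$ that makes at most $q$ (controlled, bidirectional) queries to $\Osf_i\in\set{P,\cpO}$. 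Controlled queries are allowed in the definition of $\Delta_\cpO$ by the convention of \cref{sec:oracle_prelim}, so $\SD\lrp{\Bcal'^{P},\Bcal'^{\cpO}}\leq\Delta_\cpO(n,q)$.

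This yields the first inequality. The second follows from \cref{lem:Carolan}: $8q^2\cdot O(q/2^{n/2})=O(q^3/2^{n/2})$.
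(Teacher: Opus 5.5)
Your proposal is correct and follows essentially the same route as the paper: you apply \cref{thm:ind_lift} to the pair $(P,\cpO)$, internalize the auxiliary keyed oracles in the statistical setting, and identify $\cpO^\Kcal$ with $\cIC$ via $\ket{D}\mapsto\bigotimes_{k}\ket{D_k}$, which is exactly the paper's isometry $\Jcal$ intertwining $\cIC^{\to}$ and $\cIC^{\gets}$ with the corresponding queries of $\cpO^\Kcal$. Your per-operator check ($\Csf_{k,x}$, $\Psf_{k,x}$, $\Fsf_k$ acting only on the $k$-th factor) spells out the ``easy-to-verify'' intertwining that the paper only asserts, and the final asymptotic bound then follows from \cref{lem:Carolan} exactly as you say.
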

\begin{proof}
    Our lifting theorem (\cref{thm:ind_lift}) immediately implies another oracle algorithm $\Bcal$ making at most $q$ quantum queries such that
    \[
    \SD\lrp{\Acal^{\cpO^\Kcal},\Acal^{P^\Kcal}}
    \leq 8q^2\cdot \SD\lrp{\Bcal^{\cpO},\Bcal^{P}}\;.
    \]
    Hence, it suffices to argue that the keyed compressed permutation oracle $\cpO^\Kcal$, which runs one independent instance of $\cpO$ per key $k$, behaves identically to our previously defined compressed ideal cipher $\cIC$. This can be seen via considering the following isometry
    \[
    \Jcal:\ket{D}\mapsto \bigotimes_{k\in\Kcal}\ket{D_k}\;,
    \]
    which, by construction, has the easy-to-verify property that
    \begin{align*}
        &\Jcal\cdot \cIC^{\to} = \lrp{\cpO^\Kcal}^\to \cdot\Jcal\;,&
        \Jcal\cdot \cIC^{\gets} = \lrp{\cpO^\Kcal}^\gets \cdot\Jcal\;.
    \end{align*}
    This would imply that, for the hybrid game $\Hyb_i$ that replaces the first $i$ queries of $\Acal^{\cpO^\Kcal}$ to querying $\cIC$, we have that $\Hyb_{i}$ behaves identically to $\Hyb_{i+1}$ for every $i$. Hence $\Hyb_0$, which is simply $\Acal^{\cpO^\Kcal}$, behaves identically to $\Hyb_{q}$, which is $\Acal^{\cIC}$. We therefore have that
    \[
    \SD\lrp{\Acal^\cIC,\Acal^{P^\Kcal}} \leq 8q^2\cdot \Delta_{\cpO}(n,q)\leq O\lrp{\frac{q^3}{2^{n/2}}}\;,
    \]
    where the asymptotic bound is via plugging in the results that we have recalled in \cref{lem:Carolan}.
\end{proof}

\subsection{Preimage Resistance of Davies-Meyer Construction}

As an application of our compressed ideal cipher oracle, we show preimage resistance of the Davies-Meyer construction:
\[
\DM(L,R) := P^\Kcal(L,R) \oplus R\;,
\]
where we model $P^\Kcal:\Kcal\times\Xcal\to\Xcal$ as a (quantum) ideal cipher.

\begin{theorem}\label{thm:DM_preimage}
    Let $\Acal$ be an oracle algorithm making at most $q < \abs{\Xcal} - 1$ quantum queries to $P^\Kcal$ bidirectionally. Then for every $y\in\Xcal$, and for $(L,R)\gets\Acal^{P^\Kcal}(y)$, we have
    \[
    \Pr\sqbkt{\DM(L,R)=y}
    \leq \frac{32(q+1)^2 + 1}{\abs{\Xcal}-q-1} + 8(q+1)^2\Delta_{\cpO}(n,q+1)\;.
    \]
\end{theorem}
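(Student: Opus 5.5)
First I would make the check query explicit. Let $\Acal'$ run $\Acal^{P^\Kcal}(y)$, obtain $(L,R)$, then make one more forward query at $(L,R)$ and output $1$ iff the answer $z$ satisfies $z\oplus R=y$. This is a $(q+1)$-query algorithm whose output probability is exactly $\Pr[\DM(L,R)=y]$. Now switch the oracle from $P^\Kcal$ to $\cIC$. \cref{thm:cIC} with $q+1$ queries in place of $q$ shows the output probability changes by at most $8(q+1)^2\Delta_{\cpO}(n,q+1)$. This accounts for the second term, so it remains to bound $\Pr[\Acal'^{\cIC}\text{ outputs }1]$ by $(32(q+1)^2+1)/(\abs{\Xcal}-q-1)$.

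For the compressed analysis I would follow \cite{EC:CFHL21}. Let $\Pi$ be the projector onto databases $D$ that contain a ``$y$-preimage'', meaning some $(k,x)$ with $D(k,x)\neq\bot$ and $D(k,x)\oplus x=y$. I would track $\|\Pi\ket{\psi_j}\|$ after each of the $q+1$ queries, starting from $0$, since the initial database is empty. The core claim is a per-query bound: one (forward or backward) query of $\cIC$ increases this norm by at most $\sqrt{c/(\abs{\Xcal}-\abs{D_k})}$ for a small constant $c$, where $\abs{D_k}\le q+1$ throughout. To prove it I would note that a query on $(k,x)$ changes only entry $(k,x)$ of $D_k$ (for backward queries, entry $x$ of $D_k^{-1}$). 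Write $\Csf_{k,x}\Psf_{k,x}\Csf_{k,x}$ and compare it against the identity on the relevant subspace. Going from ``no preimage'' to ``preimage'' requires the new pair to satisfy $D(k,x)=x\oplus y$ (forward) or $x'=D(k,x')\oplus y$ (backward, i.e.\ the new preimage input is $y\oplus$ the queried output). Either way a single specific value out of the $\abs{\Xcal}-\abs{D_k}$ allowed values is singled out, and its amplitude in $\ket{+_{k,x,D}}$ is $1/\sqrt{\abs{\Xcal}-\abs{D_k}}$. The $\Csf$ operator is a reflection swapping two states, and the resulting cross terms give a constant factor, yielding roughly $\sqrt{16/(\abs{\Xcal}-q-1)}$ per query. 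Summing over $q+1$ queries gives $\|\Pi\ket{\psi_{q+1}}\|\le 4(q+1)/\sqrt{\abs{\Xcal}-q-1}$, and squaring contributes the $16(q+1)^2$ part. A triangle-inequality split (success probability at most $2\|\Pi\psi\|^2+2\|(1-\Pi)\text{-part}\|^2$) doubles this to $32(q+1)^2$.

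I would then handle the final measurement. If the measured $(L,R)$ has $D(L,R)$ defined and equal to $R\oplus y$, the database lies in $\Pi$. Otherwise the database holds $\bot$ at $(L,R)$ (or a non-preimage value, which gives success $0$), and the correct output is a fresh uniform value among at least $\abs{\Xcal}-q-1$ values. Since the last check query has already been absorbed, I would instead argue directly that the amplitude of the ``$\bot$ but correct'' branch is at most $1/\sqrt{\abs{\Xcal}-q-1}$, contributing the $+1$.

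I expect the main obstacle to be the per-query transition bound, especially for backward queries and for the $\Csf$ reflection acting on superpositions over $y$. I would establish it by explicitly computing $\Csf_{k,x}\ket{D}$ and $\Csf_{k,x}\ket{+_{k,x,D}}$ and bounding $\|\Pi\,\Csf\Psf\Csf(1-\Pi)\|$ on each fixed-$(k,x,D)$ block. The $\Fsf_k$ conjugation should reduce the backward case to the forward one with the preimage condition rewritten.
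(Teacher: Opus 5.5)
Your switch to $\cIC$ with $q+1$ queries and your $\PRMG$-style progress measure match the paper. Your treatment of the final check query, however, has a genuine gap.

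You assert that a recorded non-preimage value at $(L,R)$ ``gives success $0$'', and that otherwise the answer is a fresh uniform value. This is lazy-sampling intuition, and it fails for $\cIC$. The first $\Csf_{L,R}$ of the check query sends the $\ket{+_{L,R,D}}$-component of a \emph{defined} entry to the database with $D(L,R)=\bot$. On that branch $\Psf_{L,R}$ leaves the answer register at its initial value $0$, so the check reads $z=0$. That is a success exactly when $R=y$, and nothing stops the adversary from outputting $R=y$.

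The amplitude of this branch is the overlap of the entry with $\ket{+_{L,R,D}}$, and your argument never controls it. Orthogonality to $\ket{+_{k,x,D}}$ is not an invariant of $\cIC$: $\ket{+_{k,x,D}}$ depends on $\Im(D_k)$, which later queries under the same key change. So your claimed $1/\sqrt{\abs{\Xcal}-q-1}$ bound on the ``$\bot$ but correct'' branch is unjustified, and proving it would need an extra invariant you have not supplied. Even for $R\neq y$, a defined non-preimage entry produces a nonzero success amplitude through the preimage value inside $\ket{+_{L,R,D}}$, though only of order $1/\sqrt{\abs{\Xcal}-q}$.

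The paper removes this case with a secret shift. Queries are answered by $P^\Kcal\oplus S$ for a uniform secret $S$, with inverse queries adjusted accordingly. This is again an ideal cipher independent of $S$, so the target becomes $Y_S:=y\oplus S$, and the game aborts if $R=Y_S$, at cost $1/\abs{\Xcal}$. This abort, not the fresh-value branch, is where the $+1$ in the numerator comes from.

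With $x\neq y_s$ built into the success projector $\Pi^{y_s}$, one has $\Pi^{y_s}\Psf_{k,x}(\Ibb-\PRMG_{y_s})(\ket{0}\bra{0})_\Ysf=0$. Hence the $\bot$ branch can never succeed. The check query's remaining contribution is at most $\sqrt{8/(\abs{\Xcal}-q-1)}$, via the bound on $\PRMG_{y_s}\Csf_{k,x}\SZ_q(\Ibb-\PRMG_{y_s})$.

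The paper also treats the check query separately from the $q$ adversarial queries. Each adversarial query contributes $\sqrt{32/(\abs{\Xcal}-q-1)}$, which is the per-direction bound $\sqrt{8/(\abs{\Xcal}-q-1)}$ summed over both directions, since the direction is in superposition. The paper adds norms before squaring, giving $8(2q+1)^2\le 32(q+1)^2$. Your accounting ($\sqrt{16}$ per query, then doubling via $2a^2+2b^2$) is not derived, and would give $+2$ rather than $+1$ in any case.
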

\begin{proof}
    First, we consider a game $\Gsf_0$. In this game, we run $\Acal$, but substitute each forward query $(k,x)$ to $P^\Kcal$ with a query to $P^\Kcal(k,x)\oplus S$, where $S\gets\Xcal$ is sampled at the beginning of the game. Likewise, we replace the backward queries in the obvious way.
    After the output $(L,R)$ is produced by $\Acal$, the game aborts if $R = Y_S := y\oplus S$ and otherwise outputs the result of the check whether $P^\Kcal(L,R) \oplus R = Y_S$. The substitution to each query has no effect on the output distribution of $\Acal$, and since $S$ is kept secret, the abort only happens with probability at most $1/\abs{\Xcal}$. Therefore,
    \begin{equation}
        \Pr_{(L,R)\gets\Acal^{P^\Kcal}}\sqbkt{\DM(L,R) = y}
        \leq \Pr\sqbkt{\Gsf_0} + 1/\abs{\Xcal}\;.
        \label{eq:DM_preimage_G0}
    \end{equation}

    Next, we consider a game $\Gsf_1$, in which we further substitute all queries to $P^\Kcal$ in $\Gsf_0$ with queries to the compressed ideal cipher oracle $\cIC$. Due to our prior proven indistinguishability of compressed ideal cipher, we have that
    \begin{equation}\label{eq:preimage_proof_ind}
        \Pr\sqbkt{\Gsf_0}
        \leq \Pr\sqbkt{\Gsf_1} + 8(q+1)^2\Delta_{\cpO}(n,q+1)\;,
    \end{equation}
    where the $+1$ in query complexity comes from the final verification query.
    In this game, let $\ket{\phi_0}_{\Xsf\Ysf\Zsf\Dsf}$ be the initial state of $\Acal$'s query input and output registers $\Xsf$ and $\Ysf$, working register $\Zsf$, and the database register $\Dsf$ of $\cIC$, which is just a constant-$\bot$ database. Moreover, let $\ket{\phi_i}$ be the state of $\Xsf\Ysf\Zsf\Dsf$ right before the $i$th query. Then for every fixed choice $y_s$ of $Y_S$,
    the projector
    $$
    \Pi^{y_s} := \sum_{k,x:x\neq y_s}\ket{(k,x),y_s\oplus x}\bra{(k,x),y_s\oplus x}_{\Xsf\Ysf}
    $$
    would capture the final successful event of $\Acal$ outputting a preimage of $y_s$, and the game does not abort. Therefore, we have that
    \begin{equation}\label{eq:G0_preimage}
        \Pr\sqbkt{\Gsf_1\mid Y_S = y_s}
        \leq \norm{\Pi^{y_s} \cdot\cIC^\to \ket{\phi_{q+1}}_{\Xsf\Ysf\Zsf\Dsf}}^2
    \end{equation}
    where we note that $\ket{\phi_{q+1}}$ can be written in the form of $\sum_{k,x} \alpha_{k,x}\ket{k,x,0}\ket{\psi_{k,x}}_{\Zsf\Dsf}$, so that $(\ket{0}\bra{0})_\Ysf \ket{\phi_{q+1}} = \ket{\phi_{q+1}}$.
    
    Similar to \cite{EC:CFHL21}, define the following sets of databases
    \begin{align*}
        &\SZ_i := \set{D\;\big|\;\abs{D}\leq i}\;,
        &\PRMG_{y_s} := \set{D\;\big|\;\exists ((k,x),w)\in D: w\oplus x = {y_s}}\;,
    \end{align*}
    where we then abuse the notation and treat these sets as projectors onto the databases in these sets respectively.

    Throughout the game, since $\Acal$ makes at most $q$ quantum queries, it follows that $\ket{\phi_i}$ is supported by all states of database size at most $q$, i.e.\ $\SZ_q \ket{\phi_i} = \ket{\phi_i}$. 
    For every $(k,x)\in\Kcal\times\Xcal$, let
    $D$ be a database such that $D(k,x) = \bot$, and let
    $\cIC^\to_{k,x,D}$ be the restriction of $\cIC^\to$ to the invariant subspace where $\Xsf$ is fixed in the computational basis to $\ket{k,x}$, and $\Dsf$ is supported by database states $\ket{D'}$ such that $D'(k',x') = D(k',x')$ for every $(k',x')\neq (k,x)$. Given that all entries in $D'$ other than $D'(k,x)$ are fixed, the state space of $\Dsf$ is spanned by all possible $\ket{D'(k,x)}$, in which we have:
    \ifarxivstyle
        \begin{align}
            \norm{\PRMG_{y_s}\cdot\Csf_{k,x}\cdot\SZ_q \cdot(\Ibb-\PRMG_{y_s})}^2 &\leq \norm{\sum_{w\in\Xcal\cup\set{\bot}\setminus\set{y_s\oplus x}\atop w\not\in\Im(D_k)}\ket{y_s\oplus x}\bra{y_s\oplus x}\Csf_{k,x}\cdot\SZ_q \ket{w}\bra{w}}^2\notag\\
            &= \sum_{w\in\Xcal\cup\set{\bot}\setminus\set{y_s\oplus x}\atop w\not\in\Im(D_k)}\abs{\bra{y_s\oplus x}\Csf_{k,x}\cdot\SZ_q \ket{w}}^2\notag\\
            &\leq \idx_{D\in\SZ_q}\cdot \lrp{\frac{1}{\abs{\Xcal}-\abs{D_k}} + \frac{\abs{\Xcal}-\abs{D_k}}{(\abs{\Xcal}-\abs{D_k})^2}}
            \leq \frac{2}{\abs{\Xcal}-q}\;,
            \label{eq:compression_transition_bound}
        \end{align}
    \else
        \begin{align}
            &\norm{\PRMG_{y_s}\cdot\Csf_{k,x}\cdot\SZ_q \cdot(\Ibb-\PRMG_{y_s})}^2\notag\\
            &\leq \norm{\sum_{w\in\Xcal\cup\set{\bot}\setminus\set{y_s\oplus x}\atop w\not\in\Im(D_k)}\ket{y_s\oplus x}\bra{y_s\oplus x}\Csf_{k,x}\cdot\SZ_q \ket{w}\bra{w}}^2\notag\\
            &= \sum_{w\in\Xcal\cup\set{\bot}\setminus\set{y_s\oplus x}\atop w\not\in\Im(D_k)}\abs{\bra{y_s\oplus x}\Csf_{k,x}\cdot\SZ_q \ket{w}}^2\notag\\
            &\leq \idx_{D\in\SZ_q}\cdot \lrp{\frac{1}{\abs{\Xcal}-\abs{D_k}} + \frac{\abs{\Xcal}-\abs{D_k}}{(\abs{\Xcal}-\abs{D_k})^2}}
            \leq \frac{2}{\abs{\Xcal}-q}\;,
            \label{eq:compression_transition_bound}
        \end{align}
    \fi
    where the last inequality is via $\abs{D_k}\leq \abs{D}\leq q$ for every $D\in\SZ_q$, and
    the second to last inequality is via observing the fact that
    \[
        \abs{\bra{y_s\oplus x}\Csf_{k,x}\cdot\SZ_q\ket{\bot}}
        \leq \frac{1}{\sqrt{\abs{\Xcal}-\abs{D_k}}}\;,\qquad
        \abs{\bra{y_s\oplus x}\Csf_{k,x}\cdot\SZ_q\ket{w}}
        \leq \frac{1}{\abs{\Xcal}-\abs{D_k}}\;.
    \]
    Now, observing that $\Psf_{k,x}$ commutes with both $\SZ_q$ and $\PRMG_{y_s}$, and that $(\Ibb-\SZ_{q+1})\cdot\Csf_{k,x}\cdot\SZ_q = 0$, triangle inequality would give us that
    \begin{align}
        &\norm{\PRMG_{y_s}\cdot \cIC^{\to}_{k,x} \cdot\SZ_q \cdot(\Ibb-\PRMG_{y_s})}\notag\\
        &\leq \norm{\PRMG_{y_s}\cdot \Csf_{k,x} \cdot\SZ_q \cdot(\Ibb-\PRMG_{y_s})} + \norm{\PRMG_{y_s}\cdot \Csf_{k,x} \cdot\SZ_{q+1} \cdot(\Ibb-\PRMG_{y_s})}\notag\\
        &\leq \sqrt{\frac{2}{\abs{\Xcal}-q}} + \sqrt{\frac{2}{\abs{\Xcal}-q-1}} \leq \sqrt{\frac{8}{\abs{\Xcal}-q-1}}\;.
        \label{eq:transition_forward}
    \end{align}
    Likewise, by the symmetry of $\cIC$, we can similarly define $\cIC^{\gets}_{k,x,D}$ in the obvious way, and have
    \begin{equation}\label{eq:transition_back}
        \norm{\PRMG_{y_s}\cdot \cIC^{\gets}_{k,x,D} \cdot\SZ_q \cdot(\Ibb-\PRMG_{y_s})}\leq \sqrt{\frac{8}{\abs{\Xcal}-q-1}}\;.
    \end{equation}
    
    Combining \eqref{eq:transition_forward} and \eqref{eq:transition_back} and the fact that $\SZ_q\ket{\phi_i} = \ket{\phi_i}$, we have that
    \ifarxivstyle
        \begin{align*}
            \norm{\PRMG_{y_s}\ket{\phi_{i+1}}}&\leq \sum_{d\in\set{\to,\gets}}\norm{\PRMG_{y_s}\cdot \lrp{\cIC^{d}\otimes\ket{d}\bra{d}}\cdot(\Ibb-\PRMG_{y_s})\ket{\phi_{i}}}+
            \norm{\PRMG_{y_s} \ket{\phi_i}}\\
            &\leq \sqrt{\frac{32}{\abs{\Xcal} - q -1}} + \norm{\PRMG_{y_s} \ket{\phi_i}} 
        \end{align*}
    \else
        \begin{align*}
            &\norm{\PRMG_{y_s}\ket{\phi_{i+1}}}\\
            &\leq \sum_{d\in\set{\to,\gets}}\norm{\PRMG_{y_s}\cdot \lrp{\cIC^{d}\otimes\ket{d}\bra{d}}\cdot(\Ibb-\PRMG_{y_s})\ket{\phi_{i}}}+
            \norm{\PRMG_{y_s} \ket{\phi_i}}\\
            &\leq \sqrt{\frac{32}{\abs{\Xcal} - q -1}} + \norm{\PRMG_{y_s} \ket{\phi_i}} 
        \end{align*}
    \fi
    where the projector $\ket{d}\bra{d}$ acts on a designated register where $\Acal$ indicates the direction of a $\cIC$ query. Applying the inequality recursively would then give us
    \begin{equation}\label{eq:PRMG_q_plus_1}
        \norm{\PRMG_{y_s}\ket{\phi_{q+1}}}\leq q\sqrt{\frac{32}{\abs{\Xcal}-q-1}}\;.
    \end{equation}
    Therefore, in the invariant subspace of every fixed $k,x$ again, we obtain
    \ifarxivstyle
        \begin{align}
            &\norm{\Pi^{y_s} \cdot\cIC^\to \cdot\SZ_q \cdot(\Ibb-\PRMG_{y_s})\cdot(\ket{0}\bra{0})_\Ysf}\notag\\
            &\leq \norm{\Pi^{y_s} \cdot (\Csf_{k,x}\cdot\Psf_{k,x}\cdot\Csf_{k,x})\cdot\SZ_q  \cdot(\Ibb-\PRMG_{y_s})\cdot(\ket{0}\bra{0})_\Ysf}\notag\notag\\
            &\leq \norm{\Pi^{y_s} \cdot \Psf_{k,x}\cdot(\Ibb-\PRMG_{y_s})\cdot\SZ_q\cdot (\ket{0}\bra{0})_\Ysf} + \sqrt{\frac{8}{\abs{\Xcal}-q-1}}=\sqrt{\frac{8}{\abs{\Xcal}-q-1}}\;,\label{eq:preimage_final_invariant}
        \end{align}
    \else
        \begin{align}
            &\norm{\Pi^{y_s} \cdot\cIC^\to \cdot\SZ_q \cdot(\Ibb-\PRMG_{y_s})\cdot(\ket{0}\bra{0})_\Ysf}\notag\\
            &\leq \norm{\Pi^{y_s} \cdot (\Csf_{k,x}\cdot\Psf_{k,x}\cdot\Csf_{k,x})\cdot\SZ_q  \cdot(\Ibb-\PRMG_{y_s})\cdot(\ket{0}\bra{0})_\Ysf}\notag\notag\\
            &\leq \norm{\Pi^{y_s} \cdot \Psf_{k,x}\cdot(\Ibb-\PRMG_{y_s})\cdot\SZ_q\cdot (\ket{0}\bra{0})_\Ysf} + \sqrt{\frac{8}{\abs{\Xcal}-q-1}}\notag\\
            &= \sqrt{\frac{8}{\abs{\Xcal}-q-1}}\;,\label{eq:preimage_final_invariant}
        \end{align}
    \fi
    where the second inequality is via \eqref{eq:compression_transition_bound} and the fact that $\Pi^{y_s}$ and $(\ket{0}\bra{0})_\Ysf$ commute with $\Csf_{k,x}$, and the last equality is via the fact that $\Pi^{y_s}\cdot\Psf_{k,x}\cdot(\Ibb-\PRMG_{y_s})\cdot (\ket{0}\bra{0})_\Ysf = 0$. Therefore we get
    \ifarxivstyle
        \begin{align*}
            &\norm{\Pi^{y_s} \cdot\cIC^\to \ket{\phi_{q+1}}}
            \leq  \norm{\Pi^{y_s}\cdot\cIC^\to\cdot\SZ_q\ket{\phi_{q+1}}}\\
            &\leq  \norm{\Pi^{y_s}\cdot\cIC^\to\cdot\SZ_q \cdot(\Ibb-\PRMG_{y_s})\ket{\phi_{q+1}}}+ q\sqrt{\frac{32}{\abs{\Xcal}-q-1}}\leq (2q+1)\sqrt{\frac{8}{\abs{\Xcal}-q-1}}
        \end{align*}
    \else
        \begin{align*}
            &\norm{\Pi^{y_s} \cdot\cIC^\to \ket{\phi_{q+1}}}
            \leq  \norm{\Pi^{y_s}\cdot\cIC^\to\cdot\SZ_q\ket{\phi_{q+1}}}\\
            &\leq  \norm{\Pi^{y_s}\cdot\cIC^\to\cdot\SZ_q \cdot(\Ibb-\PRMG_{y_s})\ket{\phi_{q+1}}}+ q\sqrt{\frac{32}{\abs{\Xcal}-q-1}}\\
            &\leq (2q+1)\sqrt{\frac{8}{\abs{\Xcal}-q-1}}
        \end{align*}
    \fi
    where the first inequality is via $\SZ_q\ket{\phi_{q+1}} = \ket{\phi_{q+1}}$, the second inequality is via \eqref{eq:PRMG_q_plus_1}, and the last inequality is via $(\ket{0}\bra{0})_\Ysf \ket{\phi_{q+1}} = \ket{\phi_{q+1}}$ and
    \eqref{eq:preimage_final_invariant}. Combining the above with \eqref{eq:DM_preimage_G0}, \eqref{eq:preimage_proof_ind} and \eqref{eq:G0_preimage}, the proof is concluded.
\end{proof}

\section{Domain Extension of Quantum PRP}\label{sec:domain_ext}

As an application of our generic lifting theorem, we show that, given a quantum-secure PRP $\PRP_K:\bool^n\to\bool^n$ that operates on $n$-bit messages, where the subscript $K$ denotes a PRP key of length $\lambda$, there is a quantum-secure PRP $\PRP'_K$ that operates on $2n$-bit messages. 

\begin{theorem}
    Let $\PRP_K:\bool^n\to\bool^n$ be a quantum-secure strong PRP of key length $\lambda$. Then there is a quantum-secure strong PRP $\PRP'_{K_3}:\bool^{2n}\to\bool^{2n}$ of key length $2\lambda\ceil{\lambda/n}$, which makes at most $2\ceil{\lambda/n}+2$ queries to $\PRP$, such that the following holds.
    For every attacker $\Acal$ making at most $q$ bidirectional quantum queries, there is a similarly efficient attacker $\Bcal$ making at most $6q$ quantum bidirectional queries such that
    \[
    \Adv^{\RP^{\leftrightarrow}}_{\PRP'}(\Acal)
    \leq \lrp{72q^2+2\ceil{\lambda/n}}\Adv^{\RP^{\leftrightarrow}}_{\PRP}(\Bcal)
    + \frac{1 + 11664\ceil{\lambda/n} q^3 + 32770 q^4}{2^n}\;.
    \]
\end{theorem}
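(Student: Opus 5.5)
The plan is to follow the two-step modular construction of \cref{fig:domain_extension_overview}. First I instantiate $\PRF$ from $\PRP$ and build the tweakable cipher $\TWK_K(t,x):=\PRP_{\PRF_K(t)}(x)$. Second I plug $\TWK$ into the two-round tweakable Feistel domain extender of \cite{TCC:CDMS10}. The bound is then assembled from four hybrid steps.

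\textbf{Step 1 (PRF from PRP).} A PRP key has length $\lambda$, so each derived key $k_b=\PRF_K(b,\cdot)$ is formed by concatenating $\ceil{\lambda/n}$ PRP outputs on distinct counter-extended inputs $(b,L,j)$, or $(b,U,j)$. For this to fit, the input to $\PRP$ is a slightly truncated encoding; I expect the exact encoding is where the stated key length $2\lambda\ceil{\lambda/n}$ and the query count $2\ceil{\lambda/n}+2$ come from. I then switch the underlying $\PRP$ to a random permutation, at cost $\Adv^{\RP^\leftrightarrow}_{\PRP}$ per use, which accounts for the additive $2\ceil{\lambda/n}\Adv$ term. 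Next I switch the random permutation to a random function via Zhandry's PRP--PRF switching lemma \cite{Zhandry15}, at cost $O(q^3/2^n)$ per block. The adversary's $q$ queries blow up to $O(q)$ PRF evaluations per block, and I expect this to produce the $\ceil{\lambda/n}q^3/2^n$ term.

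\textbf{Step 2 (tweakable cipher via lifting).} Once $\PRF_K$ is replaced by $\RO$, the oracle $\TWK'(t,x)=\PRP_{\RO(t)}(x)$ has exactly the distribution of the keyed oracle $\PRP^{\Kcal}$ with $\Kcal$ the tweak space. Similarly, the ideal tweakable cipher is $P^\Kcal$. I apply \cref{thm:ind_lift} with $\Osf_0=\PRP$ under a random key and $\Osf_1=P$ under bidirectional access. These are stateless and efficiently samplable, so by the remark after the theorem the extra queries to $\Osf_b^\Kcal$ can be internalized, by simulating $P^\Kcal$ through $\cIC$, or through a $2q$-wise independent permutation family.

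One detail matters here. Each query of the $2n$-bit adversary uses $\TWK$ twice, and an inverse query needs uncomputation, so it uses $\TWK$ up to four or six times. The $\TWK$-adversary therefore makes up to $6q$ queries, and the lifting loss becomes $8(\text{const}\cdot q)^2$. With two independently keyed instances, one per round distinguished by the tweak prefix bit, this should give the stated $72q^2=8\cdot(3q)^2$ factor. A reduction $\Bcal$ making $6q$ queries then results.

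\textbf{Step 3 (main obstacle: two-round Feistel with ideal tweakable cipher).} I must show that $(L,R)\mapsto(V,U)$ with $U=\pi_L(R)$ and $V=\pi'_U(L)$, where $\pi,\pi'$ are independent ideal ciphers, is $O(q^4/2^n)$-indistinguishable from a random $2n$-bit permutation under quantum bidirectional queries.

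My plan is a polynomial-method argument. I would write the acceptance probability as a polynomial of degree at most $2q$ in the indicators $\idx[\pi_k(x)=y]$ and $\idx[\pi'_k(x)=y]$. I would then compute the expectation of each monomial under both the construction and a random permutation. A monomial on at most $2q$ input--output pairs of the construction gets probability $1/(2^{2n})_{m}$ under the ideal permutation. Under the construction, I would use the fact that the induced pairs factor into falling factorials $(2^n)_\cdot$ per tweak, and show that the ratio is $1\pm O(m^2/2^n)$ whenever the pairs are consistent.

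The hard part will be summing these deviations. A naive triangle inequality over monomials blows up, so I would bound the deviations through an $L^1$-type quantity on the coefficients, using that the polynomial is bounded on all valid permutations. If needed, I would group monomials by their $U$-collision pattern, so that the error appears as $O(q^2)\times O(q^2/2^n)$; this should give the $32770q^4/2^n$ term. The final $1/2^n$ term I expect to absorb the mismatch between the tweak space and the key space, or the bad event that a round function output lands in a degenerate key.

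Summing the losses from Steps 1 through 3 gives the stated bound.
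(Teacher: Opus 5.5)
\textbf{Main gap: Step~3 is restated rather than proved.} Step~3 is the technical core of the theorem, and the mechanism you sketch would not work as stated.

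First, the per-monomial comparison fails. Suppose $S$ contains two pairs with the same left input half $L$ and the same right output half $U$. Then $F_{G,H}$ gives $S$ probability exactly $0$, because $G(L,\cdot)$ is injective, while $\RP$ gives $1/(2^{2n})_{\abs{S}}$. These monomials carry the actual distinguishing signal, and they are not a $1\pm O(m^2/2^n)$ perturbation.

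Second, and more fundamentally, turning per-monomial errors into a bound on $\abs{\sum_S\widehat a(S)(\Pr_{F_{G,H}}[S]-\Pr_{\RP}[S])}$ needs control of something like $\sum_S\abs{\widehat a(S)}\Pr_{\RP}[S]$. Boundedness of $a$ on permutations gives no such control. The indicators satisfy linear relations on $\Sym(\bool^{2n})$, such as $\sum_y\idx_{\pi(x)=y}=1$. So the degree-$2q$ representative is not unique, and its coefficients can be made arbitrarily large. Grouping by $U$-collision pattern does not fix this.

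The idea you are missing is the paper's rerandomization $\pi\mapsto\tau\circ\pi\circ\sigma$, with $\sigma(L,R)=(L,P(L,R))$ and $\tau(V,U)=(Q(U,V),U)$.
\begin{itemize}
\item This map preserves both the $\RP$ and the $F_{G,H}$ distributions.
\item It collapses the acceptance probability to a degree-$2q$ polynomial $g$ of the $2^n\times 2^n$ count matrix $C^\pi_{L,U}=\abs{\set{R:\pi(L,R)\in\bool^n\times\set{U}}}$. The construction sits at the all-ones matrix $J$.
\item One then builds a univariate polynomial $f$ of degree at most $4q$ with $f(0)=g(J)$ and $f(1)=\prod_{j<2q}(1-j/2^{2n})\,\expn[g(C^{\RP})]$.
\item One bounds $\abs{f(r)}\le 2$ at integers $0\le r\le 2^n/(256q^2)$. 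This is done by relating $f(r)$ to a polynomial in $t$ whose integer values are rescaled averages of $g$ over realizable matrices $rC_t+W_{2^n-rt}$, padded with sums of random permutation matrices.
\item The Markov brothers' inequality finishes the argument.
\end{itemize}
Only values of $g$ on genuine count matrices are ever used. That is how the $32770q^4/2^n$ term is obtained without bounding any coefficients.

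\textbf{Steps~1 and~2 do not reproduce the stated parameters.}
\begin{itemize}
\item A counter-extended input $(b,L,j)$ has more than $n$ bits. Truncating it lets distinct tweaks share derived keys, which destroys the per-tweak independence the lifting step needs.
\item Counters also presuppose a shared key. That gives key length $\lambda$ and no $2\ceil{\lambda/n}$-fold hybrid.
\item The paper instead uses $2\ceil{\lambda/n}$ independent keys $K_{1,b,i}$ and sets $\PRF(b,X)$ to the truncated concatenation of the $\PRP_{K_{1,b,i}}(X)$. This gives key length $2\lambda\ceil{\lambda/n}$, the $2\ceil{\lambda/n}\Adv$ term, and $54\ceil{\lambda/n}(6q)^3=11664\ceil{\lambda/n}q^3$.
\item Your query accounting is inverted. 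The $\TWK$-adversary makes $3q$ queries: compute $U$, use it, uncompute $U$. One application of \cref{thm:ind_lift} over the whole tweak space $\bool\times\bool^n$ then gives $8(3q)^2=72q^2$.
\item The $6q$ belongs to the $\PRF$-adversary, since each $\TWK$ query computes and uncomputes its derived key. With $6q$ $\TWK$-queries you would get $288q^2$, and ``two independently keyed instances'' is not where the factor comes from.
\item The $2^{-n}$ term is the error of approximately simulating the ideal cipher when the keyed oracles in the lifting reduction are internalized. The precision is chosen so that this error survives the $8q^2$ factor as $2^{-n}$. Simulating through $\cIC$ would instead cost $8q^2\Delta_{\cpO}(n,q)$, far more than the stated bound allows.
\end{itemize}
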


We achieve this via a three-step construction, where the preparation step follows from existing results, and each of the remaining two steps is via a generic transformation that may be of independent interest.

\subsubsection{Preparation: from PRP to PRF.} First, we derive a quantum-secure PRF $\PRF_{K_1}:\bool\times\bool^n\to\bool^\lambda$ of key length $\lambda_1 := 2\lambda\ceil{\lambda/n}$ bits.
This step is trivial:
for $\ell := \ceil{\lambda/n}$ and $2\ell$ independent $\PRP$ keys $K_1 := (K_{1,b,i})_{b\in\bool,i\in[\ell]}$, we simply concatenate $\ell$ instances of $\PRP$,
\begin{equation}
    \PRF_{K_1}(b,X) := \PRP_{K_{1,b,1}}(X)\Vert\dots\Vert \PRP_{K_{1,b,\ell}}(X)\;,
\end{equation}
and then truncate the output to $\lambda$ bits. Zhandry's PRP-PRF switching lemma \cite{FOCS:Zhandry12} implies this is indeed a secure PRF. For every attacker making $q$ quantum queries, the precise bound of this step is
\begin{equation}\label{eq:PRF_from_PRP}
    \Adv^{\RO}_{\PRF}(\Acal)
    \leq 2\ceil{\lambda/n}\Adv^{\RP^{\leftrightarrow}}_{\PRP}(\Bcal) + \frac{54\ceil{\lambda/n}q^3}{2^n}\;,
\end{equation}
for a similarly efficient attacker $\Bcal$ making at most $q$ quantum queries, where the constant $27$ in each of the $2\ceil{\lambda/n}$ switching steps comes from the constant $(8\pi^2/3)$ spelled out concretely in \cite[Theorem~1]{CiC:Xagawa24}, and the factor $2\ceil{\lambda/n}$ comes from a hybrid argument over the $2\ceil{\lambda/n}$ independent keys.

\subsubsection{Step $1$: from PRP and PRF to tweakable cipher.} Second, we derive a quantum-secure tweakable cipher $\TWK_{K_2}:(\bool\times\bool^n)\times\bool^n\to\bool^n$ with tweak space $\bool\times\bool^n$ and key length $\lambda_2 := \lambda_1$ bits, from $\PRP$ and $\PRF$ as follows:
\begin{equation}\label{eq:TWK}
    \TWK_{K_2}(t,x) := \PRP_{\PRF_{K_1}(t)}(x)\;,
\end{equation}
where $K_2$ is simply the PRF key $K_1$.
It simply uses $\PRF$ to generate a key for each tweak $t$, and the goal here is to show that this construction is indistinguishable from a quantum ideal cipher oracle, which we will argue in \cref{sec:generic_tweakable} using our indistinguishability lifting theorem \cref{thm:ind_lift}.

\subsubsection{Step $2$: from tweakable cipher to PRP of double message length.} Finally, we derive a quantum-secure PRP $\PRP'_{K_3}:\bool^{2n}\to\bool^{2n}$ of key length $\lambda_3 := \lambda_2$, from $\TWK:(\bool\times\bool^n)\times\bool^n\to\bool^n$ as follows:
\begin{equation}\label{eq:extPRP}
    \PRP'_{K_3}(L,R) := (\TWK_{K_3}((1,U),L),U)\;,\quad
    U := \TWK_{K_3}((0,L),R)\;,
\end{equation}
where $K_3$ is a single $\TWK$ key, and the two rounds are separated by the first bit of the tweak. In \cref{sec:tweak2ext}, we show via the polynomial method that the construction above is quantum indistinguishable from a random permutation.

Combining \eqref{eq:PRF_from_PRP} with \cref{lem:generic_tweakable,lem:domain_extension} below proves the theorem. Indeed, the attacker against $\TWK$ in \cref{lem:domain_extension} makes at most $3q$ queries, so the attacker against $\PRF$ in \cref{lem:generic_tweakable} makes at most $6q$ queries and the attacker against $\PRP$ in \cref{lem:generic_tweakable} makes at most $3q$ queries. This gives
\begin{align*}
    \Adv^{\RP^{\leftrightarrow}}_{\PRP'}(\Acal)
    &\leq 2\ceil{\lambda/n}\Adv^{\RP^{\leftrightarrow}}_{\PRP}(\Bcal) + \frac{54\ceil{\lambda/n}(6q)^3}{2^n}\\
    &\qquad+ 8(3q)^2\Adv^{\RP^{\leftrightarrow}}_{\PRP}(\Bcal) 
    + 2^{-n}
    + \frac{32770q^4}{2^n}\;,
\end{align*}
where $\Bcal$ is the one of the two $\PRP$ attackers with the larger advantage, and makes at most $6q$ queries. The key length is $\lambda_3=\lambda_1=2\lambda\ceil{\lambda/n}$, and each evaluation of $\PRP'$ makes two evaluations of $\TWK$, each consisting of $\ceil{\lambda/n}$ calls to $\PRP$ for $\PRF$ and one further call to $\PRP$.

\begin{remark}
    For simplicity we only consider doubling the message-length. It is possible to further extend our technique toward extending input length to $k$ times.
    We simply need a $\PRF$ of input size of $O(kn)$ bits, which would allow us to construct the tweakable cipher $\TWK$ of tweak length $O(kn)$ bits in Step~$1$, thereby allowing us to perform a recursion of depth $O(\log k)$ in Step~$2$.
\end{remark}

\subsection{A Generic Tweakable Cipher Construction}\label{sec:generic_tweakable}
In the statement below, let $\PRP_K:\bool^{n}\to\bool^n$ be a quantum-secure strong PRP of key length $\lambda$ bits, and $\PRF_{K_1}:\bool^m\to\bool^{\lambda}$ be a quantum secure $\PRF$ of key length $\lambda_1$ bits. (In our application, $m=n+1$, where we identify $\bool\times\bool^n$ with $\bool^{n+1}$.) We show that the tweakable cipher construction in \eqref{eq:TWK}:
\[
\TWK_{K_2}(t,x) := \PRP_{\PRF_{K_1}(t)}(x)\;,
\]
where $K_2 := K_1\gets\bool^{\lambda_1}$, is also quantum-secure.

\begin{lemma}\label{lem:generic_tweakable}
    For every oracle algorithm $\Acal$ making at most $q$ quantum bidirectional queries, there exist similarly efficient oracle algorithms $\Bcal$ and $\Ccal$ making at most $2q$ and $q$ quantum queries respectively, such that
    \[
    \Adv^{\IC^{\leftrightarrow}}_{\TWK}(\Acal)
    \leq \Adv^{\RO}_{\PRF}(\Bcal) + 8q^2\cdot\Adv^{\RP^{\leftrightarrow}}_{\PRP}(\Ccal) + 2^{-n}\;.
    \]
    Here, queries made by $\Acal$ and $\Ccal$ are allowed to be bidirectional.
\end{lemma}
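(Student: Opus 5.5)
We prove the bound with a hybrid argument over three games:
\[
\TWK_{K_2}=\PRP_{\PRF_{K_1}(\cdot)}
\;\to\;
\PRP_{\RO(\cdot)}
\;\to\;
\PRP^{\Kcal'}
\;\to\;
\IC .
\]

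\textbf{Step 1 (PRF to random oracle).} We replace $\PRF_{K_1}$ by a random oracle $\RO:\bool^m\to\bool^\lambda$. The reduction $\Bcal$ samples its own PRP (it is given the description of $\PRP$) and answers each query $(t,x)$ of $\Acal$ by computing $\PRP_{k}(x)$ or $\PRP^{-1}_{k}(x)$, where $k$ is obtained by querying its oracle on $t$. Each of $\Acal$'s queries requires computing $k$ and later uncomputing it, so $\Bcal$ makes $2q$ quantum queries. This step costs $\Adv^{\RO}_{\PRF}(\Bcal)$.

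\textbf{Step 2 (random oracle to keyed PRP).} The resulting oracle $(t,x)\mapsto\PRP_{\RO(t)}(x)$, with bidirectional access, has exactly the distribution of the keyed oracle $\Osf_0^{\Kcal'}$ over tweak space $\Kcal':=\bool^m$. Here the base oracle $\Osf_0$ is $\PRP_K$ with $K\gets\bool^\lambda$ sampled uniformly, viewed as a stateful oracle whose memory $\rho$ is a uniform (purified) key. This holds because independent uniform keys per tweak are the same as a random oracle into $\bool^\lambda$; note that $\Osf^{\Kcal'}$ as defined in \cref{sec:oracle_prelim} samples independent copies of $\rho$ per key. The ideal tweakable cipher $\IC$ is exactly $\Osf_1^{\Kcal'}$ with $\Osf_1$ a uniform random permutation, again with bidirectional queries.

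\textbf{Step 3 (apply the lifting theorem).} We apply \cref{thm:ind_lift} with $\Kcal=\Kcal'$. This yields $\Bcal'$ making $q$ queries to $\Osf_i$, $\Osf_0^{\Kcal}$ and $\Osf_1^{\Kcal}$, with
\[
\SD\!\left(\Acal^{\Osf_0^{\Kcal}},\Acal^{\Osf_1^{\Kcal}}\right)
\le 8q^2\cdot \SD\!\left(\Bcal'^{\Osf_0,\ldots},\Bcal'^{\Osf_1,\ldots}\right).
\]
To obtain $\Ccal$ we must internalize the auxiliary oracles $\Osf_0^{\Kcal}$ and $\Osf_1^{\Kcal}$, following the remark after \cref{thm:ind_lift}. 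This is the main obstacle, since both have exponentially large key space.
\begin{itemize}
\item $\Osf_0^{\Kcal}=\PRP_{\RO(\cdot)}$ is efficiently simulable, since a random oracle can be simulated with Zhandry's compressed oracle or with a $2q$-wise independent function.
\item $\Osf_1^{\Kcal}$ is an ideal cipher. It is not efficiently implementable as given, so we substitute a simulation instead. Option (a) is the compressed ideal cipher of \cref{thm:cIC}, but that would cost an extra additive term. Option (b), which I prefer, is to simulate $\Osf_1^{\Kcal}$ by $\PRP$ itself keyed via a $2q$-wise independent function; this is perfect if the relevant oracle is replaced consistently, but changes the game. Option (c) is to use $\Osf_1^{\Kcal}(t,x)=\pi_{h(t)}(x)$, where $\pi_j$ is a random permutation on $\bool^n$ implemented by a $2q$-wise independent (almost) permutation family. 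Such families are only approximately available, which is presumably the source of the additive $2^{-n}$ term.
\end{itemize}
Concretely, I would realize option (c) via a $q$-query-indistinguishable permutation family (for instance, a statistically secure many-round Feistel with $2q$-wise independent round functions, or the one behind \cite{EPRINT:Zhandry16b}), whose statistical error is set to at most $2^{-n}$. After internalizing both auxiliary oracles, $\Ccal$ only queries $\Osf_i$, i.e.\ the real $\PRP_K$ or a random permutation, giving the term $8q^2\,\Adv^{\RP^{\leftrightarrow}}_{\PRP}(\Ccal)$ with $q$ queries.

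\textbf{Conclusion.} Summing the Step 1 cost, the lifting bound from Step 3, and the $2^{-n}$ simulation error via the triangle inequality gives the stated bound.
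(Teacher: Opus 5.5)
Your decomposition is the paper's: switch $\PRF_{K_1}$ to a random oracle using $2q$ queries, recognize $\PRP_{H(\cdot)}$ and the ideal cipher as the keyed oracles $\PRP^{\bool^m}$ and $P^{\bool^m}$, apply \cref{thm:ind_lift}, and internalize the two auxiliary keyed oracles by simulation. The gap is in the error bookkeeping of that last step.

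\textbf{Where the bound breaks.} The simulation of $\Osf_1^\Kcal=P^{\bool^m}$ runs inside the reduction produced by the lifting theorem. Its error therefore appears on the right-hand side of \eqref{eq:ind_lift} and gets multiplied by $8q^2$. It is also paid twice, once in each world, via the triangle inequality. Setting the simulation error to $2^{-n}$, as you propose, yields an additive $16q^2\cdot 2^{-n}$, not $2^{-n}$. Your closing step, which adds the simulation error outside the lifting bound, is therefore invalid.

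The paper instead bounds the inner distance by $\SD(\Ccal^{\PRP},\Ccal^{P})+2\cdot 2^{-n}/(16q^2)$. This requires the entire ideal-cipher family to be simulated within $2^{-n}/(16q^2)$. That error is obtained by a union bound over all $2^m$ tweaks, so each permutation must be simulated to precision $\delta=2^{-(n+m)}/(16q^2)$.

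\textbf{How to repair it.} The fix needs a permutation simulation whose precision can be driven down at cost polylogarithmic in $1/\delta$. The paper cites \cite[Appendix~A]{EPRINT:HHMRS26}, with overhead $O(q^2n^2(n+\log(1/\delta))^2)$. The full-security construction behind \cite{EPRINT:Zhandry16b} should also work: a shuffle that is $\delta$-close to uniform over the entire permutation table, driven by a random function that is then replaced by a sufficiently-wise independent one.

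Your other suggestions do not work as stated.
\begin{itemize}
\item A many-round Feistel with $2q$-wise independent round functions is not a safe choice. For fixed $n$ there is no known quantum bidirectional bound for it that can be pushed down to $2^{-(n+m)}/(16q^2)$. It is also never statistically close to uniform over the full table, since Feistel permutations are even.
\item The independence parameter must be twice the total number of internal function evaluations across all $q$ queries, including uncomputation, not $2q$.
\item Option (b) should be dropped, since it does not simulate the ideal cipher at all.
\end{itemize}
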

\begin{proof}
    Let $H:\bool^m\to\bool^\lambda$ be a random oracle, $P:\bool^n\to\bool^n$ be a random permutation, and $P^{\bool^m}:\bool^m\times\bool^n\to\bool^n$ be an ideal cipher. The goal of our proof is to show that, for $K_2:=K_1\gets\bool^{\lambda_1}$, we have
    \[
    \Adv^{\IC^{\leftrightarrow}}_{\TWK}(\Acal)
    = \SD\lrp{\Acal^{\TWK_{K_2}}, \Acal^{P^{\bool^m}}}
    \]
    To that end, consider an intermediate hybrid oracle, defined as follows
    \[
    \Hyb(t,x) := \PRP_{H(t)}(x)\;,
    \]
    with inverse queries defined in the obvious way.
    Now, immediately from the quantum security of $\PRF$, we can switch out the $\PRF$ underlying $\TWK$ and hence obtain
    \[
    \SD\lrp{\Acal^{\TWK_{K_2}}, \Acal^{\Hyb}}
    \leq \Adv^{\RO}_{\PRF}(\Bcal)\;,
    \]
    for a similarly efficient oracle algorithm $\Bcal$. Since in each query that $\Acal$ makes, $\Bcal$ would need to make $2$ queries to the underlying $\PRF$/$H$ oracle, it sums up to $2q$ queries in total.
    
    Next, we note that the oracle $\Hyb$ can be regarded as the oracle $\PRP^{\bool^m}$ that samples an exponential family of $\PRP$ keys $K_{t}\gets \bool^{\lambda}$, one for each tweak $t$ independently and then responds with $\PRP_{K_{t}}(x)$ on input $(t,x)$. Hence, by applying our indistinguishability lifting theorem (\cref{thm:ind_lift}), we obtain
    \begin{align*}
        &\SD\lrp{\Acal^{\Hyb}, \Acal^{P^{\bool^m}}}
        =\SD\lrp{\Acal^{\PRP^{\bool^m}}, \Acal^{P^{\bool^m}}}\\
        &\leq 8q^2\cdot \SD\lrp{\Dcal^{\PRP, \PRP^{\bool^m},P^{\bool^m}}, \Dcal^{P, \PRP^{\bool^m},P^{\bool^m}}}\\
        &\leq 8q^2\cdot \lrp{\SD\lrp{\Ccal^{\PRP}, \Ccal^{P}}+2\cdot 2^{-n}/(16q^2)}
        = 8q^2\cdot \Adv^{\RP^{\leftrightarrow}}_{\PRP}(\Ccal) + 2^{-n}\;,
    \end{align*}
    for a similarly efficient distinguishing algorithm $\Dcal$, and the final reduction algorithm $\Ccal$ that internally simulates the oracle queries to $\PRP^{\bool^m}$ and $P^{\bool^m}$ up to a small additive error. Concretely, similar to what was pointed out in \cite[Appendix~A]{EPRINT:HHMRS26}, we can simulate a random permutation up to a $\delta$ error within a runtime overhead of order $O(q^2\cdot n^2\lrp{n+\log(1/\delta)}^2)$, counting basic arithmetics as unit runtime.
    Picking $\delta := 2^{-(n+m)}/(16q^2)$ would make the simulation of the entire family $P^{\bool^{m}}$ of error bounded by $2^{-n}/(16q^2)$, within a running time overhead of order $O(q^2n^2(n+m+\log q)^2)$.

    Collecting the bounds so far, the proof is concluded.
\end{proof}

\subsection{From Tweakable Cipher to Double-Input PRP}\label{sec:tweak2ext}

In the statement below, let $\TWK_{K_3}:(\bool\times\bool^{n})\times\bool^{n}\to\bool^n$ be a quantum-secure tweakable cipher with tweak space $\bool\times\bool^n$ and key length $\lambda_3$ bits. We show that the strong PRP construction described in \cref{eq:extPRP}, which we recall below:
\[
    \PRP'_{K_3}(L, R) := (\TWK_{K_3}((1,U),L),U)\;,\quad U:=\TWK_{K_3}((0,L),R)\;,
\]
is also quantum-secure. Here, both rounds use the same key $K_3$, and are separated by the first bit of the tweak.
\begin{lemma}
    For every oracle algorithm $\Acal$ making at most $q$ quantum bidirectional queries, there exists a similarly efficient oracle algorithm $\Bcal$ making at most $3q$ quantum bidirectional queries, such that
    \[
        \Adv^{\RP^{\leftrightarrow}}_{\PRP'}(\Acal)
        \leq \Adv^{\IC^{\leftrightarrow}}_{\TWK}(\Bcal) +  \frac{32770q^4}{2^n}\;.
    \]
    \label{lem:domain_extension}
\end{lemma}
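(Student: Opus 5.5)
First, I replace $\TWK_{K_3}$ by an ideal cipher $P^{\bool\times\bool^n}$ (with bidirectional access), paying $\Adv^{\IC^{\leftrightarrow}}_{\TWK}(\Bcal)$. Each forward query to $\PRP'$ costs two $\TWK$ queries plus one uncomputation. Each backward query first reads $U$, inverts the second round to recover $L$, and then inverts the first round; this also needs three queries with uncomputation. So $\Bcal$ makes at most $3q$ queries. The two rounds use disjoint tweak halves, so it remains to show that the two-round construction
\[
\Psi(L,R)=(P_{1,U}(L),U),\quad U=P_{0,L}(R),
\]
with $P_{0,\cdot},P_{1,\cdot}$ two independent families of random permutations (an ideal cipher over tweak space $\bool^n$ each), is $32770q^4/2^n$-close to a random permutation $\Pi$ on $\bool^{2n}$ under $q$ bidirectional quantum queries.

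The plan is a polynomial-method argument, as hinted earlier in the paper. After $q$ bidirectional queries to $\Psi$ (or to $\Pi$), the acceptance probability is a polynomial of degree at most $2q$ in the indicator variables $\idx[\Psi(L,R)=(V,U)]$, i.e.\ a linear combination of monomials indexed by sets of at most $2q$ input–output pairs of the $2n$-bit permutation. By linearity, it suffices to compare, for every consistent (injective) set $T=\{((L_j,R_j),(V_j,U_j))\}_{j\le s}$ with $s\le 2q$, the probabilities $\Pr[\Psi\supseteq T]$ and $\Pr[\Pi\supseteq T]=1/(2^{2n})_s$. We have $\Psi\supseteq T$ iff $P_{0,L_j}(R_j)=U_j$ and $P_{1,U_j}(L_j)=V_j$ for all $j$. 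Grouping the pairs by $L$ for the first family and by $U$ for the second, this probability factorizes exactly into
\[
\prod_{\ell}\frac{1}{(2^n)_{a_\ell}}\cdot\prod_{u}\frac{1}{(2^n)_{b_u}},
\]
where $a_\ell$ and $b_u$ are the multiplicities, provided $T$ is consistent with both families. Consistency can fail when two pairs share $L$ and $U$ but have different $R$, or share $U$ and $L$ but have different $V$. For such $T$ we get $\Pr[\Psi\supseteq T]=0$. Generically (all $L_j$ and $U_j$ distinct) the ratio to $1/(2^{2n})_s$ is $1+O(s^2/2^{2n})$, but collisions among the $L_j$ or $U_j$ make it deviate by factors of order $2^n$ per collision.

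The main obstacle is therefore controlling the aggregate effect of these structured monomials. I would first try to write the output probability as $\expn$ over $\Pi$ of a polynomial $f$, and show that the difference equals $\sum_T \widehat f(T)\bigl(\Pr[\Psi\supseteq T]-\Pr[\Pi\supseteq T]\bigr)$. A monomial-wise bound is too weak, so I would reparametrize instead. Writing $\Pi$ itself as a two-round construction with a random function $F(L,U)$ in place of $P_{1,U}(L)$ (and similarly for the first round), the only discrepancy between $\Psi$ and $\Pi$ is the per-row and per-column injectivity constraints. I would then express the probability as a polynomial in a collision parameter $1/2^n$, in the spirit of the falling-factorial/Stirling expansion \eqref{eq:stirling} introduced in the preliminaries. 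Using the degree bound $2q$ together with \eqref{eq:polynomial_integers} or Markov brothers' inequality \eqref{eq:Markov_brother} applied over a range of "virtual" domain sizes $N\ge 2^n$ where both models agree up to $O(q^2/N)$, I would transfer the bound to $N=2^n$. The expected loss is a factor $(2q)^2$ from Markov and $O(q^2/2^n)$ from the pointwise bound, which gives $O(q^4/2^n)$; I would then track constants to reach $32770=2\cdot 16385\approx 2\cdot 2^{14}$.
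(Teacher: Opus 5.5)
Your reduction to $\SD(\Acal^{F_{G,H}},\Acal^{\RP})$ (your $\Psi$ versus $\Pi$) with $3q$ bidirectional queries matches the paper. Your instinct to use a polynomial in a collision parameter together with Markov's inequality is also in the right spirit. There is a genuine gap, however: the core argument is missing, and the two concrete steps you indicate would not work.

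First, $\RP$ cannot be rewritten as your two-round construction with random functions. If $U=G(L,R)$ is not injective in $R$, the colliding $R$'s are sent to the same $(F(U,L),U)$, so the result is not a permutation and cannot be inverted. The structural step actually needed is a rerandomization $\pi\mapsto\tau\circ\pi\circ\sigma$, with $\sigma(L,R)=(L,P(L,R))$ and $\tau(V,U)=(Q(U,V),U)$ for independent ideal ciphers $P,Q$. Both $\RP$ and $F_{G,H}$ are invariant in distribution under this map. Afterwards the acceptance probability equals $g(C^\pi)$, a degree-$2q$ polynomial in the entries of the count matrix $C^\pi_{L,U}=\abs{\set{R\;\big|\;\pi(L,R)\in\bool^n\times\set{U}}}$. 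One has $C^{F_{G,H}}=J$ exactly, while $C^{\RP}$ is a random nonnegative integer matrix with all row and column sums $2^n$. The task therefore becomes bounding $\abs{g(J)-\expn\sqbkt{g(C^{\RP})}}$. A termwise comparison fails here, since the entries of $C^{\RP}$ fluctuate by $O(1)$ around $1$.

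Second, your Markov step runs in the wrong direction and conflates boundedness with closeness. As written, if both models agreed up to $O(q^2/N)$ for all $N\geq 2^n$, you could simply take $N=2^n$ and beat the lemma. So that claim is essentially the whole statement, and nothing in your outline supports it; moreover, a fixed $\Acal$ acting on $\bool^{2n}$ has no meaning at other domain sizes. On the range $1/N\in[0,2^{-n}]$, the target is the endpoint of an interval of length $2^{-n}$, where Markov's inequality gives nothing.

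The paper instead builds a polynomial $f(r)$ of degree at most $4q$; the extra $2q$ comes from clearing falling-factorial denominators. It satisfies $f(0)=g(J)$ and $f(1)=\prod_{j<2q}(1-j/2^{2n})\,\expn\sqbkt{g(C^{\RP})}$. The paper proves only that $\abs{f(r)}\leq 2$ at integers $r\leq 2^n/(256q^2)$, i.e.\ at effective per-row sizes $t=2^n/r$ \emph{smaller} than $2^n$. This boundedness holds because of the following matrix: take $r$ times the count matrix $C_t$ of a random bijection $\bool^n\times[t]\to[t]\times\bool^n$, and pad it with a sum $W_{2^n-rt}$ of random permutation matrices. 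This matrix has all row and column sums $2^n$, so by surjectivity of $\pi\mapsto C^\pi$ it equals some $C^\pi$, and $g$ of it is a genuine probability. Non-integer $t_*=2^n/r$ is reached by a second interpolation in $t$, which is what forces $t_*\geq 256q^2$.

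All the smallness then comes from Markov's inequality on the long interval $[0,2^n/(256q^2)]$. This gives $\abs{f(0)-f(1)}\leq 2(4q)^2\cdot 4\cdot 256q^2/2^n=32768q^4/2^n$, plus $2q^2/2^{2n}$ from the normalizing product. Without the reduction to $C^\pi$ and an interpolating family realizable by actual permutations of $\bool^{2n}$, your outline does not yield a proof.
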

\begin{proof}
    Let $\Bcal^{E}$ run $\Acal$ and answer each of its queries by evaluating $\PRP'$ with $\TWK_{K_3}$ replaced by its oracle $E$. A forward query $(L,R)$ is answered by computing $U:=E((0,L),R)$ into an ancilla, XOR-ing $E((1,U),L)$ and $U$ into the answer register, and uncomputing $U$; an inverse query $(V,U)$ is answered symmetrically via $L:=E^{-1}((1,U),V)$ and $R=E^{-1}((0,L),U)$. Since the direction bit only selects which tweak, input, and direction are loaded into the query registers, $\Bcal$ makes $3$ bidirectional queries to $E$ per query of $\Acal$, and has a similar running time as $\Acal$ does. Clearly $\Bcal^{\TWK_{K_3}}$ behaves identically to $\Acal^{\PRP'_{K_3}}$. On the other hand, let $\IC\gets\Sym(\bool^n)^{\bool\times\bool^n}$ be an ideal cipher with tweak space $\bool\times\bool^n$, and let $G(L,R):=\IC((0,L),R)$ and $H(U,L):=\IC((1,U),L)$. Then $G$ and $H$ are two independent ideal ciphers over the message space $\bool^n$ and tweak space $\bool^n$, and $\Bcal^{\IC}$ behaves identically to $\Acal^{F_{G,H}}$, where
    \[
        F_{G, H}(L, R) := (H(G(L, R), L), G(L, R))\;.
    \]
    Therefore, for a random permutation $\RP\gets\Sym(\bool^{2n})$, the triangle inequality gives
    \begin{align}
        \SD\lrp{\Acal^{\PRP'_{K_3}}, \Acal^{\RP}}
        &\leq \Adv^{\IC^{\leftrightarrow}}_{\TWK}(\Bcal) + \SD\lrp{\Acal^{F_{G,H}},\Acal^{\RP}}\;.\label{eq:standard_hybrid_tweakable}
    \end{align}
    For the remainder of the proof we focus on bounding the last term of \eqref{eq:standard_hybrid_tweakable}. We will show that this term is bounded by $32770q^4/2^n$, which will complete the proof.

    \paragraph{Output probability for a fixed permutation.}
    We first observe that there is a multilinear polynomial $a$ of degree at most $2q$ in the variables $(\idx_{\pi(x)=y})_{x,y\in\bool^{2n}}$ such that, for every fixed permutation $\pi\in\Sym(\bool^{2n})$,
    \[
        \Pr\sqbkt{\Acal^{\pi}} = a\lrp{(\idx_{\pi(x)=y})_{x,y}}\;.
    \]
    Indeed, each query implements the unitary
    \[
        \ket{x,y}\mapsto\ket{x,y\oplus \pi^{\pm 1}(x)}
        = \sum_{z}\idx_{\pi^{\pm 1}(x)=y\oplus z}\ket{x,z}\;,
    \]
    where for inverse queries $\idx_{\pi^{-1}(x)=y\oplus z}=\idx_{\pi(y\oplus z)=x}$, so the entries of this unitary are linear in these variables. Hence, after $q$ queries, every amplitude is a polynomial of degree at most $q$, and the output probability, being a sum of squared magnitudes of amplitudes, is a polynomial of degree at most $2q$. Since the variables only take values in $\bool$, we may take this polynomial to be multilinear.

    \paragraph{Output probability for a rerandomized permutation.}
    Now, consider the following rerandomization procedure $\pi'\gets\Rsf(\pi)$: it independently samples $P,Q\gets\Sym(\bool^n)^{\bool^n}$ and produces
    \[
        \pi'(L,R) := (Q(U,V),U)\;,\qquad (V,U):=\pi(L,P(L,R))\;.
    \]
    In other words, $\pi'=\tau\circ\pi\circ\sigma$ for the random permutations $\sigma(L,R):=(L,P(L,R))$ and $\tau(V,U):=(Q(U,V),U)$ of $\bool^{2n}$. We have
    \begin{equation}
        \Pr\sqbkt{\Acal^{\RP}} = \Pr\sqbkt{\Acal^{\Rsf(\RP)}}\;,\qquad \Pr\sqbkt{\Acal^{F_{G,H}}} = \Pr\sqbkt{\Acal^{\Rsf(F_{G,H})}}\;.\label{eq:invariance_rerandomize}
    \end{equation}
    Indeed, $\tau\circ\RP\circ\sigma$ is again a uniformly random permutation, since $\sigma$ and $\tau$ are independent of $\RP$. Moreover, $\Rsf(F_{G,H})=F_{G',H'}$, where $G'(L,\cdot):=G(L,\cdot)\circ P(L,\cdot)$ and $H'(U,\cdot):=Q(U,\cdot)\circ H(U,\cdot)$, and $G'$ and $H'$ are again two independent ideal ciphers.

    Next, for every fixed permutation $\pi\in\Sym(\bool^{2n})$ and for $\pi'\gets\Rsf(\pi)$, we have
    \begin{align}
        \Pr\sqbkt{\Acal^{\Rsf(\pi)}}
        &= \expn\sqbkt{a\lrp{(\idx_{\pi'(x)=y})_{x,y}}}\notag\\
        &= \expn\sqbkt{\sum_{S} \widehat{a}(S)\prod_{(x,y)\in S} \idx_{\pi'(x)=y}}\notag\\
        &= \sum_{S} \widehat{a}(S) \cdot\Pr\sqbkt{\forall ((L,R),(V,U))\in S\atop\pi'(L,R)=(V,U)}\;,\label{eq:multilinear_poly_rerandomize}
    \end{align}
    where the first equality applies the previous paragraph to $\pi'$, the second expands $a$ into monomials, and the third uses linearity of expectation. Here, the summation runs over all
    \[
        S = \set{(x_1,y_1),\dots,(x_k,y_k)}\subseteq\bool^{2n}\times\bool^{2n}
    \]
    of size at most $2q$ such that the $x_i$'s are all distinct and the $y_i$'s are all distinct,\footnote{Note that the probability in \eqref{eq:multilinear_poly_rerandomize} would have vanished if $S$ cannot be completed to a permutation.} and $\widehat{a}(S)\in\Rbb$ is the coefficient corresponding to the monomial $\prod_{(x,y)\in S}\idx_{\pi'(x)=y}$.

    \paragraph{A matrix $C^\pi$ for counting.}
    Towards calculating \eqref{eq:multilinear_poly_rerandomize}, for every fixed permutation $\pi\in\Sym(\bool^{2n})$, let $C^\pi$ be a matrix whose rows and columns are indexed by $L\in\bool^n$ and $U\in\bool^n$, respectively, with each entry $C^\pi_{L,U}$ defined as follows:
    \[
        C^\pi_{L,U} := \abs{\set{R\in\bool^n\;\big|\;\pi(L,R)\in\bool^n\times\set{U}}}\;.
    \]
    Now fix a set $S=\set{(x_1,y_1),\dots,(x_k,y_k)}$ as in \eqref{eq:multilinear_poly_rerandomize}, and write $(L_i,R_i) := x_i$ and $(V_i,U_i) := y_i$ for each $i$ (where $L_i,R_i,V_i,U_i\in\bool^n$), and let
    \[
        e_{L,U}:=\abs{\set{i\;\big|\;(L,U) = (L_i,U_i)}}\;,\quad
        \rowsum{e}{L}:=\sum_U e_{L,U}\;,\quad \colsum{e}{U}:=\sum_L e_{L,U}\;.
    \]
    We will show that
    \begin{equation}\label{eq:rerandomized_partial_permutation}
        \Pr_{\pi'\gets\Rsf(\pi)}\sqbkt{\forall ((L,R),(V,U))\in S\atop\pi'(L,R)=(V,U)}
        = \frac{\prod_{L,U\in\bool^n}(C^\pi_{L,U})_{e_{L,U}}}
               {\prod_{L\in\bool^n}(2^n)_{\rowsum{e}{L}}\prod_{U\in\bool^n}(2^n)_{\colsum{e}{U}}}\;.
    \end{equation}
    For each $i$, let $\tilde R_i:=P(L_i,R_i)$, and let $\tilde V_i$ denote the first half of $\pi(L_i,\tilde R_i)$. By definition of $\pi'$, we have $\pi'(x_i)=y_i$ if and only if the second half of $\pi(L_i,\tilde R_i)$ equals $U_i$ and $Q(U_i,\tilde V_i)=V_i$. First fix $L\in\bool^n$. The $R_i$ with $L_i=L$ are distinct and $P(L,\cdot)$ is a uniform permutation, so $(\tilde R_i)_{i:L_i=L}$ is a uniformly random sequence of $\rowsum{e}{L}$ distinct elements of $\bool^n$. Among the $(2^n)_{\rowsum{e}{L}}$ such sequences, exactly $\prod_U(C^\pi_{L,U})_{e_{L,U}}$ satisfy $\pi(L,\tilde R_i)\in\bool^n\times\set{U_i}$ for all $i$ with $L_i=L$, since for each $U$ the $e_{L,U}$ values $\tilde R_i$ with $U_i=U$ must be distinct elements of the set counted by $C^\pi_{L,U}$. Next fix $U\in\bool^n$, and condition on $P$. The pairs $(L_i,\tilde R_i)$ are distinct, so the values $\tilde V_i$ with $U_i=U$ are distinct since $\pi$ is injective, and so are the $V_i$ with $U_i=U$ since the $y_i$ are distinct. Since $Q(U,\cdot)$ is a uniform permutation independent of $P$, the event $Q(U,\tilde V_i)=V_i$ for all $i$ with $U_i=U$ has probability $1/(2^n)_{\colsum{e}{U}}$. Multiplying over all $L$ and $U$, using the independence of the permutations $P(L,\cdot)$ and $Q(U,\cdot)$, gives \eqref{eq:rerandomized_partial_permutation}.

    Therefore, combined with \eqref{eq:multilinear_poly_rerandomize}, we have
    \begin{align}
        \Pr\sqbkt{\Acal^{\Rsf(\pi)}}
        &= \sum_{S} \widehat{a}(S) \cdot \frac{\prod_{L,U\in\bool^n}(C^\pi_{L,U})_{e_{L,U}}}{\prod_{L\in\bool^n}(2^n)_{\rowsum{e}{L}}\prod_{U\in \bool^n}(2^n)_{\colsum{e}{U}}}
        =: g(C^{\pi})\;,\label{eq:def_g_block_matrix}
    \end{align}
    which is a polynomial of degree at most $2q$ in the entries of $C^{\pi}$, since $\sum_{L,U}e_{L,U}=\abs{S}\leq 2q$.

    \paragraph{Some useful properties of $C^\pi$.}
    Before moving ahead, we establish some basic facts about the matrix $C^\pi$, whose proofs are deferred to \cref{sec:remaining_claim_proofs}.

    \begin{restatable}{claim}{claimMatrixWhole}\label{claim:matrix_C_whole}
        The transformation $\pi\mapsto C^\pi$ forms a surjection from
        \[
            \Sym(\bool^{2n}) \twoheadrightarrow \mathscr{C}\;,
        \]
        where $\mathscr{C}$ is the set of all $2^n\times 2^n$ matrices over $\Zbb_{\geq 0}$ whose row sums and column sums are all equal to $2^n$.
    \end{restatable}
    
    \begin{restatable}{claim}{claimMatrixParts}\label{claim:matrix_C_parts}
        For every $G,H\in\Sym(\bool^n)^{\bool^n}$, the matrix $C^{F_{G,H}}$ coincides with the all-ones matrix $J$, i.e., $J_{L,U}:=1$ for all $L,U\in\bool^n$.
    \end{restatable}
    
    \paragraph{An auxiliary polynomial $f(r)$.}
    Combining \eqref{eq:invariance_rerandomize}, \eqref{eq:def_g_block_matrix}, and \cref{claim:matrix_C_parts}, we obtain
    \begin{align}
        \SD\lrp{\Acal^{F_{G,H}},\Acal^{\RP}}
        = \abs{\expn\sqbkt{g(C^{F_{G,H}})} - \expn\sqbkt{g(C^{\RP})} }
        = \abs{g(J) - \expn\sqbkt{g(C^{\RP})} }\;,\label{eq:SD_g_expected}
    \end{align}
    where the first equality uses \eqref{eq:invariance_rerandomize} together with $\Pr\sqbkt{\Acal^{\Rsf(\pi)}}=g(C^\pi)$ from \eqref{eq:def_g_block_matrix}, and the second uses \cref{claim:matrix_C_parts}. We recall from \eqref{eq:def_g_block_matrix} that $g$ is a polynomial of degree at most $2q$. Hence, with the monomial notation
    \[
        \mpow{C}{w}:=\prod_{L,U}C_{L,U}^{w_{L,U}}\;,
    \]
    we can write
    \[
        g(C)=\sum_{w}\widehat{g}(w)\,\mpow{C}{w}\;,
    \]
    where $w$ ranges over $\Zbb_{\geq 0}^{2^n\times 2^n}$ with $\abs{w}\leq 2q$ and $\widehat{g}(w)\in\Rbb$.
    
    We use the following notation for matrices $A,w\in\Zbb_{\geq 0}^{2^n\times 2^n}$ (with rows and columns indexed by $\bool^n$): $\rowsum{A}{L}:=\sum_U A_{L,U}$, $\colsum{A}{U}:=\sum_L A_{L,U}$, $\abs{A}:=\sum_{L,U}A_{L,U}$, and $A\leq w$ if $A_{L,U}\leq w_{L,U}$ for all $L,U$. We also write
    \[
        \stirling{w}{A}:=\prod_{L,U}\stirling{w_{L,U}}{A_{L,U}}\;.
    \]
    For real numbers $x,r\in\Rbb$ and an integer $k\geq 0$ we write $\gfall{x}{r}{k}:=\prod_{0\leq j<k}(x-jr)$, so that $\gfall{x}{1}{k}=(x)_k$ and $\gfall{x}{0}{k}=x^k$.
    Writing
    \begin{align*}
        \gamma_A(r)
        &:= \frac{1}{\gfall{2^{2n}}{r}{\abs{A}}}\;\prod_L\gfall{2^n}{r}{\rowsum{A}{L}}\prod_U\gfall{2^n}{r}{\colsum{A}{U}}\;,\\
        \tau_{w,A}(r) &:= \frac{\gfall{2^{2n}}{r}{2q}}{2^{4nq}} \widehat{g}(w) \stirling{w}{A} r^{\abs{w}-\abs{A}} \gamma_A(r) \;,
    \end{align*}
    we define the function $f$ as
    \begin{equation}
        f(r) := \sum_{w}\sum_{A\leq w}\tau_{w,A}(r)\;.
        \label{eq:domain_extension_def_f}
    \end{equation}
    We make the following claim about $f$, which we prove in \cref{sec:remaining_claim_proofs}.
    
    \begin{restatable}{claim}{claimDomainPolynomials}\label{claim:domain_polynomials}
        The function $f$ defined in \eqref{eq:domain_extension_def_f} is a polynomial of degree at most $4q$, and it satisfies
        \begin{equation}\label{eq:domain_extension_endpoints}
            f(0)=g(J)\;,\qquad
            f(1)=\prod_{0\leq j<2q}\lrp{1-\frac{j}{2^{2n}}}
            \cdot\expn\sqbkt{g(C^\RP)}\;,
        \end{equation}
        and $\abs{f(r)}\leq 2$ for every $r\in\set{0,\dots,\floor{2^n/(256q^2)}}$.
    \end{restatable}

    Assume $2^{2n}\geq 2q$, since otherwise $32770q^4/2^n>1$ and the lemma holds trivially. Then every factor $1-j/2^{2n}$ with $0\leq j<2q$ is positive, so we have
    \begin{align}
        \SD\lrp{\Acal^{F_{G,H}},\Acal^\RP}
        &= \abs{f(0) - \frac{f(1)}{\prod_{0\leq j<2q}\lrp{1-\frac{j}{2^{2n}}}}}\notag\\
        &\leq \abs{f(0) - f(1)} + \abs{f(1)}\cdot \lrp{\frac{1}{\prod_{0\leq j<2q}\lrp{1-\frac{j}{2^{2n}}}} - 1}\notag\\
        &= \abs{f(0) - f(1)} + \expn\sqbkt{g(C^{\RP})}\cdot\lrp{1-\prod_{0\leq j<2q}\lrp{1-\frac{j}{2^{2n}}}}\notag\\
        &\leq \abs{f(0) - f(1)} + \sum_{0\leq j<2q}\frac{j}{2^{2n}}
        \leq \abs{f(0) - f(1)} + \frac{2q^2}{2^{2n}}\;.\label{eq:SD_f0_f1}
    \end{align}
    The first inequality follows from triangle inequality, the second inequality follows from the fact that $0\leq \expn\sqbkt{g(C^\RP)}\leq 1$ together with $1-\prod_j(1-a_j)\leq\sum_j a_j$ for $a_j\in[0,1]$, the last inequality follows from $\sum_{0\leq j<2q}j=q(2q-1)\leq 2q^2$, and the two equalities follow from \cref{eq:domain_extension_endpoints}.

    \paragraph{Bounding $\abs{f(0) - f(1)}$.}
    To complete the proof, it suffices to bound $\abs{f(0)-f(1)}$, for which we use the Markov brothers' inequality twice. We will show that $\abs{f(0)-f(1)}\leq 32768q^4/2^n$ whenever $2^n\geq 16384q^4$. Let $T:=2^n/(256q^2)$, so that $T\geq 64q^2\geq 1$ under this assumption. By the mean value theorem and \eqref{eq:Markov_brother} on the interval $[0,T]\supseteq[0,1]$, we have
    \[
        \abs{f(0)-f(1)}
        \leq \max_{r\in[0,1]}\abs{f'(r)}
        \leq \frac{2(\deg f)^2}{T}\max_{r\in[0,T]}\abs{f(r)}
        \leq \frac{32q^2}{T}\max_{r\in[0,T]}\abs{f(r)}\;,
    \]
    where the last step uses $\deg f\leq 4q$ from \cref{claim:domain_polynomials}. To bound the remaining maximum, we apply \eqref{eq:polynomial_integers} to $f$ on the interval $[0,T]$, which is possible since $T\geq 64q^2\geq 4(\deg f)^2$. The integers in $[0,T]$ are exactly $r=0,\dots,\floor{T}$, and at each of them $\abs{f(r)}\leq 2$ by \cref{claim:domain_polynomials}. Therefore
    \[
        \max_{r\in[0,T]} \abs{f(r)}\leq 2\max_{r\in[0,T]\cap\Zbb}\abs{f(r)} \leq 4\;,
    \]
    and hence $\abs{f(0)-f(1)}\leq 128q^2/T=32768q^4/2^n$. Combining this with \eqref{eq:SD_f0_f1}, we get
    \[
        \SD\lrp{\Acal^{F_{G,H}},\Acal^{\RP}}
        \leq\abs{f(0)-f(1)} + \frac{2q^2}{2^{2n}}
        \leq \frac{32768q^4}{2^n} + \frac{2q^2}{2^{2n}}
        \leq \frac{32770q^4}{2^n}
    \]
    whenever $2^n\geq 16384q^4$. Otherwise, the right-hand side exceeds $1$, and the bound holds trivially. Plugging this into \eqref{eq:standard_hybrid_tweakable}, we conclude that
    \[
        \Adv^{\RP^{\leftrightarrow}}_{\PRP'}(\Acal)
        \leq \Adv^{\IC^{\leftrightarrow}}_{\TWK}(\Bcal)+ \frac{32770q^4}{2^n}\;,
    \]
    which completes the proof of \cref{lem:domain_extension}.
\end{proof}

\subsection{Remaining Proofs of the Claims}\label{sec:remaining_claim_proofs}

\claimMatrixWhole*
\begin{proof}[Proof~of~\cref{claim:matrix_C_whole}]
    We first show that $C^\pi\in\mathscr{C}$ for every $\pi\in\Sym(\bool^{2n})$. For every $U$, the column sum $\sum_L C^\pi_{L,U}$ is the number of pairs $(L,R)$ such that $\pi(L,R) \in \bool^n\times\set{U}$, which is $2^n$ since $\pi$ is a bijection. Likewise, for every $L$, the row sum $\sum_U C^\pi_{L,U}$ is the number of $R$ such that $\pi(L,R)\in\bool^{2n}$, which is $2^n$.

    We now show surjectivity: given any matrix $C\in\mathscr{C}$, we construct $\pi\in\Sym(\bool^{2n})$ such that $C^\pi = C$. First, for every $L\in\bool^n$, we pick a partition $\set{S_{L,U}}_{U\in\bool^n}$ of $\bool^n$ such that $\abs{S_{L,U}} = C_{L,U}$, which exists since the entries of row $L$ of $C$ are nonnegative integers summing to $2^n$. Likewise, for every $U\in\bool^n$, we pick a partition $\set{S'_{L,U}}_{L\in\bool^n}$ of $\bool^n$ such that $\abs{S'_{L,U}}=C_{L,U}$ for every $L$, which exists since column $U$ of $C$ sums to $2^n$. Next, for every $L,U\in\bool^n$ we pick any bijection
    \[
        \pi_{L,U}:\set{L}\times S_{L,U} \to S'_{L,U}\times\set{U}\;,
    \]
    which exists since $\abs{S_{L,U}} = C_{L,U} = \abs{S'_{L,U}}$. We then assemble these bijections into $\pi$ via
    \[
        \pi(L,R) := \pi_{L,U}(L,R)\;,
    \]
    where $U$ is the unique string such that $R\in S_{L,U}$; it is unique since $\set{S_{L,U}}_U$ is a partition of $\bool^n$. To see that $\pi$ is a bijection, note that for every $U$ the sets $\set{S'_{L,U}}_{L}$ form a partition of $\bool^n$, so every $(V,U)\in\bool^{2n}$ lies in exactly one of the sets $\Im(\pi_{L,U})=S'_{L,U}\times\set{U}$, and each $\pi_{L,U}$ is a bijection onto its image. Finally, by construction exactly $\abs{S_{L,U}}=C_{L,U}$ inputs from row $L$ are mapped into column $U$, and hence $C^\pi=C$.
\end{proof}

\claimMatrixParts*
\begin{proof}[Proof~of~\cref{claim:matrix_C_parts}]
    Fix $L,U\in\bool^n$. The second half of $F_{G,H}(L,R)$ equals $U$ exactly when $G(L,R)=U$. Since $G(L,\cdot)$ is a permutation, there is exactly one such $R$, namely $R=G^{-1}(L,U)$, where $G^{-1}(L,\cdot)$ denotes the inverse of $G(L,\cdot)$. Thus $C^{F_{G,H}}_{L,U}=1$ for every $L,U$, and hence $C^{F_{G,H}}=J$.
\end{proof}

\claimDomainPolynomials*
\begin{proof}[Proof~of~\cref{claim:domain_polynomials}]
    In addition to the notation introduced before \eqref{eq:domain_extension_def_f}, for matrices $A,B,w\in\Zbb_{\geq 0}^{2^n\times 2^n}$ and any matrix $C$ we write
    \[
        \binom{w}{B}:=\prod_{L,U}\binom{w_{L,U}}{B_{L,U}}\;,\qquad
        \mfall{C}{A}:=\prod_{L,U}(C_{L,U})_{A_{L,U}}\;.
    \]
    Since $\abs{A}\leq 2q$, each denominator $\gfall{2^{2n}}{r}{\abs{A}}$ in $\gamma_A(r)$ divides $\gfall{2^{2n}}{r}{2q}$, so each $\tau_{w,A}$ is a polynomial in $r$, and hence so is $f$. More precisely, using
    \[
        \gfall{2^{2n}}{r}{2q}=\gfall{2^{2n}}{r}{\abs{A}}\prod_{\abs{A}\leq j<2q}(2^{2n}-jr)\;,
    \]
    we can write
    \begin{equation}\label{eq:tau_polynomial}
        \tau_{w,A}(r)=\frac{\widehat{g}(w)}{2^{4nq}}\stirling{w}{A}\,r^{\abs{w}-\abs{A}}
        \prod_{L}\gfall{2^n}{r}{\rowsum{A}{L}}\prod_{U}\gfall{2^n}{r}{\colsum{A}{U}}
        \prod_{\abs{A}\leq j<2q}(2^{2n}-jr)\;.
    \end{equation}
    Since $\sum_L\rowsum{A}{L}=\sum_U\colsum{A}{U}=\abs{A}$, each $\tau_{w,A}$ has degree at most $(\abs{w}-\abs{A})+2\abs{A}+(2q-\abs{A})=2q+\abs{w}\leq 4q$ in $r$, so $\deg f\leq 4q$. Applying \eqref{eq:stirling} entrywise gives
    \ifarxivstyle
        \begin{align}
            \mpow{C}{w}=\prod_{L,U}\lrp{\sum_{A_{L,U}\leq w_{L,U}}\stirling{w_{L,U}}{A_{L,U}}(C_{L,U})_{A_{L,U}}}=\sum_{A\leq w}\lrp{\prod_{L,U}\stirling{w_{L,U}}{A_{L,U}}(C_{L,U})_{A_{L,U}}}=\sum_{A\leq w}\stirling{w}{A}\mfall{C}{A}\;.
            \label{eq:stirling_matrix}
        \end{align}
    \else
        \begin{align}
            \mpow{C}{w}&=\prod_{L,U}\lrp{\sum_{A_{L,U}\leq w_{L,U}}\stirling{w_{L,U}}{A_{L,U}}(C_{L,U})_{A_{L,U}}}\nonumber\\
            &=\sum_{A\leq w}\lrp{\prod_{L,U}\stirling{w_{L,U}}{A_{L,U}}(C_{L,U})_{A_{L,U}}}=\sum_{A\leq w}\stirling{w}{A}\mfall{C}{A}\;.
            \label{eq:stirling_matrix}
        \end{align}
    \fi

    \paragraph{The value $f(0)$.} With the convention $0^0=1$, $\tau_{w,A}(0)=0$ unless $\abs{A}=\abs{w}$, and together with $A\leq w$ this forces $A=w$. Since $\stirling{w}{w}=1$ and $\gfall{2^n}{0}{k}=2^{nk}$,
    \[
        \tau_{w,w}(0)=\frac{\widehat{g}(w)}{2^{4nq}}\cdot 2^{n\abs{w}}\cdot 2^{n\abs{w}}\cdot 2^{2n(2q-\abs{w})}=\widehat{g}(w)\;.
    \]
    Hence $f(0)=\sum_w\widehat{g}(w)=g(J)$, since every monomial $\mpow{C}{w}$ equals $1$ at $C=J$.

    \paragraph{The matrices $C_t$.} For an integer $t\geq 1$, let $\sigma_t:\bool^n\times[t]\to[t]\times\bool^n$ be a uniformly random bijection, and let $C_t$ be the $2^n\times 2^n$ matrix with entries
    \[
        C_{t,L,U}:=\abs{\set{i\in[t]\;\big|\;\sigma_t(L,i)\in[t]\times\set{U}}}\;.
    \]
    Every row and column of $C_t$ sums to $t$. Identifying $[2^n]$ with $\bool^n$, $\sigma_t$ is a uniform element of $\Sym(\bool^{2n})$ when $t=2^n$, so $C_{2^n}$ is distributed as $C^{\RP}$. We show that for every $A\in\Zbb_{\geq 0}^{2^n\times 2^n}$,
    \begin{equation}\label{eq:domain_extension_matrix_moments}
        (2^nt)_{\abs{A}}\cdot\expn\sqbkt{\mfall{C_t}{A}}
        =\prod_L(t)_{\rowsum{A}{L}}\prod_U(t)_{\colsum{A}{U}}\;.
    \end{equation}
    If $\rowsum{A}{L}>t$ for some $L$, both sides vanish: the right-hand side contains $(t)_{\rowsum{A}{L}}=0$, and since row $L$ of $C_t$ sums to $t<\rowsum{A}{L}$, some $A_{L,U}>C_{t,L,U}$, so $(C_{t,L,U})_{A_{L,U}}=0$. The same holds if $\colsum{A}{U}>t$ for some $U$. Otherwise $\abs{A}\leq 2^nt$. Write $\mathrm{In}_L:=\set{L}\times[t]$ and $\mathrm{Out}_U:=[t]\times\set{U}$. The factor $(C_{t,L,U})_{A_{L,U}}$ counts the ordered $A_{L,U}$-tuples of distinct elements of $\mathrm{In}_L$ mapped by $\sigma_t$ into $\mathrm{Out}_U$. Elements counted for different $(L,U)$ are distinct, since they lie in different $\mathrm{In}_L$ or are mapped into different $\mathrm{Out}_U$. Hence $\mfall{C_t}{A}$ is the number of pairs $(\mathbf{x},\mathbf{y})$ of families $\mathbf{x}=(x_{L,U,i})$ and $\mathbf{y}=(y_{L,U,i})$, indexed by $L,U$ and $i\leq A_{L,U}$, such that the $x_{L,U,i}\in\mathrm{In}_L$ are pairwise distinct, the $y_{L,U,i}\in\mathrm{Out}_U$ are pairwise distinct, and $\sigma_t(x_{L,U,i})=y_{L,U,i}$ for all indices. There are $\prod_L(t)_{\rowsum{A}{L}}$ choices of $\mathbf{x}$ and $\prod_U(t)_{\colsum{A}{U}}$ choices of $\mathbf{y}$, and for each such pair, the uniform bijection $\sigma_t$ on a set of size $2^nt$ satisfies the $\abs{A}$ prescribed equalities with probability $1/(2^nt)_{\abs{A}}$. Linearity of expectation gives \eqref{eq:domain_extension_matrix_moments}.

    \paragraph{The value $f(1)$.} Fix $w$ and $A\leq w$, so $\abs{A}\leq\abs{w}\leq 2q$. Using $\gfall{x}{1}{k}=(x)_k$, $(2^{2n})_{2q}=(2^{2n})_{\abs{A}}\prod_{\abs{A}\leq j<2q}(2^{2n}-j)$ and \eqref{eq:domain_extension_matrix_moments} with $t=2^n$,
    \ifarxivstyle
        \begin{align*}
            \tau_{w,A}(1)=\frac{\widehat{g}(w)}{2^{4nq}}\stirling{w}{A}\prod_L(2^n)_{\rowsum{A}{L}}\prod_U(2^n)_{\colsum{A}{U}}\prod_{\abs{A}\leq j<2q}(2^{2n}-j)=\frac{(2^{2n})_{2q}}{2^{4nq}}\,\widehat{g}(w)\stirling{w}{A}\,\expn\sqbkt{\mfall{C_{2^n}}{A}}\;.
        \end{align*}
    \else
        \begin{align*}
            \tau_{w,A}(1)
            &=\frac{\widehat{g}(w)}{2^{4nq}}\stirling{w}{A}\prod_L(2^n)_{\rowsum{A}{L}}\prod_U(2^n)_{\colsum{A}{U}}\prod_{\abs{A}\leq j<2q}(2^{2n}-j)\\
            &=\frac{(2^{2n})_{2q}}{2^{4nq}}\,\widehat{g}(w)\stirling{w}{A}\,\expn\sqbkt{\mfall{C_{2^n}}{A}}\;.
        \end{align*}
    \fi
    We have $(2^{2n})_{2q}/2^{4nq}=\prod_{0\leq j<2q}(1-j/2^{2n})$. Summing over $A\leq w$ using \eqref{eq:stirling_matrix}, then over $w$, yields
    \[
        f(1)=\prod_{0\leq j<2q}\lrp{1-\frac{j}{2^{2n}}}\cdot\expn\sqbkt{g(C_{2^n})}
        =\prod_{0\leq j<2q}\lrp{1-\frac{j}{2^{2n}}}\cdot\expn\sqbkt{g(C^{\RP})}\;.
    \]

    \paragraph{Bounding $\abs{f(r)}$: the padded family $f_r$.} For $r=0$, $\abs{f(0)}=g(J)\leq 1$, since $J=C^{F_{G,H}}$ by \cref{claim:matrix_C_parts} and $g(C^\pi)=\Pr\sqbkt{\Acal^{\Rsf(\pi)}}\in[0,1]$ by \eqref{eq:def_g_block_matrix}. Now fix an integer $1\leq r\leq 2^n/(256q^2)$, and put $t_*:=2^n/r\geq 256q^2$, which need not be an integer. For integers $s\geq 0$, let $W_s$ be the sum of $s$ independent uniformly random $2^n\times 2^n$ permutation matrices ($W_0:=0$). For every integer $t\in[1,t_*]$, sample $C_t$ and $W_{2^n-rt}$ independently and define
    \begin{equation}\label{eq:domain_extension_def_f_r}
        f_r(t):=\frac{(2^nt)_{2q}}{(2^nt_*)_{2q}}\cdot\expn\sqbkt{g(rC_t+W_{2^n-rt})}\;.
    \end{equation}
    The matrix $rC_t+W_{2^n-rt}$ has nonnegative integer entries and all row and column sums equal to $rt+(2^n-rt)=2^n$, so by \cref{claim:matrix_C_whole} it equals $C^\pi$ for some $\pi$, and $g$ takes a value in $[0,1]$ on it. Since $2q\leq 2^n\leq 2^nt\leq 2^nt_*$, the ratio in \eqref{eq:domain_extension_def_f_r} lies in $(0,1]$. Hence $0\leq f_r(t)\leq 1$ for every integer $t\in[1,t_*]$.

    \paragraph{Moments of $W_s$.} Fix $B\in\Zbb_{\geq 0}^{2^n\times 2^n}$ and list the positions $(L,U)$ with multiplicity $B_{L,U}$ as $(L_1,U_1),\dots,(L_k,U_k)$, where $k=\abs{B}$; that is, each position $(L,U)$ appears exactly $B_{L,U}$ times in the list. Write $W_s=\sum_{\ell=1}^sM_\ell$ with $M_1,\dots,M_s$ independent uniform permutation matrices, and let $M$ denote a single uniform permutation matrix. As for $C_t$, we write $W_{s,L,U}$ and $M_{\ell,L,U}$ for the $(L,U)$-th entries of $W_s$ and $M_\ell$. By the choice of $(L_1,U_1),\dots,(L_k,U_k)$, we have
    \[
        \mpow{W_s}{B}=\prod_{L,U}W_{s,L,U}^{B_{L,U}}=\prod_{m=1}^{k}W_{s,L_m,U_m}=\prod_{m=1}^{k}\lrp{\sum_{\ell=1}^{s}M_{\ell,L_m,U_m}}.
    \]
    When we expand this product, each summand is a product of $k$ entries of the $M_\ell$, one from each factor, for a total of $s^k$ summands. With each summand we can associate a function $\varphi:[k]\to[s]$ by setting $\varphi(m)$ to be the index $\ell$ such that $M_{\ell,L_m,U_m}$ is the $m$-th factor in the product. With this notation we can write
    \[
        \mpow{W_s}{B}=\sum_{\varphi:[k]\to[s]}\prod_{m=1}^{k}M_{\varphi(m),L_m,U_m}\;.
    \]
    By linearity of expectation, we obtain
    \begin{equation}\label{eq:Ws_moment_expansion}
        \expn\sqbkt{\mpow{W_s}{B}}
        =\sum_{\varphi:[k]\to[s]}\expn\sqbkt{\prod_{m=1}^{k}M_{\varphi(m),L_m,U_m}}\;.
    \end{equation}

    \paragraph{The polynomial $\mu_B$.} We will show that $\expn\sqbkt{\mpow{W_s}{B}}$ is a polynomial in $s$ of degree at most $k$. Fix $\varphi:[k]\to[s]$, and let $\ell_1<\dots<\ell_i$ be the distinct values in the range of $\varphi$. Define $K_b:=\varphi^{-1}(\ell_b)$ for $b\in[i]$, so that $\kappa(\varphi):=\set{K_1,\dots,K_i}$ is a partition of $[k]$. Then we can write
    \[
        \prod_{m=1}^{k}M_{\varphi(m),L_m,U_m}=\prod_{b=1}^{i}\;\prod_{m\in K_b}M_{\ell_b,L_m,U_m}.
    \]
    The $b$th inner product $\prod_{m\in K_b}M_{\ell_b,L_m,U_m}$ depends only on the entries of $M_{\ell_b}$, and $M_{\ell_1},\dots,M_{\ell_i}$ are independent since the $\ell_b$ are distinct, so the expectation of the product factorizes. Each $M_{\ell_b}$ is distributed as $M$, hence
    \[
        \expn\sqbkt{\prod_{m=1}^{k}M_{\varphi(m),L_m,U_m}}
        =\prod_{b=1}^{i}\expn\sqbkt{\prod_{m\in K_b}M_{L_m,U_m}}
        =:c_{\kappa(\varphi)}\;.
    \]
    We observe that $c_{\kappa(\varphi)}$ depends on $\varphi$ only through the unordered partition $\kappa(\varphi)$.
    
    Now fix a partition $\kappa$ of $[k]$ into nonempty blocks $K_1,\dots,K_i$, indexed so that $\min K_1<\dots<\min K_i$. A map $\varphi:[k]\to[s]$ has $\kappa(\varphi)=\kappa$ if and only if $\varphi$ is constant on each block $K_b$ and takes distinct values on distinct blocks. Thus the number of such maps is exactly the number of ways to choose a sequence of $i$ distinct elements from $[s]$, which is $(s)_i$.

    Grouping the terms of \eqref{eq:Ws_moment_expansion} by $\kappa(\varphi)$ gives
    \ifarxivstyle
        \begin{align}
            \expn\sqbkt{\mpow{W_s}{B}}=\sum_{\varphi:[k]\to[s]}c_{\kappa(\varphi)}=\sum_{\kappa}\sum_{\varphi:\kappa(\varphi)=\kappa}c_\kappa=\sum_{\kappa}c_\kappa\abs{\set{\varphi:\kappa(\varphi)=\kappa}}
            =\sum_{\kappa}c_\kappa(s)_{\abs{\kappa}} =: \mu_B(s)\;.
            \label{eq:Ws_moment_polynomial}
        \end{align}
    \else
        \begin{align}
            \expn\sqbkt{\mpow{W_s}{B}}
            &=\sum_{\varphi:[k]\to[s]}c_{\kappa(\varphi)} \nonumber\\
            &=\sum_{\kappa}\sum_{\varphi:\kappa(\varphi)=\kappa}c_\kappa \nonumber\\
            &=\sum_{\kappa}c_\kappa\abs{\set{\varphi:\kappa(\varphi)=\kappa}}
            =\sum_{\kappa}c_\kappa(s)_{\abs{\kappa}} =: \mu_B(s)\;.
            \label{eq:Ws_moment_polynomial}
        \end{align}
    \fi
    Since $(s)_i$ is a polynomial in $s$ of degree $i$ and every partition of $[k]$ has at most $k$ blocks, $\mu_B$ is a polynomial of degree at most $k = \abs{B}$. If $B=0$, $k=0$, and the only partition of $[0]=\emptyset$ is the empty one, with $\abs{\kappa}=0$ and $c_\kappa=1$ (an empty product), so $\mu_0=(s)_0=1$. If $B\neq 0$, then $k\geq 1$, so every partition of $[k]$ has $\abs{\kappa}\geq 1$ blocks, and $(0)_i=0$ for $i\geq 1$; hence $\mu_B(0)=0$.

    \paragraph{A polynomial $P_r$ extending $f_r$.} We will show that $f_r$ agrees at every integer $t\in[1,t_*]$ with a polynomial $P_r$ in $t$, which, unlike $f_r$, can also be evaluated at the possibly non-integer point $t_*$. Fix $w$ with $\abs{w}\leq 2q$ and an integer $t\in[1,t_*]$, and let $s:=2^n-rt$. Then $s$ is an integer with $s\geq 2^n-rt_*=0$, so \eqref{eq:Ws_moment_polynomial} applies to $W_s$. Applying the binomial theorem to each entry of $rC_t+W_s$, we have
    \ifarxivstyle
        \begin{align*}
            \mpow{(rC_t+W_s)}{w}=\prod_{L,U}\lrp{rC_{t,L,U}+W_{s,L,U}}^{w_{L,U}}
            &=\prod_{L,U}\lrp{\sum_{B_{L,U}\leq w_{L,U}}\binom{w_{L,U}}{B_{L,U}}r^{w_{L,U}-B_{L,U}}C_{t,L,U}^{w_{L,U}-B_{L,U}}W_{s,L,U}^{B_{L,U}}}\\
            &=\sum_{B\leq w}\binom{w}{B}r^{\abs{w}-\abs{B}}\,\mpow{C_t}{(w-B)}\,\mpow{W_s}{B}\;.
        \end{align*}
    \else
        \begin{align*}
            \mpow{(rC_t+W_s)}{w}
            &=\prod_{L,U}\lrp{rC_{t,L,U}+W_{s,L,U}}^{w_{L,U}}\\
            &=\prod_{L,U}\lrp{\sum_{B_{L,U}\leq w_{L,U}}\binom{w_{L,U}}{B_{L,U}}r^{w_{L,U}-B_{L,U}}C_{t,L,U}^{w_{L,U}-B_{L,U}}W_{s,L,U}^{B_{L,U}}}\\
            &=\sum_{B\leq w}\binom{w}{B}r^{\abs{w}-\abs{B}}\,\mpow{C_t}{(w-B)}\,\mpow{W_s}{B}\;.
        \end{align*}
    \fi
    In the last step, each term of the expanded product picks one summand $B_{L,U}$ from the factor at every position $(L,U)$, and we collect these choices into a matrix $B\leq w$; the powers of $r$ then multiply to $r^{\sum_{L,U}(w_{L,U}-B_{L,U})}=r^{\abs{w}-\abs{B}}$.

    Next we take expectations. Since $C_t$ and $W_s$ are independent, the expectation of each term factorizes as $\expn\sqbkt{\mpow{C_t}{(w-B)}}\cdot\expn\sqbkt{\mpow{W_s}{B}}$, and the second factor equals $\mu_B(s)=\mu_B(2^n-rt)$ by \eqref{eq:Ws_moment_polynomial}. For the first factor, we apply \eqref{eq:stirling_matrix} to $C_t$ with $w-B$ in place of $w$, and obtain $\expn\sqbkt{\mpow{C_t}{(w-B)}}=\sum_{A\leq w-B}\stirling{w-B}{A}\expn\sqbkt{\mfall{C_t}{A}}$ by linearity of expectation. Altogether,
    \begin{equation}\label{eq:Pr_expectation}
        \expn\sqbkt{\mpow{(rC_t+W_s)}{w}}
        =\sum_{B\leq w}\binom{w}{B}r^{\abs{w}-\abs{B}}\mu_B(2^n-rt)\sum_{A\leq w-B}\stirling{w-B}{A}\expn\sqbkt{\mfall{C_t}{A}}\;.
    \end{equation}

    We now clear the denominators hidden in the moments $\expn\sqbkt{\mfall{C_t}{A}}$. Every $A$ in the inner sum of \eqref{eq:Pr_expectation} satisfies $\abs{A}\leq\abs{w-B}\leq\abs{w}\leq 2q$, so we can split the product defining $(2^nt)_{2q}$ at $j=\abs{A}$, and then apply \eqref{eq:domain_extension_matrix_moments}:
    \begin{align*}
        (2^nt)_{2q}\cdot\expn\sqbkt{\mfall{C_t}{A}}
        &=\prod_{\abs{A}\leq j<2q}(2^nt-j)\cdot(2^nt)_{\abs{A}}\expn\sqbkt{\mfall{C_t}{A}}\\
        &=\prod_L(t)_{\rowsum{A}{L}}\prod_U(t)_{\colsum{A}{U}}\prod_{\abs{A}\leq j<2q}(2^nt-j)
        =:\rho_A(t)\;.
    \end{align*}
    Since $g(C)=\sum_w\widehat{g}(w)\mpow{C}{w}$ is a finite sum, linearity of expectation gives
    \[
        \expn\sqbkt{g(rC_t+W_s)}=\sum_w\widehat{g}(w)\expn\sqbkt{\mpow{(rC_t+W_s)}{w}}\;.
    \]
    Plugging this and \eqref{eq:Pr_expectation} into \eqref{eq:domain_extension_def_f_r}, and multiplying each moment $\expn\sqbkt{\mfall{C_t}{A}}$ by the factor $(2^nt)_{2q}$ from \eqref{eq:domain_extension_def_f_r}, we obtain $f_r(t)=P_r(t)$ for every integer $t\in[1,t_*]$, where
    \begin{equation}\label{eq:Pr}
        P_r(t):=\frac{1}{(2^nt_*)_{2q}}\sum_w\widehat{g}(w)\sum_{B\leq w}\binom{w}{B}r^{\abs{w}-\abs{B}}\mu_B(2^n-rt)\sum_{A\leq w-B}\stirling{w-B}{A}\rho_A(t)\;.
    \end{equation}
    The right-hand side of \eqref{eq:Pr} is defined for every real $t$, and it is a polynomial in $t$. Indeed, the prefactor does not depend on $t$, and it is well defined since each factor of $(2^nt_*)_{2q}$ satisfies $2^nt_*-j=2^{2n}/r-j\geq 256q^2\cdot 2^n-2q>0$ for $j<2q$. Moreover, $\mu_B(2^n-rt)$ is the polynomial $\mu_B$ composed with the affine map $t\mapsto 2^n-rt$, $\rho_A(t)$ is a product of linear polynomials in $t$, and all sums in \eqref{eq:Pr} are finite.

    \paragraph{Degree of $P_r$.} We will show that $\deg P_r\leq 4q$, by bounding the degree in $t$ of each $(w,B,A)$ term of \eqref{eq:Pr}. The factors $\widehat{g}(w)$, $\binom{w}{B}$, $r^{\abs{w}-\abs{B}}$, $\stirling{w-B}{A}$ and the prefactor do not depend on $t$. Composing a polynomial with an affine map does not increase its degree, so $\mu_B(2^n-rt)$ has degree at most $\deg\mu_B\leq\abs{B}$ in $t$. The polynomial $\rho_A(t)$ is a product of linear factors in $t$: each $(t)_k$ contributes $k$ of them, and the last product in $\rho_A(t)$ contributes $2q-\abs{A}$ of them. Since $\sum_L\rowsum{A}{L}=\sum_U\colsum{A}{U}=\abs{A}$, we get
    \[
        \deg\rho_A
        =\sum_L\rowsum{A}{L}+\sum_U\colsum{A}{U}+(2q-\abs{A})
        =\abs{A}+\abs{A}+(2q-\abs{A})
        =2q+\abs{A}\;.
    \]
    Finally, $A\leq w-B$ implies $\abs{A}\leq\abs{w}-\abs{B}$, so the degree of the $(w,B,A)$ term is at most
    \[
        \abs{B}+2q+\abs{A}
        \leq\abs{B}+2q+\abs{w}-\abs{B}
        =2q+\abs{w}
        \leq 4q\;,
    \]
    where the last step uses $\abs{w}\leq 2q$. Since $P_r$ is a finite sum of such terms, $\deg P_r\leq 4q$.

    \paragraph{The value $P_r(t_*)$.} We will show that
    \begin{equation}\label{eq:f-vs-Pr}
        f(r)=\prod_{0\leq j<2q}\lrp{1-\frac{jr}{2^{2n}}}\cdot P_r(t_*)\;.
    \end{equation}
    At $t=t_*$, the argument of $\mu_B$ in \eqref{eq:Pr} is $2^n-rt_*=0$, and we have shown that $\mu_B(0)=0$ for $B\neq 0$ and $\mu_0=1$. Hence only the terms with $B=0$ remain, and since $\binom{w}{0}=1$ and $w-0=w$,
    \begin{equation}\label{eq:Pr_at_tstar}
        P_r(t_*)=\sum_w\sum_{A\leq w}\widehat{g}(w)\stirling{w}{A}\cdot\frac{r^{\abs{w}}\rho_A(t_*)}{(2^nt_*)_{2q}}\;.
    \end{equation}
    To compare this with $f(r)$, we rewrite the last factor in terms of generalized falling factorials. For every integer $k\geq 0$ and real $x$, we have
    \[
        (x/r)_k=\prod_{0\leq j<k}\lrp{\frac{x}{r}-j}=\prod_{0\leq j<k}\frac{x-jr}{r}=r^{-k}\gfall{x}{r}{k}\;.
    \]
    Recall that $t_*=2^n/r$ and $2^nt_*=2^{2n}/r$. We rewrite the three products in $\rho_A(t_*)$ one by one. Applying the identity with $x=2^n$ to each $(t_*)_{\rowsum{A}{L}}$ and each $(t_*)_{\colsum{A}{U}}$, and using $\sum_L\rowsum{A}{L}=\sum_U\colsum{A}{U}=\abs{A}$ to collect the powers of $r$, we get
    \[
        \prod_L(t_*)_{\rowsum{A}{L}}=r^{-\abs{A}}\prod_L\gfall{2^n}{r}{\rowsum{A}{L}}\;,\qquad
        \prod_U(t_*)_{\colsum{A}{U}}=r^{-\abs{A}}\prod_U\gfall{2^n}{r}{\colsum{A}{U}}\;.
    \]
    Each of the $2q-\abs{A}$ factors of the last product in $\rho_A(t_*)$ satisfies $2^nt_*-j=2^{2n}/r-j=r^{-1}(2^{2n}-jr)$, so
    \[
        \prod_{\abs{A}\leq j<2q}(2^nt_*-j)=r^{-(2q-\abs{A})}\prod_{\abs{A}\leq j<2q}(2^{2n}-jr)\;.
    \]
    Multiplying these three expressions, and noting that the exponents of $r$ add up to $-\abs{A}-\abs{A}-(2q-\abs{A})=-(2q+\abs{A})$, we get
    \[
        \rho_A(t_*)=r^{-(2q+\abs{A})}\prod_L\gfall{2^n}{r}{\rowsum{A}{L}}\prod_U\gfall{2^n}{r}{\colsum{A}{U}}\prod_{\abs{A}\leq j<2q}(2^{2n}-jr)\;.
    \]
    For the denominator, applying the identity with $x=2^{2n}$ and $k=2q$ gives
    \[
        (2^nt_*)_{2q}=(2^{2n}/r)_{2q}=r^{-2q}\gfall{2^{2n}}{r}{2q}\;,
    \]
    which is nonzero, since $jr<2q\cdot 2^n/(256q^2)<2^{2n}$ for every $j<2q$. Dividing $r^{\abs{w}}\rho_A(t_*)$ by this, the powers of $r$ combine to $r^{\abs{w}-(2q+\abs{A})+2q}=r^{\abs{w}-\abs{A}}$, so
    \ifarxivstyle
        \begin{align*}
            \frac{r^{\abs{w}}\rho_A(t_*)}{(2^nt_*)_{2q}}
            &=\frac{r^{\abs{w}-\abs{A}}}{\gfall{2^{2n}}{r}{2q}}\prod_L\gfall{2^n}{r}{\rowsum{A}{L}}\prod_U\gfall{2^n}{r}{\colsum{A}{U}}\prod_{\abs{A}\leq j<2q}(2^{2n}-jr)\\
            &=\frac{2^{4nq}}{\gfall{2^{2n}}{r}{2q}}\cdot\frac{r^{\abs{w}-\abs{A}}}{2^{4nq}}\prod_L\gfall{2^n}{r}{\rowsum{A}{L}}\cdot\prod_U\gfall{2^n}{r}{\colsum{A}{U}}\prod_{\abs{A}\leq j<2q}(2^{2n}-jr)\;.
        \end{align*}
    \else
        \begin{align*}
            \frac{r^{\abs{w}}\rho_A(t_*)}{(2^nt_*)_{2q}}
            &=\frac{r^{\abs{w}-\abs{A}}}{\gfall{2^{2n}}{r}{2q}}\prod_L\gfall{2^n}{r}{\rowsum{A}{L}}\prod_U\gfall{2^n}{r}{\colsum{A}{U}}\prod_{\abs{A}\leq j<2q}(2^{2n}-jr)\\
            &=\frac{2^{4nq}}{\gfall{2^{2n}}{r}{2q}}\cdot\frac{r^{\abs{w}-\abs{A}}}{2^{4nq}}\prod_L\gfall{2^n}{r}{\rowsum{A}{L}}\\
            &\qquad\cdot\prod_U\gfall{2^n}{r}{\colsum{A}{U}}\prod_{\abs{A}\leq j<2q}(2^{2n}-jr)\;.
        \end{align*}
    \fi
    Multiplying by $\widehat{g}(w)\stirling{w}{A}$ and comparing the last line with \eqref{eq:tau_polynomial}, we see that
    \[
        \widehat{g}(w)\stirling{w}{A}\cdot\frac{r^{\abs{w}}\rho_A(t_*)}{(2^nt_*)_{2q}}
        =\frac{2^{4nq}}{\gfall{2^{2n}}{r}{2q}}\cdot\tau_{w,A}(r)\;.
    \]
    Substituting this into \eqref{eq:Pr_at_tstar} and using $f=\sum_w\sum_{A\leq w}\tau_{w,A}$, we get
    \[
        P_r(t_*)
        =\frac{2^{4nq}}{\gfall{2^{2n}}{r}{2q}}\sum_w\sum_{A\leq w}\tau_{w,A}(r)
        =\frac{2^{4nq}}{\gfall{2^{2n}}{r}{2q}}\cdot f(r)\;.
    \]
    Finally, since $2^{4nq}=(2^{2n})^{2q}$, we have
    \[
        \frac{\gfall{2^{2n}}{r}{2q}}{2^{4nq}}
        =\prod_{0\leq j<2q}\frac{2^{2n}-jr}{2^{2n}}
        =\prod_{0\leq j<2q}\lrp{1-\frac{jr}{2^{2n}}}\;,
    \]
    so multiplying both sides of the previous display by this product gives \eqref{eq:f-vs-Pr}. Moreover, since $0\leq jr<2^{2n}$ for every $j<2q$, each factor of this product lies in $(0,1]$, and hence so does the product.

    \paragraph{Concluding the bound.} We now apply \eqref{eq:polynomial_integers} to the polynomial $P_r$ on the interval $[a,b]:=[1,t_*]$. This requires $b-a\geq 4(\deg P_r)^2$. Since $\deg P_r\leq 4q$, it suffices that $t_*-1\geq 64q^2$, which holds because $t_*\geq 256q^2$ and $q\geq 1$. The integers in $[1,t_*]$ are exactly $t=1,\dots,\floor{t_*}$, and at each of them we have $P_r(t)=f_r(t)\in[0,1]$. Therefore \eqref{eq:polynomial_integers} gives
    \[
        \abs{P_r(t_*)}
        \leq 2\max_{t\in[1,t_*]\cap\Zbb}\abs{P_r(t)}
        =2\max_{t\in[1,t_*]\cap\Zbb}\abs{f_r(t)}
        \leq 2\;.
    \]
    Combining this with \eqref{eq:f-vs-Pr}, and using that the product in \eqref{eq:f-vs-Pr} lies in $(0,1]$, we obtain
    \[
        \abs{f(r)}=\prod_{0\leq j<2q}\lrp{1-\frac{jr}{2^{2n}}}\cdot\abs{P_r(t_*)}\leq\abs{P_r(t_*)}\leq 2\;.
    \]
    This holds for every integer $1\leq r\leq 2^n/(256q^2)$. Together with $\abs{f(0)}\leq 1$, which we showed at the beginning of the paragraph on the padded family, we conclude that $\abs{f(r)}\leq 2$ for every $r\in\set{0,\dots,\floor{2^n/(256q^2)}}$, which completes the proof of \cref{claim:domain_polynomials}.
\end{proof}

\phantom{\cite{}}
\bibliographystyle{alpha}
\bibliography{cryptobib/abbrev3,cryptobib/crypto,ref.bib}

\appendix

\section{AI Disclosure}
We use AI in several ways, including generating TikZ code, identifying calculation, spelling, and grammatical errors, and discussing the presentation of the paper.
We also use AI for interactive technical discussions, which are intertwined with our own ideas and discussions. Our main result (\cref{thm:ind_lift}) has gone through multiple iterations of technical improvements. One of these arose from an insight obtained during a discussion with GPT-5.6~Sol~Ultra: using an existential argument based on the polynomial method, one can unnoticeably embed a base oracle into a keyed family of its instances.
The proof of \cref{lem:domain_extension} has also undergone a similar iterative process, in which GPT-6~Astra helped us with several counting arguments concerning randomized matrices and contributed to the idea of rerandomizing the permutation oracle, though this idea was buried in a single sentence among several pages of derivations.

\section{Efficient Simulation of Biased Random Oracles}\label{sec:efficient_biased_RO_simulation}

In \cref{sec:lifting}, our reduction samples a parameter $p$ according to a certain distribution $\Dcal$ over the interval $[0,1]$, and then implements an oracle $H:\Kcal\to\bool$ in which every entry $H(k)\gets\Ber(p)$ is sampled independently according to the biased Bernoulli distribution of parameter $p$. We discuss here how this can be efficiently simulated, in the following two different cases.

\subsubsection{Case 1: $\Dcal$ is efficiently sampleable.} In the case when $\Dcal$ is efficiently sampleable, one can again follow an approach similar to the compressed oracle technique \cite{C:Zhandry19,EC:CFHL21}.
Once the parameter $p$ is sampled and fixed, we can simply use the Bernoulli recording oracle by \cite{CGKSW23}. On a very high level, consider the purification of $H$ stored in a designated register $\Dsf$, which is initially of the form $\bigotimes_{k\in\Kcal}\ket{\hat{0}_p}$, with
$$
\ket{\hat{0}_p} := \sqrt{1-p} \ket{0} + \sqrt{p}\ket{1}\;,\qquad \ket{\hat{1}_p} := \sqrt{p}\ket{0} - \sqrt{1-p}\ket{1}\;.
$$
We can then compress it into a compact representation via applying the compression operator $\Csf:=\ket{\bot}\bra{\hat{0}_p} + \ket{\hat{0}_p}\bra{\bot} + \ket{\hat{1}_p}\bra{\hat{1}_p}$ to each entry of $\Dsf$, which would compress the initial state to an empty database $\ket{\bot}$. In each of the queries, one would simply need to implement the unitary $\Csf\cdot \Osf\cdot\Csf$ for $\Osf: \ket{k,y,H}\mapsto\ket{k,y\oplus H(k),H}$, which can indeed be made efficient.

\subsubsection{Case 2: $\Dcal$ is uniform over $[0,1]$.} In the case when $\Dcal$ is uniform over $[0,1]$, the resulting parameter $p\in[0,1]$ would have infinitely large entropy, and so it is not even possible to store the entire $p$. Nevertheless, there are two possible ways out of this problem.

First, if we simply pick a sufficiently high precision $\lambda\in\Zbb_{>0}$ and truncate $p$ to a $\lambda$-bit approximation $p_\lambda := \lfloor p\cdot 2^\lambda\rceil\cdot 2^{-\lambda}$, then given that $\abs{p_{\lambda}-p}\leq 2^{-\lambda}$, the resulting oracle $H_\lambda \gets\Ber(p_\lambda)^{\otimes\Kcal}$ would differ by a statistical distance at most $\SD(H_\lambda,H)\leq \abs{\Kcal}\cdot 2^{-\lambda}$. Since $p_\lambda$ is efficiently sampleable, we have fallen back to the first case, in which we know how to simulate $H_\lambda$.

On the other hand, if we insist on a perfect simulation (without any loss), the simulation can also be done without sampling the entire $p$. Observe that for any oracle algorithm $\Acal$ making at most $q$ quantum queries to $H$, conditioned on any fixed choice of $H$, the probability of $\Acal$ outputting $1$ is a multilinear polynomial $a(H)$ of degree at most $2q$ in $(H(k))_k$. Therefore, it suffices to simulate a randomized function $F(k)$ such that for every distinct $k_1,\dots,k_\ell$ with $\ell\leq 2q$, we have
\begin{equation}
    \expn_F\sqbkt{\prod_{i\in[\ell]} F(k_i)} = \expn_H\sqbkt{\prod_{i\in[\ell]} H(k_i)} = \int_0^1 p^\ell\,dp = \frac{1}{\ell +1}\;,\label{eq:biased_2q_wise_independent_hash}
\end{equation}
where the last two equalities follow from straight-forward calculation.
Then we would have
\[
\Pr\sqbkt{\Acal^F}
= \expn_F\sqbkt{a(F)}
= \sum_{S\subseteq \Kcal:\abs{S}\leq 2q} \frac{\widehat{a}(S)}{\abs{S} +1}
= \expn_H\sqbkt{a(H)}
= \Pr\sqbkt{\Acal^H}\;,
\]
where $\widehat{a}(S)$ is the coefficient of the monomial $\prod_{k\in S}H(k)$ in $a(H)$.

To achieve \eqref{eq:biased_2q_wise_independent_hash}, we solve for $w_0,\dots,w_M\in \Qbb_{\geq 0}$ such that
\[
\sum_{0\leq j\leq M} w_j \cdot(j/M)^\ell = \frac{1}{\ell +1}\quad\forall \ell\in \set{0,\dots, 2q}
\]
with the convention that $0^0 := 1$,
since then sampling $p'$ with the distribution such that $\Pr\sqbkt{p'=j/M} := w_j$, and then sampling $F\gets \Ber(p')^{\otimes \Kcal}$ would immediately achieve \eqref{eq:biased_2q_wise_independent_hash}. Observe that this is a numerical integration formula in disguise: namely, for $f_\ell(x) := x^\ell$, we are given its evaluation $f_\ell(j/M)$ for each $j\in \set{0,\dots, M}$, and tasked to find a set of coefficients $w_0,\dots, w_M$ that recovers $\int_0^1 f_\ell(x)\,dx = 1/(\ell+1)$ via the formula $\sum_{0\leq j\leq M} w_j f_\ell(j/M)$ for all $\ell\in[2q]$ simultaneously. By Equation (5.5) in \cite{DFPW19}, such a formula would indeed exist for some $M\leq \max_{\ell}(\deg f_\ell +1)^2 = (2q+1)^2$. Furthermore, as has been pointed out in the same paper, the coefficients $w_0,\dots,w_M$ can be cast as a solution of a linear program, which can be solved efficiently.

\section{Lifting Hardness of Search Games}\label{sec:search_lifting}
In this section, we show that our indistinguishability lifting theorem (\cref{thm:ind_lift}) is useful for lifting hardness of search games. Specifically, it can be used to recover a weaker form of the measure-and-reprogram technique \cite{C:DFMS19,C:DonFehMaj20}. Such a weaker form can be seen in the literature. For example, Falcon uses a multi-target SIS assumption \cite{EC:FGGJK26,EPRINT:BacMaj26}, which can also be captured by our statement here.

\begin{corollary}
    Let $V:\Ycal\times\Zcal\to\bool$ be a predicate that can be evaluated efficiently, and $H:\Xcal\to\Ycal$ be a random oracle. Then, for every oracle algorithm $\Acal$ making at most $q$ quantum queries to $H$, there exists a similarly efficient algorithm $\Bcal$ (where we count each evaluation of $V$ as unit runtime), such that
    $$
    \Pr\left[V(H(X),Z)\;\Big|\;\begin{gathered}
       (X,Z)\gets\Acal^H\\ 
    \end{gathered}\right]
    \leq 8(q+1)^2\cdot\Pr\left[V(Y,Z)\;\middle|\;\begin{gathered}
        Y\gets \Ycal\\
        Z\gets\Bcal(Y)
    \end{gathered}\right]\;.
    $$
\end{corollary}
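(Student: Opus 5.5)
The plan is to encode the search game as a distinguishing game between two keyed oracles and then apply \cref{thm:ind_lift}. Take the key set $\Kcal := \Xcal$ and define two stateless base oracles that share a hidden value $Y\gets\Ycal$ sampled at initialization. The oracle $\Osf_0$ answers a query with $Y$, and, on a designated ``check'' input $z$, with the bit $V(Y,z)$. The oracle $\Osf_1$ answers with $Y$ in the same way, but always returns $0$ on check inputs. Under this choice, $\Osf_0^\Kcal$ is exactly the random oracle $H$ augmented with a truthful verification interface $(x,z)\mapsto V(H(x),z)$. The keyed $\Osf_1^\Kcal$ is the same random oracle whose verification interface always rejects.

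The distinguisher $\Acal'$ runs $\Acal^H$, answering each hash query through the value part of the keyed oracle. When $\Acal$ outputs $(X,Z)$, it makes one more query to the check interface on key $X$ with input $Z$, and outputs the returned bit. This costs $q+1$ queries in total. Under $\Osf_1^\Kcal$ the output is always $0$. Under $\Osf_0^\Kcal$ the output is $1$ exactly when $V(H(X),Z)$ holds. Hence the success probability of $\Acal$ equals $\SD(\Acal'^{\Osf_0^\Kcal},\Acal'^{\Osf_1^\Kcal})$. \cref{thm:ind_lift} with $q+1$ queries then bounds this by $8(q+1)^2\cdot\SD(\Dcal^{\Osf_0,\cdot,\cdot},\Dcal^{\Osf_1,\cdot,\cdot})$. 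Here $\Dcal$ is the lifted reduction, and its queries to the keyed oracles can be internalized, since a random oracle $\Ycal$-valued on $\Xcal$ and the predicate $V$ are efficiently simulable, with $H$ via the compressed oracle technique.

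It remains to turn $\Dcal$ into the search algorithm $\Bcal(Y)$. The two base oracles differ only on check queries whose answer would be $V(Y,z)=1$, and the value $Y$ itself can be handed to $\Bcal$ directly. So $\Bcal$ will run $\Dcal$, answering value queries with $Y$ and check queries as in $\Osf_1$. For the output $Z$ of $\Bcal$, I would first try the following: since the base oracle has no input dependence except through $z$, restructure the base oracles so that the check interface may be used only once, as the final query. This keeps the corollary's form, because $\Acal'$ uses it only once. Then $\Bcal$ simply measures the $z$-register of that single (controlled) check query and outputs it.

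The main obstacle is this last step: bounding $\SD(\Dcal^{\Osf_0},\Dcal^{\Osf_1})$ by $\Pr[V(Y,Z)]$ with no square-root loss, as one-way-to-hiding would otherwise incur. The intended argument is the following. With a one-shot check, a stateful base oracle whose check is consumed at the last step, the two executions coincide until that query. At that query, the distinguishing bit of $\Dcal^{\Osf_0}$ versus $\Dcal^{\Osf_1}$ can differ only on the branch where $V(Y,z)=1$ holds. The difference in output probabilities is then at most the weight of that branch, which is exactly $\Pr[V(Y,Z)]$ for the measured $Z$. One must check that \cref{thm:ind_lift} permits such stateful one-shot base oracles, which it does since stateful oracles are allowed. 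If it turns out that $\Dcal$ may invoke the check many times, I would instead fall back to guessing the query to measure. That would cost an extra factor $q$, which I would then try to absorb.
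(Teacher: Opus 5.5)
\textbf{Gap: the final bound $\SD\lrp{\Dcal^{\Osf_0},\Dcal^{\Osf_1}}\leq\Pr[V(Y,Z)]$.} Everything up to the application of \cref{thm:ind_lift} is fine and is essentially what the paper does: the same $(q+1)$-query distinguisher $\Acal'$ forwards $\Acal$'s queries to the read interface and makes one final check on $(X,Z)$. The gap is the step you flag yourself. You claim that once the check is one-shot and used last, the two output probabilities differ by at most the weight of the branch with $V(Y,z)=1$. This does not hold for a unitary check $\ket{z,c}\mapsto\ket{z,c\oplus V(Y,z)}$. An oracle cannot force its last query to be the algorithm's last operation. Take $V(Y,z_0)=0$ and $V(Y,z_1)=1$, and let a distinguisher query on $(\sqrt{1-w}\ket{z_0}+\sqrt{w}\ket{z_1})\otimes\ket{-}$. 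It gets a phase flip on the $z_1$ branch, and the Helstrom measurement on the $z$-register then distinguishes with advantage $2\sqrt{w(1-w)}$. Measuring $z$, by contrast, finds a solution only with probability $w$. This is exactly the one-way-to-hiding square-root loss, and making the check one-shot does nothing against it. Your fallback of guessing which check query to measure has the same problem and additionally loses a factor $q$, so it cannot give $8(q+1)^2$ either.

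\textbf{Missing idea: put the measurement inside the base oracle.} \cref{thm:ind_lift} permits this because base oracles may be stateful. In the paper, $\Osf_b$ measures the mode register on every query. In challenge mode it fully measures and records $z$, then returns $b\land V(Y,z)$. This leaves $\Acal'$ unaffected, since its single challenge query is on the already-classical pair $(X,Z)$. In the lifted game, however, it makes every challenge query classical. Then $\Dcal^{\Osf_0}$ and $\Dcal^{\Osf_1}$ are identical until the classical event $E$ that some recorded $z$ satisfies $V(Y,z)$, and $E$ has the same probability in both worlds. Hence $\SD\lrp{\Dcal^{\Osf_0},\Dcal^{\Osf_1}}\leq\Pr[E]$, with no square root, regardless of how many challenge queries $\Dcal$ makes or what it does afterwards. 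The search algorithm $\Bcal(Y)$ runs $\Dcal$ with its own simulation of $\Osf_b$ and outputs any recorded $z$ with $V(Y,z)=1$; this scan is why evaluations of $V$ are counted at unit cost.

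\textbf{Alternative repair.} You could keep the unitary check, but only by opening up the reduction of \cref{thm:ind_lift} rather than treating it as a black box. That reduction runs $\Acal'$ verbatim, so its sole check query to $\Osf_i$ is $\Acal'$'s final one, on the classical pair $(X,Z)$. The query is routed to $\Osf_i$ exactly when the internal selector satisfies $G(X)=1$, and the output is that response bit $\oplus\, s^*$. For each $(p^*,z^*)$ this gives an output difference of exactly $\Pr[G(X)=1\land V(Y,Z)]\leq\Pr[V(Y,Z)]$. Neither argument appears in your proposal, so as written the key inequality is not established.
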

\begin{proof}
    For $b\in\bool$, consider the oracle $\Osf_b$ defined with two modes, with the selected mode indicated in a designated input register that is always measured by the oracle at the beginning of each query.
    On the read mode, the oracle always outputs $Y$ that is sampled uniformly at the initialization stage. On the challenge mode, the oracle takes an input $z$, fully measures it and records it in an internal state, and then outputs $b\land V(Y,z)$. We summarize this oracle in \cref{fig:O_search_lifting}.

    Furthermore, let $\Osf_{b}^\Xcal$ be the corresponding keyed oracle, which can be efficiently simulated, since querying the read mode is indistinguishable from querying a random oracle, while querying a challenge mode can be implemented coherently because $V$ can be evaluated efficiently.
    \begin{figure}[H]
        \centering
        \begin{minipage}[t]{\ifarxivstyle%
            0.25\linewidth%
            \else%
            0.4\linewidth%
            \fi}
            $\Osf_b(\text{``read''}):= Y\stackrel{\text{init}}{\gets}\Ycal$
        \end{minipage}
        \begin{minipage}[t]{\ifarxivstyle%
            0.25\linewidth%
            \else%
            0.4\linewidth%
            \fi}
            $\Osf_b(\text{``challenge''}, z)$:\vspace{-1em}
            \begin{algorithmic}[1]
                \STATE measure and record $z$
                \STATE {\bf return} $b\land V(Y,z)$
            \end{algorithmic}
        \end{minipage}
        \caption{The oracle $\Osf_b$ used for lifting hardness of search games.}\label{fig:O_search_lifting}
    \end{figure}

    Next, we transform the search algorithm $\Acal$ querying $H$ to a distinguishing algorithm $\Dcal$ querying $\Osf_{b}^\Xcal$ as follows: the algorithm $\Dcal$ runs $\Acal$, and forwards each query to $H$ to $\Osf_b^\Xcal$ in the read mode. After obtaining the output $(X,Z)$ of $\Acal$, it makes one more query $B\gets\Osf_b^\Xcal(X,\text{``challenge''},Z)$ and outputs the response $B$.
    By construction we then have
    \begin{align*}
        \Pr\left[
            \Dcal^{\Osf_{b}^\Xcal}
        \right]
        = b\cdot\epsilon\;,\text{ where}\quad
        \epsilon := \Pr\Big[
            \begin{gathered}V(H(X),Z)\end{gathered}
            \;\Big|\; (X,Z)\gets\Acal^H
        \Big]\;.
    \end{align*}
    This implies that
    $\SD\lrp{\Dcal^{\Osf_{0}^\Xcal},\Dcal^{\Osf_{1}^\Xcal}} = \epsilon$, and so applying our lifting theorem (\cref{thm:ind_lift}) to $\Dcal$ would give us a similarly efficient distinguisher $\Ecal$ such that
    $$
    \SD\lrp{\Dcal^{\Osf_{0}^\Xcal},\Dcal^{\Osf_{1}^\Xcal}} \leq 8(q+1)^2\cdot\SD\lrp{\Ecal^{\Osf_{0}}, \Ecal^{\Osf_{1}}}\;.
    $$
    We note that since $\Osf_0^\Xcal$ and $\Osf_1^\Xcal$ are efficiently simulatable, they have been internalized as a part of $\Ecal$.

    Finally, let $E$ be the event where $\Ecal^{\Osf_b}$ has made a challenge query $z$ such that $V(Y,z)$ is satisfied. Conditioned on $E$ not happening, the two games behave identically, and moreover the probability of it happening is independent of the choice of $b$. Therefore
    \begin{align*}
        \SD(\Ecal^{\Osf_0},\Ecal^{\Osf_1}) \leq \Pr\sqbkt{E}\;.
    \end{align*}
    On the other hand, when $E$ happens, a $Z$ such that $V(Y,Z)$ is satisfied can indeed be extracted via scanning through the challenge queries made by $\Ecal$. This would then give us a similarly efficient reduction algorithm $\Bcal$ that, on input $Y$, simulates the oracle $\Osf_b$ (the choice of $b$ is irrelevant), and in the end extracts and outputs the aforementioned $Z$ if it exists. The concrete bound obtained here is as follows:
    \begin{align*}
        \Pr\sqbkt{E}\leq \Pr\left[V(Y,Z)\;\middle|\;\begin{gathered}
        Y\gets \Ycal\\
        Z\gets\Bcal(Y)
    \end{gathered}\right]\;.
    \end{align*}
    Collecting the bounds so far, we conclude the proof.
\end{proof}

\end{document}